\newacro{ses}[SES]{smallest enclosing sphere}
\newtheorem{theorem}{Theorem}
\newtheorem{lemma}[theorem]{Lemma}
\newtheorem{corollary}[theorem]{Corollary}
\newtheorem{definition}{Definition}
\def\Fsync/{$\mathcal{F}$\textsc{sync}}
\def\Gathering/{\textsc{Gathering}}
\def\chainForm/{\textsc{Chain-Formation}}
\def\patternForm/{\textsc{Pattern Formation}}
\def\Point/{\textsc{Point}}
\def\UniformCircle/{\textsc{Uniform Circle}}
\def\gtc/{\textsc{Go-To-The-Center}}
\def\gtcShort/{\textsc{GTC}}
\def\mobs/{\textsc{Move-on-Bisector}}
\def\gtcThreeD/{\textsc{3d-Go-To-The-Center}}
\def\gtcThreeDShort/{\textsc{3d-GTC}}
\def\gtcThreeDCont/{\textsc{Continuous-3d-Go-To-The-Center}}
\def\gtcThreeDContShort/{\textsc{Cont-3d-GTC}}
\def\tangentialNormal/{tangential-normal}
\def\moveOnAngleMinimizer/{\textsc{Move-on-Angle-Minimizer}}
\def\look/{\texttt{Look}}
\def\compute/{\texttt{Compute}}
\def\move/{\texttt{Move}}
\def\LCMlong/{\textsc{Look-Compute-Move}}
\def\LCM/{LCM}
\newcommand{\fsync}{\texorpdfstring{\textsc{$\mathcal{F}$sync}}{FSYNC}}
\newcommand{\ubg}[1]{\ensuremath{\mathrm{UBG}_{#1}}}
\newcommand{\ubgPlain}{\ensuremath{\mathrm{UBG}}}
\DeclareMathOperator{\ch}{CH}
\DeclareMathOperator{\pch}{PCH}
\DeclareMathOperator{\corn}{Corn}
\DeclareMathOperator{\adj}{Adj}
\DeclareMathOperator*{\argmin}{argmin}
\def\@parfont{\bfseries}
\title{Local Gathering of Mobile Robots in Three Dimensions \newline (Full Version)
\thanks{This paper is a full version of the respective paper presented at SIROCCO 2020.
}}
\author[]{Michael Braun}
\author[]{Jannik Castenow}
\author[]{Friedhelm Meyer auf der Heide}
\affil[]{Heinz Nixdorf Institute and Department of Computer Science\\
	Paderborn University, F\"urstenallee 11, 33102 Paderborn, Germany
}
\affil[]{ \{braunm, janniksu, fmadh\}@mail.upb.de}
\date{}
\begin{document}

\maketitle

\begin{abstract}
In this work, we initiate the research about the \Gathering/ problem for robots with limited viewing range in the three-dimensional Euclidean space.
In the \Gathering/ problem, a set of initially scattered robots is required to gather at the same position.
The robots' capabilities are very restricted -- they do not agree on any coordinate system or compass, have a limited viewing range, have no memory of the past and cannot communicate.

We study the problem in two different time models, in \fsync{}  (fully synchronized discrete rounds) and the continuous time model.
For \fsync{}, we introduce the \gtcThreeD/-strategy and prove a runtime of $\Theta\left(n^2\right)$ that matches the currently best runtime bound for the same model in the Euclidean plane [SPAA'11] .

Our main result is the generalization of contracting strategies (continuous time model) from [Algosensors'17] to the three-dimensional case.
In contracting strategies, every robot that is located on the global convex hull of all robots' positions moves with full speed towards the inside of the convex hull.
We prove a runtime bound of $\mathcal{O}\left(\Delta \cdot n^{3/2}\right)$  for \emph{any} three-dimensional contracting strategy, where $\Delta$ denotes the diameter of the initial configuration.
This comes up to a factor of $\sqrt{n}$ close to the lower bound of $\Omega \left(\Delta \cdot n\right)$ which is already true in two dimensions.

In general, it might be hard for robots with limited viewing range to decide whether they are located on the global convex hull and which movement maintains the connectivity of the swarm, rendering the design of \emph{concrete }contracting strategies a challenging task.
We prove that the continuous variant of \gtcThreeD/ is contracting and keeps the swarm connected.
Moreover, we give a simple design criterion for three-dimensional contracting strategies that maintains the connectivity of the swarm and introduce an exemplary strategy based on this criterion.
\end{abstract}


\section{Introduction}

We study a scenario where a distributed system of mobile entities (called \emph{robots}) is supposed to establish a certain formation, also denoted as a \emph{pattern}.
The robots are scattered in a $d$-dimensional Euclidean space (usually the Euclidean plane) and have to coordinate their movements in a distributed manner to reach the desired formation.
The robots' capabilities depend on the exact model and formation problem but are typically very restricted.
Usually, the robots do not agree on a common coordinate system or compass, cannot communicate with each other and have only limited sensing capabilities.
One extensively studied coordination problem is the \patternForm/ problem, dealing with questions such as: Which patterns are generally formable by a set of robots? Which capabilities do the robots need?
Given a specific pattern, for which initial configurations is this pattern formable?
Interestingly, it has been proven that there are only \emph{two} patterns that might be formable starting in an arbitrary input configuration.
These are the patterns \Point/ and \UniformCircle/.
Forming the pattern \Point/ is known under a more common name -- the \Gathering/ problem, which studies the task of gathering a set of robots on the same position.
Both of these problems have been extensively studied under several different assumptions, involving the viewing range (local or global), the synchronization (synchronous or asynchronous activation), the extent (robots can or cannot occupy the same position) or the opacity of robots, to name only a few.
However, most of these models have in common that the robots operate in the two-dimensional Euclidean plane.
A natural extension would be to consider the three-dimensional Euclidean space, where the robots have the ability to fly, such as drones, or to move underwater.
Existing results about robots in the three-dimensional Euclidean space are very scarce, rely on strong assumptions (such as axis agreement) and do not consider any runtime analyses of the proposed strategies.
Our work initiates the study of \Gathering/ of robots in three-dimensions, in one of the weakest possible models -- robots do not agree on any coordinate system or compass, are \emph{oblivious} (have no memory of the past) and have only a local view.

\subsection{Model \& Time Notions}

We consider a set $\mathcal{R}$ of $n$ robots  $r_1, \dots ,r_n$, each of which occupies a single point in $\mathbb{R}^{3}$ at each time.
As such, robots can neither block each other's views nor paths, and multiple robots are allowed to occupy the same position at the same time.
The position of robot $r_i$ at time $t$ is denoted by $p_i(t)$.
The positions of all robots at time $t$, $\mathcal{P}_t = \big(p_1(t), \dots, p_n(t)\big)$ are collectively called the \emph{configuration} at time $t$.
The Euclidean distance between points $x,y \in \mathbb{R}^3$ is denoted as $d(x,y)$.
For a subset of the three-dimensional Euclidean space $\mathcal{P} \subseteq \mathbb{R}^3, d(x,\mathcal{P})$  is used as a shorthand for $\min_{y \in \mathcal{P}} d(x, y)$.

The overall abilities of the robots are rather limited:
They are not allowed to communicate with each other, they are \emph{identical} (they cannot be distinguished) and are \emph{oblivious}, meaning they have no memory of the past.
Furthermore they do not share a common coordinate system or orientation.
Robots are only able to observe the space around them within a limited viewing range of $1$, i.e.\ a robot $ r_i$ can see the position of another robot $r_j$ if and only if $d(p_i(t),p_j(t)) \leq 1$.
Two robots $r_i$ and $r_j$ with $d(p_i(t),p_j(t)) \leq 1$ are also called \emph{neighbors}.
The set of all neighbors of $r_i$ at time $t$ is called the \emph{neighborhood} of $r_i$ and is denoted as $\mathcal{R}_i(t)$.
This limited viewing range can also be considered to induce a unit ball graph $\ubg{t} = (\mathcal{R},E_t)$ at time $t$, whose nodes consist of the robots and where the set of edges $E_t$ contains an edge $\{r_i,r_j\}$ if and only if $d(p_i(t),p_j(t)) \leq 1$.
This graph is also called the \emph{visibility graph} at time $t$.
Note that the \ubgPlain{} is a generalization of  the two-dimensional \emph{unit disk graph} ($\mathrm{UDG}$) to three dimensions.

Starting from a configuration of $n$ robots in the three-dimensional Euclidean space that is connected at time $0$, i.e.\ \ubg{0} is connected, the goal is to gather all robots in one point.
This problem will be referred to as the (three-dimensional) \Gathering/ problem.
Note that the eventual gathering point is not predefined and can instead be chosen by the robots at runtime.
This also imposes a subgoal during the execution of any algorithm that solves this problem:
It has to be ensured that \ubg{t} remains connected.
Otherwise, the limited viewing range of the robots, combined with the fact that they do not share coordinate systems, makes it impossible for any deterministic algorithm to restore connectivity and the robots can no longer converge to the same point \cite{localgathering}.

Througout this work, we consider two different notions of time:
The fully synchronous \fsync{} model and the continuous time model.

\paragraph{\textbf{\fsync{}}:}
In \fsync{}, all robots operate in fully synchronous \texttt{Look-Compute-Move} (\LCM/) cycles.
In the \look/ phase, a  robot $r_i$ observes its environment, detects the set of all visible robots $\mathcal{R}_i(t)$ and stores a snapshot in its local memory.
Based on this snapshot, $r_i$ computes a target point in the \compute/ phase.
Finally, in the \move/ phase, $r_i$ moves to that target point.
The execution of a single \LCM/ cycle is also denoted as one \emph{round}.

\paragraph{\textbf{Continuous Time Model}:}
Generally, the continuous time model can be seen as a continuous variant of  \fsync{}, in which
robots only move an infinitesimal small distance towards their target points \cite{conf/antsw/GordonWB04}.
At every point in time, the movement of each robot $r_i$ can be expressed by a \emph{velocity vector} $\vec{v}_i(t)$ with $0 \leq \|\vec{v}_i(t)\| \leq 1$, i.e.\ the maximal speed of a robot is bounded by $1$.
In contrast to \fsync{}, the function $p_i \colon \mathbb{R}_{> 0} \to \mathbb{R}^3$, representing the position of $r_i$ at time $t$, is a continuous function and also called the \emph{trajectory} of $r_i$.
Although the trajectories are continuous, they are not necessarily differentiable because robots are able to change their speed and direction non-continuously.
However, natural movement strategies have (right) differentiable trajectories.
Thus, the velocity vector of a robot $\vec{v}_i \colon \mathbb{R}_{> 0} \to \mathbb{R}^3$ can be seen as the (right) derivative of $p_i$.

\subsection{Our Contribution}
The contribution of this paper is twofold.
We consider the fully synchronous \Fsync/ model and the continuous time model.
For \Fsync/, we introduce the strategy  \gtcThreeD/ (\gtcThreeDShort/), which is the three-dimensional generalization of  \gtc/ (\gtcShort/), invented for robots operating in the Euclidean plane \cite{localgathering}.
The main idea of \gtcThreeDShort/ is that robots move towards the center of the smallest enclosing sphere of all robots within their viewing radius, while ensuring that the configuration stays connected.
We prove a runtime bound of $\Theta \left(n^2\right)$ for \gtcThreeDShort/ which matches the runtime of the two-dimensional \gtcShort/ strategy.

For the continuous time model, we generalize the class of contracting strategies \cite{DBLP:conf/algosensors/LiMHP17} to three dimensions.
In contracting strategies, every robot that lies on the convex hull of all robots' positions moves always with speed $1$ into a direction that points inside or on the boundary of the convex hull.
We prove that every (three-dimensional) contracting gathering strategy gathers all robots on a single point in time at most $\mathcal{O}\left(\Delta \cdot n^{3/2}\right)$, where $\Delta$ denotes the (geometric) diameter of the initial configuration, i.e. the maximum Euclidean distance between any pair of robots.
This runtime bound differs from the runtime bound for two-dimensional contracting strategies by a factor of $\sqrt{n}$.
The lower bound is $\Omega \left(\Delta \cdot n\right)$ and already holds for the two-dimensional case~\cite{DBLP:conf/algosensors/LiMHP17}.
The main open question is whether $\mathcal{O}\left(\Delta \cdot n^{3/2}\right)$ is tight or can be improved to $\mathcal{O}\left(\Delta \cdot n\right)$.

Note that a contracting strategy is not necessarily local.
Therefore, we finally present two local, contracting strategies.
Our first example is
the continuous variant of \gtcThreeDShort/, called \gtcThreeDContShort/.
We prove that the strategy is contracting and thus gathers the robots in time $\mathcal{O}\left(\Delta \cdot n^{3/2}\right)$.
In addition, we present the class
of \emph{tangential-normal} strategies.
These strategies are local and maintain connectivity.
As an example for a strategy that is both tangential-normal and contracting,  we introduce the \moveOnAngleMinimizer/ strategy.
\subsection{Related Work}

In this overview over related work, we focus on the \Gathering/ problem for synchronized robots with local visibility in the Euclidean plane.
Beyond that, we give a summary about research concerning robot coordination problems in the three-dimensional Euclidean space.
For other models and coordination problems, which involve, among others, less synchronized schedulers or robots with a global view, we refer the reader to the recent survey \cite{series/lncs/11340}.

Ando, Suzuki and Yamashita introduced the \gtcShort/-stra\-te\-gy for fully synchronous robots with local view \cite{localgathering}.
In \gtcShort/, every robot moves in every round towards the center of the smallest enclosing circle of all robots within its viewing range while ensuring that the swarm remains connected.
Ando et al.\ could prove that \gtcShort/ solves the \Gathering/-problem in finite time.
Later on, Degener et al.\ could prove a tight runtime bound of $\Theta \left(n^2\right)$ for \gtcShort/ \cite{DBLP:conf/spaa/DegenerKLHPW11}.
By now, this is the best known runtime bound for a strategy that solves \Gathering/ of robots with local visibility and without agreement on any coordinate system or compass in \Fsync/.

Faster runtimes could so far only be obtained under different assumptions -- for example by introducing one-axis agreement or changing the time model.
Poudel and Sharma proved that it is possible to gather a swarm of robots with local view in time $\mathcal{O}\left(\Delta\right)$, where $\Delta$ denotes the diameter of the initial configuration~\cite{DBLP:conf/sss/PoudelS17}.
The main assumption for their strategy is that the robots agree on one axis of their coordinate systems.


The second time model we consider in this paper is the continuous time model, introduced by Gordon et al.\ \cite{conf/antsw/GordonWB04}.
In this time model robots do not operate in synchronized rounds but continuously observe their environment and move while having a bounded maximal speed.
Gordon et al.\ propose a gathering strategy for the continuous time model.
In their strategy, all robots that locally assume that they are located on the global convex hull move with maximal speed along the bisector formed by vectors to their neighbors along the global convex hull.
This strategy has later been called \textsc{Move-On-Bisector} by Degener et al.
They could also prove runtime of $\Theta \left(n\right)$ \cite{journals/topc/DegenerKKH15}.

The main result of this paper is based on a more general view on continuous \Gathering/ strategies in the Euclidean plane -- the class of \emph{contracting} strategies in which all robots that are located on the global convex hull of all robots move with maximal speed into a direction that points inside of the convex hull \cite{DBLP:conf/algosensors/LiMHP17}.
Li et al.\ could prove a runtime of $\mathcal{O} \left(\Delta \cdot n\right)$ for \emph{any} contracting strategy.
Note that \textsc{Move-On-Bisector} is also a contracting strategy but has a significantly faster runtime than $\mathcal{O} \left(\Delta \cdot n\right)$.
However, there are contracting strategies with a runtime of $\Omega \left(\Delta \cdot n\right)$ \cite{DBLP:conf/algosensors/LiMHP17}.

In the three-dimensional Euclidean space there is so far, to the best of our knowledge, no strategy known that solves \Gathering/ of robots with limited viewing range.
More generally, literature about robots operating in three-dimensional spaces is very scarce.
We summarize the literature briefly.
In \cite{DBLP:conf/icdcn/BhagatCM18} the authors show that gathering of robots in the three-dimensional Euclidean space is possible -- under the assumptions that robot have a global view but are not transparent and that the robots agree on one axis of their coordinate systems.
Tomaita et al.\ study a different problem -- the plane formation problem \cite{DBLP:conf/opodis/TomitaYKY17}.
In the plane formation problem, the goal is that eventually all robots are located on the same plane, while ensuring that no two robots occupy the same position.
The authors show that this problem is not solvable for every initial configuration, give a characterization of all start configurations for which the problem is solvable and introduce an algorithm that solves the problem for the latter set of configurations.
Yamauchi, Uehara and Yamashita generalize this result further and study the more general \patternForm/ for synchronized robots in the three-dimensional Euclidean space \cite{DBLP:conf/podc/YamauchiUY16}.
They characterize the set of all patterns that might be formable depending on symmetries of the initial configuration.


\section{Gathering in \fsync{}}
\label{sec:discrete}

In this section, the three-dimensional \Gathering/ problem will be studied under the \fsync{} model.
The results can be considered as a generalization of those obtained by Degener et al. \cite{DBLP:conf/spaa/DegenerKLHPW11} for the two-dimensional setting.
It will be shown that a generalization of \gtcShort/ by Ando et al. \cite{localgathering} solves the gathering problem in three dimensions in $\Theta(
n^2)$ rounds.

\subsection{\gtcThreeD/}

\begin{algorithm}
	\caption{\gtcThreeD/ (\gtcThreeDShort/)}
	\label{alg:gttc}
	\begin{algorithmic}[1]
		\State \( \mathcal{R}_i(t) := \{\text{positions of robots visible from } r_i \text{, including } r_i \text{ at time } t\} \)
		\State \( \mathcal{S}_i(t) := \text{smallest enclosing sphere of } \mathcal{R}_i(t) \)
		\State \( c_i(t) := \text{center of } \mathcal{S}_i(t) \)
		\Comment target point
		\ForAll{$r_j \in \mathcal{R}_i(t)$}
		\Comment Maintain connectivity
		\State \( m_j : = \) midpoint between $p_i(t)$ and $p_j(t)$
		\State \(\mathcal{B}_j(t) := \text{ball with radius } \frac{1}{2} \text{ and center } m_j \)
		\State \( \ell_j := \) maximum distance $r_i$ can move towards $c_i(t)$ without leaving $\mathcal{B}_j(t)$
		\EndFor
		\State \( L_i := \min_{r_j \in \mathcal{R}_i(t)} \ell_j \)
		\State Move towards $c_i(t)$ for a distance of $L_i$
	\end{algorithmic}
\end{algorithm}

The strategy \gtcThreeD/ (\gtcThreeDShort/) is a generalization of \gtc/ to the three-dimensional Euclidean space and is summarized in Algorithm \ref{alg:gttc}.
A key component is the computation of a \ac{ses} of a set of points $\mathcal{P}$.
This is a sphere of minimal radius that contains all points in $\mathcal{P}$ with the following properties:

\begin{restatable}{proposition}{smallestEnclosingSpheres}\cite{10.1287/mnsc.19.1.96}
	\label{prop:conv_comb}
	Let $\mathcal{S}$ be the smallest enclosing $d$-sphere (SES) of a point set $\mathcal{P} \subset \mathbb{R}^d$.
	Then the center $c$ of $\mathcal{S}$ is a convex combination of at most $d+1$ points in $\mathcal{P}$ that lie on the surface of $\mathcal{S}$.
	Especially, \begin{enumerate}
		\item $c$ lies in $S$
		\item $c$ minimizes the maximum distance to the points in $\mathcal{P}$.
	\end{enumerate}
\end{restatable}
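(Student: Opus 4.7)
The plan is to prove the proposition in three movements: first establish uniqueness of the SES, then derive the convex-combination characterization by a variational argument combined with Carathéodory's theorem, and finally read off the two numbered consequences as immediate corollaries.

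For uniqueness I would argue by contradiction: suppose two balls with distinct centers $c_1, c_2$ and common radius $r$ both enclose $\mathcal{P}$. For any $p \in \mathcal{P}$ the parallelogram identity yields
\[
\Bigl\|p - \tfrac{c_1+c_2}{2}\Bigr\|^2 = \tfrac{1}{2}\|p-c_1\|^2 + \tfrac{1}{2}\|p-c_2\|^2 - \tfrac{1}{4}\|c_1-c_2\|^2 \le r^2 - \tfrac{1}{4}\|c_1-c_2\|^2 < r^2,
\]
so the ball centered at $(c_1+c_2)/2$ with radius $\sqrt{r^2 - \|c_1-c_2\|^2/4}$ strictly encloses $\mathcal{P}$, contradicting minimality of $r$. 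Existence of a minimizer follows from the continuity of $c \mapsto \max_{p \in \mathcal{P}}\|p-c\|$ together with its coercivity, so the SES is well-defined and unique.

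The core step is showing that $c$ lies in the convex hull of the \emph{active} set $A := \{p \in \mathcal{P} : \|p-c\| = r\}$. Suppose not. By the separating hyperplane theorem applied to the compact convex set $\mathrm{conv}(A)$ and the point $c$, there exists a unit vector $v$ with $\langle v, p-c\rangle \ge \delta > 0$ for every $p \in A$. Shift the center to $c + \varepsilon v$: for $p \in A$, $\|p-(c+\varepsilon v)\|^2 = r^2 - 2\varepsilon\langle v, p-c\rangle + \varepsilon^2 < r^2$ whenever $\varepsilon < 2\delta$, and for $p \in \mathcal{P}\setminus A$ the strict inequality $\|p-c\| < r$ is preserved for sufficiently small $\varepsilon$ by continuity. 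Hence a smaller enclosing ball exists, contradicting minimality. Once $c \in \mathrm{conv}(A)$, Carathéodory's theorem in $\mathbb{R}^d$ expresses $c$ as a convex combination of at most $d+1$ points of $A$, which by definition lie on the surface of $\mathcal{S}$.

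The two consequences follow quickly. Property (1) holds because the closed ball bounded by $\mathcal{S}$ is convex and $c$ is a convex combination of points on its boundary. Property (2) is essentially the definition of SES: for any candidate center $c'$, the ball of radius $\max_{p \in \mathcal{P}}\|p-c'\|$ around $c'$ encloses $\mathcal{P}$, so by minimality this maximum is at least $r$, with equality attained at $c' = c$ by uniqueness. I expect the separating-hyperplane/shrinking step establishing $c \in \mathrm{conv}(A)$ to be the main technical obstacle, since both uniqueness and Carathéodory are textbook and the two enumerated items are then immediate.
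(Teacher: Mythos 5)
Your proof is correct, but note that the paper does not actually prove this proposition: it is imported verbatim from the cited reference (Elzinga--Hearn), so there is no in-paper argument to compare against. Your three-step route --- uniqueness via the parallelogram identity, the separating-hyperplane-plus-perturbation argument showing the center lies in the convex hull of the active set $A=\{p\in\mathcal{P}: \|p-c\|=r\}$, and Carath\'eodory's theorem to cut the representation down to $d+1$ points on the surface --- is the standard textbook derivation and all the individual steps check out (the bound $\varepsilon<2\delta$ for the active points and the continuity argument for the inactive ones are fine). Two small remarks: the continuity step for $\mathcal{P}\setminus A$ silently uses that $\mathcal{P}$ is finite (or at least that the inactive points are bounded away from the sphere), which is consistent with the paper's setting of finitely many robot positions but worth stating; and item (1) is used in the paper in the stronger reading that $c$ lies in the convex hull of $\mathcal{P}$, which follows directly from $c\in\mathrm{conv}(A)\subseteq\mathrm{conv}(\mathcal{P})$ rather than from convexity of the ball, so you may want to phrase that consequence accordingly.
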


Intuitively, \gtcThreeDShort/ works by attempting to locally move robots closer together.
This is achieved by letting each robot $r_i$ compute the \ac{ses} of its neighborhood $\mathcal{R}_i(t)$ and then moving towards its center $c_i(t)$.
Additionally, the strategy follows the subgoal of maintaining connectivity of \ubg{t+1}.
This is achieved by limiting the distance a robot $r_i$ moves towards its target $c_i(t)$, such that for any of its neighbors $r_j$, it stays within a distance of $\frac{1}{2}$ of the midpoint between the positions of $r_i$ and $r_j$ at time $t$.
Thus, if both $r_i$ and $r_j$ perform this strategy, the distance between their positions at the start of the next round $t+1$ is at most $1$, maintaining visibility.
By the argumentation above, the following Lemma holds.

\begin{lemma}
	If \ubg{0} is connected, \ubg{t} remains connected for all $t \geq 0$.
\end{lemma}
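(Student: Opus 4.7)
The plan is to prove the stronger statement by induction on $t$: every edge of $\ubg{t}$ survives as an edge of $\ubg{t+1}$. Since the graph is not losing edges, connectivity of $\ubg{0}$ then propagates forward.

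The base case $t = 0$ is immediate from the hypothesis. For the inductive step, fix any pair of neighbors $r_i, r_j$ at time $t$, i.e.\ $d(p_i(t), p_j(t)) \leq 1$. I need to show that $d(p_i(t+1), p_j(t+1)) \leq 1$. The key observation is the symmetry built into \cref{alg:gttc}: both robots see each other, so $r_j \in \mathcal{R}_i(t)$ and $r_i \in \mathcal{R}_j(t)$. In its loop, $r_i$ constructs the midpoint $m = \tfrac{1}{2}(p_i(t)+p_j(t))$ and the ball $\mathcal{B}_j(t)$ of radius $\tfrac{1}{2}$ around $m$; in its loop, $r_j$ constructs exactly the same midpoint and hence the same ball. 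This midpoint and ball are coordinate-free geometric objects, so the two robots agree on them without needing any common frame of reference.

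By definition of $\ell_j$ in line 7 of the algorithm, $r_i$ bounds its step length so that after moving a distance $L_i \leq \ell_j$ toward $c_i(t)$, it still lies inside $\mathcal{B}_j(t)$. Thus $p_i(t+1) \in \mathcal{B}_j(t)$. The symmetric argument applied to $r_j$ yields $p_j(t+1) \in \mathcal{B}_j(t)$ as well. Two points in a ball of radius $\tfrac{1}{2}$ are at distance at most $1$, so $d(p_i(t+1), p_j(t+1)) \leq 1$ and the edge $\{r_i, r_j\}$ is present in $\ubg{t+1}$. Consequently $\ubg{t+1}$ contains $\ubg{t}$ as a spanning subgraph and inherits its connectivity.

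There is no real obstacle here; the only thing that needs to be checked carefully is that both robots really use the \emph{same} safety ball, which follows from the symmetry of the midpoint construction and the fact that the unit-distance visibility relation is itself symmetric. I would emphasize this symmetry explicitly in the write-up, since it is exactly the mechanism that allows an inherently local rule to preserve a global connectivity invariant without any common compass or coordinate system.
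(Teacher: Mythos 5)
Your argument is correct and is essentially the same as the paper's: the paper justifies the lemma by noting that the step-length restriction keeps each robot inside the radius-$\tfrac{1}{2}$ ball around the midpoint to every neighbor, so that when both endpoints of an edge obey this rule their new positions are within distance $1$, preserving every edge and hence connectivity. Your write-up merely makes the induction over rounds and the symmetry of the shared ball explicit, which matches the paper's reasoning.
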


Overall, the only difference to the original \gtcShort/ strategy for two dimensions lies in the computation of a smallest enclosing sphere in the 3D case over a smallest enclosing circle in the 2D case.
In fact, if the three-dimensional version is applied to a configuration of robots that is coplanar with respect to some plane $h$, it acts just as if the robots' positions were projected to $h$ and the two-dimensional version was applied to the resulting two-dimensional subspace.
This is a result of the fact that computing a \ac{ses} of a set of coplanar points is equivalent to computing a smallest enclosing circle instead.

From this observation, we can immediately conclude that the lower bound on the runtime of the two-dimensional version of the strategy shown by Degener et al. \cite{DBLP:conf/spaa/DegenerKLHPW11} also applies to the three-dimensional case by simply embedding the two-dimensional worst-case start configuration within three-dimensional space:
In the configuration, $n$ robots are positioned on a circle such that the distance between two neighbors is $1$.
This causes the robots to only take small steps of size $\mathcal{O}(1/n)$ towards the center of the circle, leading to a gathering time of $\Omega(n^2)$.

\begin{theorem}
	\label{Thm:lower_bound}
	There is a start configuration such that \gtcThreeDShort/ takes $\Omega(n^2)$ rounds to gather the robots in one point.
\end{theorem}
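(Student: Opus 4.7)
The plan is to reduce the three-dimensional lower bound to the two-dimensional one via a planar embedding. Concretely, I would take the worst-case initial configuration of Degener et al.\ \cite{DBLP:conf/spaa/DegenerKLHPW11}: place $n$ robots on a circle in $\mathbb{R}^2$ whose radius is chosen so that consecutive robots are at distance exactly $1$. This configuration is known to force two-dimensional \gtcShort/ to take $\Omega(n^2)$ rounds, because in each round every robot moves only $\mathcal{O}(1/n)$ toward the common center. I would then embed this configuration isometrically in an arbitrary plane $h \subset \mathbb{R}^3$ and argue that \gtcThreeDShort/, applied to this embedded configuration, executes exactly the same trajectories as two-dimensional \gtcShort/ does in $h$.

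The heart of the argument is the equivalence between the three-dimensional and two-dimensional target point computations for coplanar inputs. For each robot $r_i$, its neighborhood $\mathcal{R}_i(t)$ lies entirely in $h$, and by Proposition~\ref{prop:conv_comb} the center $c_i(t)$ of the smallest enclosing sphere $\mathcal{S}_i(t)$ is a convex combination of points in $\mathcal{R}_i(t) \subset h$, so $c_i(t) \in h$ as well. Moreover, the intersection $\mathcal{S}_i(t) \cap h$ is a circle enclosing $\mathcal{R}_i(t)$ whose radius cannot exceed that of the smallest enclosing circle in the plane, and by minimality in 3D must in fact coincide with it. Hence the 3D target point equals the 2D target point. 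The connectivity-preserving limit $L_i$ is determined purely by pairwise distances between $r_i$ and its neighbors, so it is identical in both settings.

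Next I would observe inductively that if the configuration is coplanar at time $t$, then since every robot moves toward a target point inside $h$ along a straight segment, the configuration remains coplanar at time $t+1$. Therefore the entire execution of \gtcThreeDShort/ on the embedded configuration coincides, round by round, with the execution of two-dimensional \gtcShort/ on the original circle configuration. Consequently the 3D gathering time is at least the 2D gathering time, which is $\Omega(n^2)$.

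I do not expect any serious obstacle: the only subtle point is carefully justifying that the smallest enclosing sphere of a coplanar point set has its center in that plane and coincides there with the smallest enclosing circle. Once that short geometric fact is in place, everything else is a direct invocation of the Degener et al.\ lower bound, so the entire proof should be only a few lines beyond the informal sketch already given before the theorem statement.
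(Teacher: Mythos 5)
Your proposal is correct and follows essentially the same route as the paper: embed the two-dimensional worst-case circle configuration of Degener et al.\ in a plane of $\mathbb{R}^3$ and use the fact that for coplanar inputs the smallest enclosing sphere computation (and the connectivity limit $L_i$) reduces to the two-dimensional smallest enclosing circle computation, so \gtcThreeDShort/ reproduces the \gtcShort/ execution round by round. Your write-up merely makes explicit the coplanarity-preservation and SES-equals-SEC details that the paper leaves as an informal observation.
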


With a generalization of the analysis of \cite{DBLP:conf/spaa/DegenerKLHPW11}, we can also prove an upper runtime bound of $\mathcal{O}\left(n^2\right)$.
Due to space constraints, the analysis is moved to \Cref{section:3dGtcAnalysis}.

\begin{restatable}{theorem}{mainTheorem}
	\label{Thm:disc_runtime}
	Given $n$ robots in a connected starting configuration $\mathcal{P} \in \mathbb{R}^3$ in the Euclidean space, \gtcThreeDShort/ gathers the robots in $\mathcal{O}(n^2)$ rounds.
\end{restatable}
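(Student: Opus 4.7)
The plan is to generalize the potential-function analysis of Degener et al.\ \cite{DBLP:conf/spaa/DegenerKLHPW11} from the plane to three-space, preserving its amortization structure while substituting the planar geometric lemmas by their spatial analogues. As in the 2D argument, I would identify robots on the global convex hull $\mathrm{CH}(t)$ as the ``drivers'' of progress, prove a per-round progress lemma for them, and amortize against a linear potential that measures the extent of the configuration along a fixed direction.

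The key geometric step is a corner-progress lemma: if $r_i$ is a vertex of $\mathrm{CH}(t)$ then it is also a vertex of the convex hull of its neighborhood $\mathcal{R}_i(t)$, so by \Cref{prop:conv_comb} the SES center $c_i(t)$ lies in the interior half-space supporting $\mathrm{CH}(t)$ at $p_i(t)$. Quantitatively, for any outward unit normal $\vec{u}$ at $p_i(t)$ I would lower bound the inward component $(p_i(t)-c_i(t))\cdot\vec{u}\ge\Omega(1/n)$: since $r_i$ lies on $\mathcal{S}_i(t)$, the convex-combination property of \Cref{prop:conv_comb} forces at least one of the (up to four) boundary points $r_j$ to satisfy $(p_j(t)-p_i(t))\cdot\vec{u}\le -\Omega(1/n)$; a pigeonhole-type argument, combined with the fact that the sphere radius $R_i(t)$ is at least some function of the maximum neighbor distance, supplies the quantitative bound.

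The second step handles the connectivity throttle $L_i=\min_j \ell_j$. A standard chord-length computation shows that the distance $r_i$ can travel within a ball of radius $\tfrac{1}{2}$ centered at the midpoint with $r_j$ is at least a constant fraction of $d(p_i(t),c_i(t))$, unless the target direction is nearly opposite to $\overrightarrow{p_i(t)p_j(t)}$; but in that case $r_j$ would lie outside the SES centered at $c_i(t)$, contradicting \Cref{prop:conv_comb}. Hence the inward progress is preserved up to a constant factor.

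Putting the pieces together, I would amortize as follows. Fix an arbitrary direction $\vec{u}$ and consider, at each round $t$, the robot $r_{i^\ast(t)}$ that is extremal in direction $\vec{u}$. This robot is necessarily a corner of $\mathrm{CH}(t)$, so by the two steps above its coordinate along $\vec{u}$ shrinks by $\Omega(1/n)$ per round. Since \ubg{0} is connected, the initial extent along $\vec{u}$ is $\mathcal{O}(n)$, giving $\mathcal{O}(n^2)$ rounds to gather. The main obstacle is the corner-progress lemma: in the plane the SES is determined by up to three boundary points forming a triangle, whereas in 3D up to four points may form a tetrahedron, necessitating a more elaborate case analysis and careful treatment of near-degenerate configurations (e.g., SES determined by two antipodal neighbors, or the neighborhood being essentially coplanar so that the strategy reduces to the planar case already analyzed).
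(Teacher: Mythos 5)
There is a genuine gap, and it sits exactly where the real difficulty of the theorem lies: your quantitative corner-progress lemma is false. For a vertex $r_i$ of the global hull with outward unit normal $\vec{u}$, the inward component $(p_i(t)-c_i(t))\cdot\vec{u}$ can be made arbitrarily small, not $\Omega(1/n)$. Concretely, place the topmost robot $r$ (extremal in direction $\vec{u}$) with two neighbors at distance almost $1$ on either side but only $\delta\ll 1/n$ lower than $r$, and hang the rest of a connected chain far below those neighbors: the local SES of $r$'s neighborhood is essentially the sphere spanned by the two neighbors, its center lies only $\delta$ below $r$, so the extremal coordinate drops by $\delta$, while the extent along $\vec{u}$ is $\Theta(n)$. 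The pigeonhole step fails for the same reason: the (up to four) boundary points of the local SES need not lie on the global SES, so nothing forces any of them to be $\Omega(1/n)$ below $p_i$. In the paper the $\Omega(1/n)$ quantity is of a different nature -- it is the height $h$ of a spherical cap $C$ of the \emph{global} SES whose inscribed cone has constant slant height (Lemma \ref{Lemma:height_bound}) -- and the progress statement is not ``each hull vertex moves inward by $\Omega(1/n)$'' but ``within two rounds, either two robots merge or \emph{all} robots vacate the sub-cap $C_1$'' (Lemma \ref{Lemma:progress}), which requires showing that no robot outside $C$ can aim into $C_1$ (Lemmas \ref{Lemma:dist_point}, \ref{Lemma:single_C1_point}) and a separate merge-counting measure for robots with identical neighborhoods; none of this is replaced by your amortization. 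Relatedly, the potential ``extent along a fixed direction $\vec{u}$'' cannot certify gathering at all: it is zero for any configuration lying in a plane orthogonal to $\vec{u}$.

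Your treatment of the connectivity throttle is also too optimistic. The admissible distance $L_i$ is a chord of a ball of radius $1/2$ through (or near) $p_i(t)$, so it degenerates when the target direction is nearly \emph{perpendicular} to $\overline{p_i(t)p_j(t)}$ for a neighbor at distance close to $1$ -- and this case genuinely occurs (the angle is only guaranteed to be at most $\pi/2$, not bounded away from it), so the claim that $L_i$ is a constant fraction of $d(p_i(t),c_i(t))$ fails; your ``nearly opposite'' case is not the problematic one. The paper never proves such a global statement: it localizes the hindrance analysis to the cap $C$, where Corollary \ref{Cor:C_center_radius} bounds the relevant SES radii by $1/4$ so that targets inside $C$ are always reachable (Lemma \ref{Lemma:not_hindered_in_C}), and it uses a dedicated geometric argument with the segment $C_3$ to show robots are never blocked from \emph{leaving} $C$ (Lemma \ref{Lemma:not_hindered_leaving}). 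To repair your approach you would essentially have to rebuild this cap-based, two-round, merge-or-vacate argument; the 2D-versus-3D case analysis for SES centers that you flag at the end is a real but comparatively minor part of that work.
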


The combination of both theorems yields a tight runtime of $\Theta(n^2)$.

\section{Continuous Gathering}

Now, we consider the \Gathering/ problem within the continuous time model.
For the Euclidean plane, Li et al. \cite{DBLP:conf/algosensors/LiMHP17} introduced the class of \emph{contracting strategies}.
This definition can also be applied to three dimensions:
Let $\ch_t$ denote the closed convex hull of the robots' configuration $\mathcal{P}_t$ at time $t$ and let $\corn_t$ denote the vertices of $CH_t$.
The class of contracting strategies can be defined as follows:

\begin{definition}
	\label{def:contracting}
	In the continuous time model, a movement strategy for $n$ robots is called \emph{contracting} if for every time $t$ such that the cardinality of $\corn_t$ is strictly greater than $1$, every robot in $\corn_t$ moves with speed $1$ in a direction that points to $\ch_t$.
\end{definition}

The main idea of our analysis is to project the three-dimensional configuration (including the velocity vectors) to a two-dimensional plane.
The projected robots then perform something similar to a contracting strategy where they move towards the inside of the projected convex hull with varying speeds.
However, when looking at only a single projection plane, some velocity vectors might even have a length of $0$ in the projection at some points in time (in case the projection plane is chosen orthogonal to the velocity vector).
Thus, the analysis of Li et al.\ cannot be directly applied to the projection as this analysis assumes that all robots on the convex hull move with speed $1$ towards the inside.
Instead, we analyze not only one but all possible (meaningfully different) projections, since -- intuitively -- for a majority of all possible projection planes, the projected length of a velocity vector must be larger than a constant $\varepsilon$.

\subsection{Preliminaries}

The following lemma is a useful tool for the analysis of continuous strategies stating how the distance between two robots changes over time.

\begin{lemma} [\cite{Kling0219}]
	\label{lemma:cont_dist_change}
	Consider two robots $r_i$ and $r_j$ with differentiable trajectories at time $t$.
	Their distance $d(p_i(t), p_j(t))$ at time $t$ changes with speed
	\begin{align*}
	d'(p_i(t), p_j(t)) = -(\|v_i(t)\| \cdot  \cos \beta_{i,j}(t) + \|v_j(t)\| \cdot \cos \beta_{j,i}(t)),
	\end{align*}
	where $\beta_{i,j}(t)$ is the angle between $\vec{v}_i(t)$ and the line segment $\overline{p_i(t)p_j(t)}$.
\end{lemma}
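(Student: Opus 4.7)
The plan is to derive the formula from the usual calculus of the Euclidean norm. Let $f(t) := d(p_i(t), p_j(t)) = \|p_i(t) - p_j(t)\|$ and assume $f(t) > 0$ at the time of interest (if $f(t) = 0$ the two robots already coincide and the statement is vacuous; since the trajectories are differentiable, $f$ is differentiable wherever it is positive). The first step is to square the distance to remove the square root, obtaining
\begin{equation*}
f(t)^2 = \bigl(p_i(t) - p_j(t)\bigr) \cdot \bigl(p_i(t) - p_j(t)\bigr),
\end{equation*}
and then differentiate both sides. Using the product rule on the right and the chain rule on the left, this yields
\begin{equation*}
2 f(t)\, f'(t) = 2\bigl(p_i(t) - p_j(t)\bigr) \cdot \bigl(\vec{v}_i(t) - \vec{v}_j(t)\bigr),
\end{equation*}
so after dividing by $2f(t)$ I arrive at
\begin{equation*}
f'(t) = \frac{\bigl(p_i(t) - p_j(t)\bigr) \cdot \vec{v}_i(t)}{\|p_i(t) - p_j(t)\|} - \frac{\bigl(p_i(t) - p_j(t)\bigr) \cdot \vec{v}_j(t)}{\|p_i(t) - p_j(t)\|}.
\end{equation*}

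The second step is to rewrite each of these two terms using the angles $\beta_{i,j}(t)$ and $\beta_{j,i}(t)$. By definition, $\beta_{i,j}(t)$ is the angle between $\vec{v}_i(t)$ and the segment $\overline{p_i(t)p_j(t)}$, i.e.\ between $\vec{v}_i(t)$ and the direction $p_j(t) - p_i(t)$. Hence
\begin{equation*}
\cos \beta_{i,j}(t) = \frac{\vec{v}_i(t) \cdot (p_j(t) - p_i(t))}{\|\vec{v}_i(t)\| \cdot \|p_j(t) - p_i(t)\|},
\end{equation*}
so multiplying by $\|\vec{v}_i(t)\|$ and flipping the sign of the numerator's difference gives
\begin{equation*}
\|\vec{v}_i(t)\| \cos \beta_{i,j}(t) = -\frac{\vec{v}_i(t) \cdot (p_i(t) - p_j(t))}{\|p_i(t) - p_j(t)\|}.
\end{equation*}
An analogous identity holds for $\|\vec{v}_j(t)\| \cos \beta_{j,i}(t)$ with $p_j$ and $p_i$ swapped, in which case the direction from $p_j(t)$ to $p_i(t)$ is $p_i(t) - p_j(t)$, giving a positive sign.

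The final step is simply to substitute both identities into the expression for $f'(t)$ obtained above; the two terms then combine to $-\bigl(\|\vec{v}_i(t)\| \cos \beta_{i,j}(t) + \|\vec{v}_j(t)\| \cos \beta_{j,i}(t)\bigr)$, which is the claimed formula. There is essentially no hard step: the only subtlety is the degenerate case $p_i(t) = p_j(t)$, where the Euclidean norm fails to be differentiable and the angles $\beta_{i,j}, \beta_{j,i}$ are not well-defined; this case is excluded by the assumption that the trajectories are differentiable at $t$ together with $f(t) > 0$, and can be handled separately by noting that two coincident robots trivially satisfy any gathering-style invariant.
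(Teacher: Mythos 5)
Your derivation is correct: differentiating the squared distance and rewriting the two inner products as $-\|\vec{v}_i(t)\|\cos\beta_{i,j}(t)$ and $-\|\vec{v}_j(t)\|\cos\beta_{j,i}(t)$ gives exactly the claimed formula, and you rightly flag that the statement only makes sense while $p_i(t)\neq p_j(t)$. Note that the paper does not prove this lemma at all but imports it from the cited reference, and your argument is the standard one used there, so there is no substantive difference in approach to report.
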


The main tool for the analysis of contracting strategies in the three-dimen\-sional Euclidean space are projections of the robots' configuration onto a two-dimensional plane.
Let $h(\vec{x})$ be the plane through the origin with normal vector $\vec{x}$ and let $\Pi_{\vec{x}}$ denote the orthogonal projection onto $h(\vec{x})$.
Now, given a configuration $\mathcal{P}$ of $n$ robots, consider their projection $\hat{\mathcal{P}}^{(\vec{x})} = \{ \Pi_{\vec{x}}p_i(t) \mid p_i(t) \in \mathcal{P} \}$ onto $h(\vec{x})$ along with the projections of their movement vectors $\hat{\vec{v}}^{(\vec{x})}_i(t) = \Pi_{\vec{x}}\vec{v}_i(t)$.
Furthermore, denote the convex hull of $\hat{\mathcal{P}}$ as $PCH_t(\vec{x})$.
See also Fig. \ref{fig:proj}.

\begin{figure}[tb]
	\label{fig:proj}
	\centering
	\includegraphics[width = 0.92\textwidth]{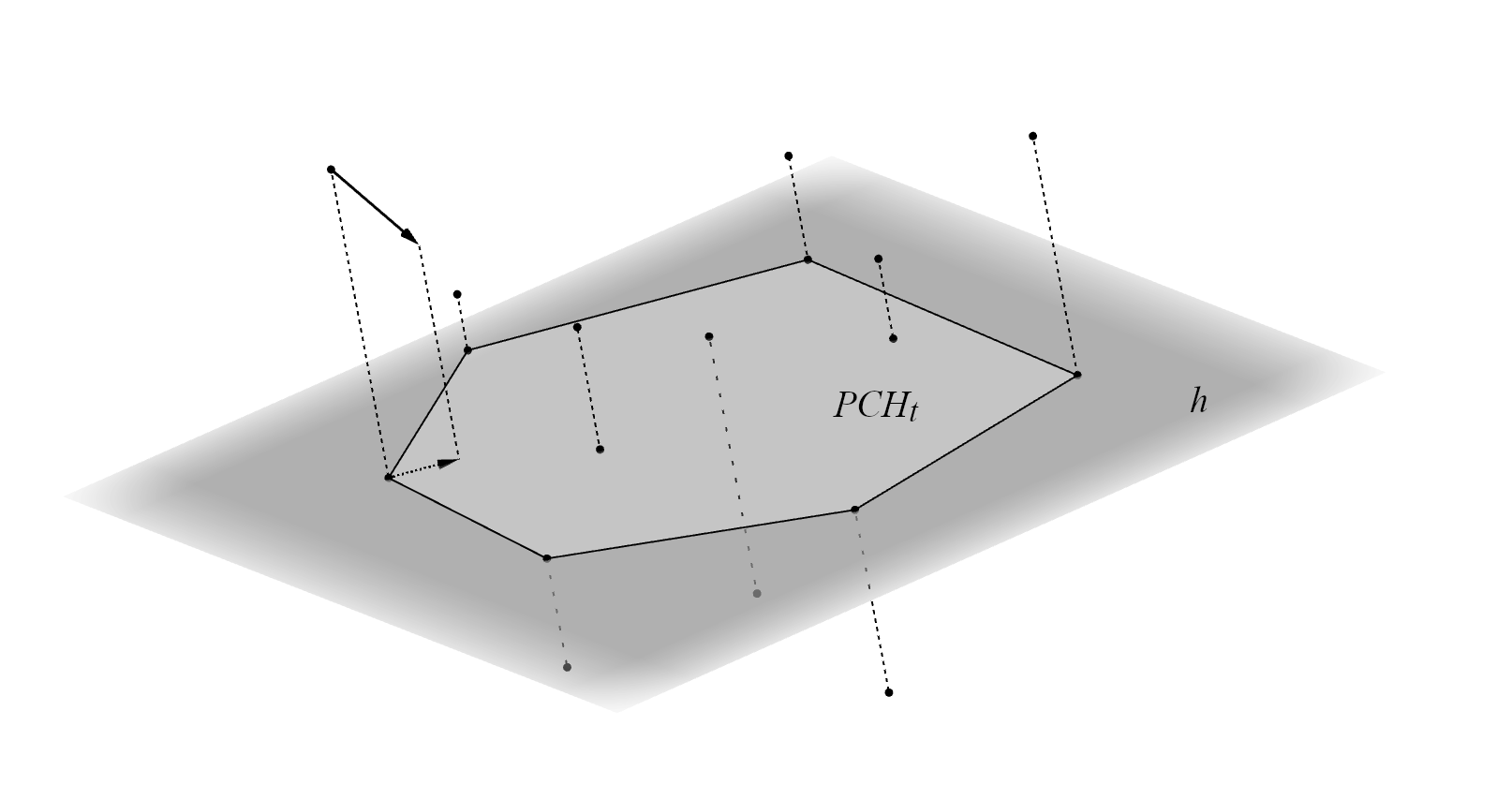}
	\caption{A configuration of robots being projected onto a plane $h(\vec{x})$. The mapping of the orthogonal projection $\Pi_{\vec{x}}$ is illustrated by dashed lines and the projected convex hull $\pch_t(\vec{x})$ is shown in light grey. One of the robots' movement vectors as well as its projection are represented by arrows.}
\end{figure}

If the robots perform a contracting strategy in the three-dimensional space, their projections also move towards the inside of the projected convex hull $PCH_t(\vec{x})$ since $\Pi_{\vec{x}}$ is a linear transformation and therefore preserves convexity.
However, the lengths of the projected movement vectors $\hat{\vec{v}}_i(t)$ are going be smaller than $1$ in general.
For a given projection onto a plane $h(\vec{x})$, the minimum length of the $\hat{\vec{v}}^{(\vec{x})}_i(t)$ will be called the \emph{projected speed} and is denoted by $\varepsilon_{\vec{x}} = \min_{r_i \in \mathcal{R}} ||\Pi_{\vec{x}}\vec{v}_i(t)||$.
Note that $\varepsilon_{\vec{x}}$ can even be $0$ in case $h(\vec{x})$ is orthogonal to any velocity vector.
The following notion of the \emph{length} of $PCH_t(\vec{x})$ will be used as a part of a progress measure for three-dimensional contracting strategies:

\begin{definition}(Length)
	Let \(m_1(t),m_2(t),...,m_{k(t)}(t)\) be the vertices of $\pch_t(\vec{x})$ (ordered counter-clockwise), where $k(t)$ is the number of vertices at time $t$.
	The \emph{length} $\ell(t,\vec{x})$ of $\pch_t(\vec{x})$ is defined as the sum of its edge lengths:
	$\ell(t,\vec{x}) = \sum_{\iota = 1}^{k(t)} d(m_\iota(t), m_{\iota - 1}(t)) $,
	where $m_0 := m_{k(t)}(t)$.
\end{definition}

Note that if the diameter of the starting configuration was $\Delta$, the length of a given projection can be at most $\pi \Delta$ (if it approximates a circle).
Furthermore, if $\ell(t, \vec{x}) = 0$, then the robots have either gathered in the original three-dimensional space or have formed a line that is parallel to $\vec{x}$.
In the latter case it only takes further time of at most $O(\Delta)$ for the robots to gather, as those robots that form the endpoints of the line have no choice but to move towards each other.
The following Lemma provides a statement about how the length changes over time.

\begin{restatable}{lemma}{contractingSpeedEpsilon}
	\label{Lemma:ell_bound}
	For time $t$, let $h(\vec{x})$ be a plane with projected speed $\varepsilon_{\vec{x}}$, such that $\ell(t, \vec{x}) > 0$ and no two robots with different positions in $\mathbb{R}^3$ get projected onto the same point on $h(\vec{x})$.
	Then $\ell'(t, \vec{x}) \leq -\frac{8\varepsilon_{\vec{x}}}{n}$.
\end{restatable}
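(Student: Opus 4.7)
The plan is to express $\ell'(t,\vec{x})$ as a sum over the vertices of $\pch_t(\vec{x})$, lower-bound each summand in terms of the exterior angle of $\pch_t(\vec{x})$ at that vertex, and then aggregate using the fact that exterior angles of a convex polygon sum to $2\pi$. The whole proof runs in three conceptual steps.

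First, I would set up a one-to-one correspondence between the vertices $m_1,\dots,m_{k(t)}$ of $\pch_t(\vec{x})$ and certain robots in $\corn_t$. Under the non-degeneracy hypothesis, each $m_\iota$ is the projection of a unique robot $r_\iota$; moreover $r_\iota \in \corn_t$, because $\Pi_{\vec{x}}$ is linear and so sends convex combinations to convex combinations, which means the extremality of $m_\iota$ in $\pch_t(\vec{x})$ forces extremality of its preimage in $\ch_t$. Since the strategy is contracting, $r_\iota$ moves with speed $1$ into $\ch_t$; linearity of $\Pi_{\vec{x}}$ then guarantees that the vertex velocity $\dot m_\iota = \hat{\vec{v}}_\iota$ points into $\pch_t(\vec{x})$ and satisfies $\|\hat{\vec{v}}_\iota\| \ge \varepsilon_{\vec{x}}$.

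Next, I would apply \Cref{lemma:cont_dist_change} to each edge of $\pch_t(\vec{x})$ and regroup terms by vertex, which yields $\ell'(t,\vec{x}) = -\sum_\iota \hat{\vec{v}}_\iota\cdot(\hat a_\iota + \hat b_\iota)$, where $\hat a_\iota, \hat b_\iota$ are the unit vectors from $m_\iota$ towards $m_{\iota-1}$ and $m_{\iota+1}$. Writing $\alpha_\iota$ for the interior angle of $\pch_t(\vec{x})$ at $m_\iota$, the vector $\hat a_\iota + \hat b_\iota$ has length $2\cos(\alpha_\iota/2)$ and points along the inward bisector at $m_\iota$, while the inward cone in which $\hat{\vec{v}}_\iota$ lies has half-width $\alpha_\iota/2$ about that same bisector. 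A short plane-geometry computation then gives $\hat{\vec{v}}_\iota\cdot(\hat a_\iota + \hat b_\iota) \ge 2\varepsilon_{\vec{x}}\cos^2(\alpha_\iota/2) = 2\varepsilon_{\vec{x}}\sin^2(\theta_\iota/2)$, where $\theta_\iota = \pi-\alpha_\iota$ is the exterior angle.

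Finally, I would aggregate using $\sum_\iota \theta_\iota = 2\pi$ (sum of exterior angles of a convex polygon), Jordan's inequality $\sin(x)\ge 2x/\pi$ on $[0,\pi/2]$, Cauchy--Schwarz, and $k(t)\le n$, which together yield $\sum_\iota \sin^2(\theta_\iota/2)\ge 4/k(t)\ge 4/n$ and hence $-\ell'(t,\vec{x})\ge 8\varepsilon_{\vec{x}}/n$. The step I expect to be most delicate is the first one: without the non-degeneracy hypothesis, two distinct 3D robots could collapse to the same vertex of $\pch_t(\vec{x})$, and a preimage of such a vertex might fail to lie in $\corn_t$, so the lower bound on the projected speed at that vertex would break down. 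A minor additional subtlety is that the combinatorial structure of $\pch_t(\vec{x})$ can change at isolated times; the claimed differential bound is to be read as holding at every generic $t$, which is enough for the downstream integration argument in the runtime analysis.
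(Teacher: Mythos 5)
Your proof is correct and follows essentially the same route as the paper's: both sum \Cref{lemma:cont_dist_change} over the edges of $\pch_t(\vec{x})$, regroup per vertex, bound each vertex's contribution by a function of its (interior/exterior) angle using the contracting property and the projected speed $\varepsilon_{\vec{x}}$, and finish with Cauchy--Schwarz plus the angle sum of a convex polygon and $k(t)\le n$. The only difference is cosmetic: where the paper invokes the inequality $\cos(\alpha\vartheta)+\cos(\alpha(1-\vartheta))\ge 2(\alpha-\pi)^2/\pi^2$ from Li et al., you rederive the equivalent per-vertex bound $2\varepsilon_{\vec{x}}\sin^2(\theta_\iota/2)\ge 2\varepsilon_{\vec{x}}\theta_\iota^2/\pi^2$ directly via the bisector decomposition and Jordan's inequality, which makes the argument self-contained (and your explicit handling of the non-degeneracy hypothesis in identifying projected vertices with corners of $\ch_t$ matches the paper's use of that assumption).
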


\begin{proof}
	Because $\Pi_{\vec{x}}$ is a linear transformation, each of the $m_\iota(t)$ (corners of $\pch_t(\vec{x})$) must also be the projection of one of the vertices of the original, three-dimensional convex hull $\ch_t$.
	Therefore, they possess velocity vectors that point towards the inside of $\ch_t$ by the definition of a contracting strategy.
	Now consider the projections of these velocity vectors onto $h(\vec{x})$:
	Let $\hat{\vec{v}}^{(\vec{x})}_i(t) := \Pi_{\vec{x}} \vec{v}_i(t)$.
	By assumption, we have $||\hat{\vec{v}}^{(\vec{x})}_i(t)|| \geq \varepsilon_{\vec{x}}$.
	Using this, it is now possible to bound $\ell'(t, \vec{x})$:
	Let $\alpha_\iota(t)$ be the internal angle of $\pch_t(\vec{x})$ at $m_\iota(t)$.

	Note that in general, it may happen that two corner robots of $\ch_t$ got projected onto the same point on $h(\vec{x})$ for some $\vec{x}$.
	By one of the assumptions of the lemma, this is not true.
	Therefore, we know that each corner $m_\iota(t)$ of $\pch_t(\vec{x})$ contains only a single robot.
	This means that each $\alpha_\iota(t)$ is split into two parts, $\hat{\beta}_{\iota, \iota -1}(t)$ and $\hat{\beta}_{\iota - 1, \iota}(t)$ by $m_\iota(t)$'s velocity vector $\hat{\vec{v}}^{(\vec{x})}_\iota(t)$, such that $\alpha_\iota(t) = \hat{\beta}_{\iota, \iota -1}(t) + \hat{\beta}_{\iota - 1, \iota}(t)$.
	Using Lemma \ref{lemma:cont_dist_change} and Proposition \ref{prop:ineq_split_angle}, the derivative of $\ell(t)$ can now be bounded as follows:
	Recall that $\ell'(t,\vec{x})= \sum_{\iota = 1}^{k(t)} d'(m_\iota(t), m_{\iota - 1}(t))$:
	\begin{align}
	\ell'(t,\vec{x})&= \sum_{\iota = 1}^{k(t)} d'(m_\iota(t), m_{\iota - 1}(t))\\
	&= \sum_{\iota = 1}^{k(t)} -\big(||\hat{\vec{v}}^{(\vec{x})}_\iota(t)|| \cos \hat{\beta}_{\iota, \iota -1}(t) + ||\hat{\vec{v}}^{(\vec{x})}_{\iota-1}(t)|| \cos \hat{\beta}_{\iota-1, \iota}(t)  \big)\\
	&\leq - \varepsilon_{\vec{x}} \sum_{\iota = 1}^{k(t)} \cos \hat{\beta}_{\iota, \iota -1}(t) + \cos \hat{\beta}_{\iota-1, \iota}(t)\\
	&= - \varepsilon_{\vec{x}} \sum_{\iota = 1}^{k(t)} \frac{2(\alpha_\iota(t) - \pi)^2}{\pi^2} \label{equation:decomposition} \\
	&= - \frac{2\varepsilon_{\vec{x}}}{\pi^2} \sum_{\iota = 1}^{k(t)} (\alpha_\iota(t) - \pi)^2
	\end{align}
	For \Cref{equation:decomposition} observe that for $\vartheta \in [0,1]$ and $\alpha \in [0, \pi]$, it holds that $\cos(\alpha \vartheta) + \cos(\alpha (1 - \vartheta)) \geq \frac{2(\alpha - \pi)^2}{\pi^2}$ \cite{DBLP:conf/algosensors/LiHP16}.
	Now, the Cauchy-Schwarz inequality along with the fact that the sum of the inner angles of a convex polygon with $k$ corners is $(k - 2) \cdot \pi$.

	\begin{align*}
	\ell'(t, \vec{x})\leq - \frac{2 \varepsilon_{\vec{x}}}{k(t) \cdot \pi^2} \cdot \Big(\sum_{\iota = 1}^{k(t)} (\alpha_\iota(t) - \pi) \Big)^2 &= - \frac{2 \varepsilon_{\vec{x}}}{k(t) \cdot \pi^2} \cdot \big( (k(t) - 2) \cdot \pi - k(t) \cdot \pi \big)^2\\
	&= - \frac{8 \varepsilon_{\vec{x}}}{k(t)} \leq - \frac{8 \varepsilon_{\vec{x}}}{n}
	\end{align*}
	This concludes the proof. \qed
\end{proof}

Note that this also means that $\ell(t,\vec{x})$ is monotonically decreasing over time.

\subsection{Proof of the upper bound}
The main idea of the analysis is to track the lengths $\ell(t, \vec{x})$ for all (meaningfully different) projection planes $h(\vec{x})$.
Since the length of the normal vector does not matter, it is enough to consider only vectors $\vec{x}$ of length $1$.
Additionally, a vector $\vec{x}$ and its reflection about the origin $-\vec{x}$ describe the same plane.
Therefore it is enough to consider those vectors that lie on the surface of a unit hemisphere $U$ centered around the origin (w.l.o.g. the one above the $XY$-plane).

The integral of the lengths $\ell(t, \vec{x})$ with respect to $\vec{x}$ on the surface of $U$ at time $t$ can now be used as a measure to track the progress of a three-dimensional gathering strategy:
\begin{align*}
L(t) = \iint_U \ell(t, \vec{x}) dA
\end{align*}

If $L(t) = 0$, the robots have gathered.
If one of the $\ell(t, \vec{x})$ prematurely becomes $0$, then the robots are collinear and gather in further time $O(\Delta)$.

\begin{lemma}
	\label{Lemma:L_bound}
	$L(0) \leq 2 \pi^2 \Delta$.
\end{lemma}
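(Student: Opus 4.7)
The plan is to prove the claim via a completely elementary estimate: bound $\ell(0,\vec{x})$ pointwise by $\pi \Delta$, and then integrate this constant bound over the unit hemisphere $U$.

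First I would justify the pointwise bound $\ell(0,\vec{x}) \leq \pi \Delta$ for every unit vector $\vec{x} \in U$. The projection $\Pi_{\vec{x}}$ is a $1$-Lipschitz map, so the diameter of $\Pi_{\vec{x}}(\mathcal{P}_0)$ is at most $\Delta$, and hence $\pch_0(\vec{x})$ is a convex polygon of diameter at most $\Delta$. By a classical result (an immediate consequence of the isodiametric-type inequality for convex planar sets, or Cauchy's formula), the perimeter of any convex planar set of diameter $\Delta$ is at most $\pi \Delta$; in fact the paper already remarks exactly this upper bound right after the definition of length. Thus $\ell(0,\vec{x}) \leq \pi \Delta$ for each $\vec{x} \in U$.

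Next I would simply integrate. Since the surface area of the unit hemisphere $U$ is $\iint_U dA = 2\pi$,
\begin{align*}
L(0) \;=\; \iint_U \ell(0,\vec{x}) \, dA \;\leq\; \iint_U \pi \Delta \, dA \;=\; \pi \Delta \cdot 2\pi \;=\; 2\pi^2 \Delta,
\end{align*}
which is exactly the desired bound.

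There is essentially no obstacle here; the only place a reader might pause is on the perimeter-vs-diameter inequality, so I would cite it (or give a one-line justification via the fact that the projection of a convex set of diameter $\Delta$ onto any line has length at most $\Delta$, combined with Cauchy's perimeter formula $\mathrm{perim}(K) = \tfrac12 \int_0^{2\pi} w_K(\theta)\, d\theta$ where the width $w_K(\theta) \leq \Delta$). Everything else is just the area of a unit hemisphere.
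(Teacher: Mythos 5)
Your proof is correct and follows essentially the same route as the paper: the pointwise bound $\ell(0,\vec{x}) \leq \pi\Delta$ (which the paper asserts right after defining the length) integrated over the hemisphere of area $2\pi$. Your added justification of the perimeter-versus-diameter bound via the Lipschitz property of $\Pi_{\vec{x}}$ and Cauchy's formula is a welcome but inessential elaboration of the same argument.
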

\begin{proof}
	Since $\ell(t, \vec{x}) \leq \pi \Delta$ (if $\pch_t(\vec{x})$ approximates a circle), we conclude

	\begin{align*}
	L(0) &\leq \iint_U \pi \Delta dA = \pi \Delta \iint_U dA
	\end{align*}
	The remaining integral part is a surface integral over a hemisphere.
	By observing that the surface area of a unit hemisphere is $2\pi$, the lemma follows.
	\qed
\end{proof}

The goal of the proof is to show that there is at least a constant $(1-\alpha)$-fraction of projection planes $h(\vec{x})$ with projected speed at least $\varepsilon$ for some constants $\alpha$ and $\varepsilon$.
This can then be used to show that $L(t)$ decreases by a constant amount at each point in time using Lemma \ref{Lemma:ell_bound}.

\begin{figure}[tb]
	\label{fig:caps}
	\centering
	\def\svgwidth{0.6\linewidth}
	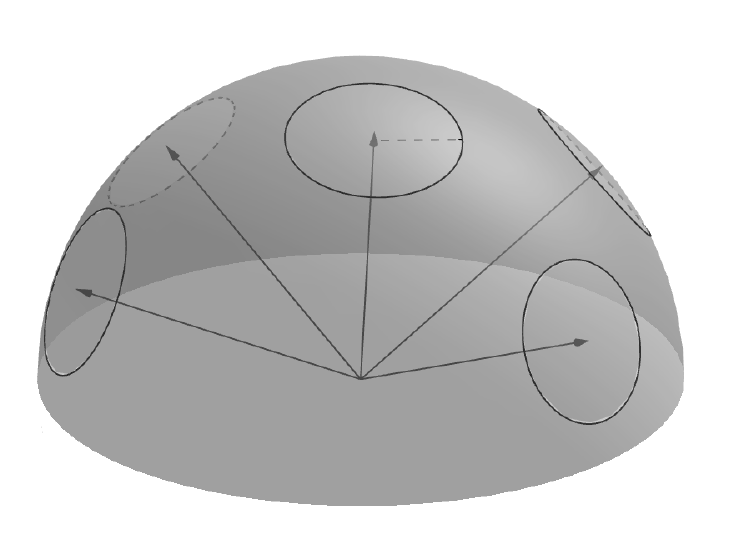
	\caption{A figure illustrating how movement vectors block areas of the unit hemisphere $U$. Around each movement vector $\vec{v}_i(t)$, there is a spherical cap of radius $\varepsilon$. Each plane corresponding to a normal vector $\vec{x}$ lying in one of those spherical caps is blocked.}
\end{figure}

Now consider a projection plane $h(\vec{x})$.
If this plane has projected speed smaller than $\varepsilon$ at time $t$, then there is a movement vector $\vec{v}_i(t)$, such that $\angle(\vec{x}, \vec{v}_i(t)) < \sin^{-1} \varepsilon$.
We say that $\vec{v}_i(t)$ \emph{blocks} $h(\vec{x})$.
Conversely, given a $\vec{v}_i(t)$, we can determine the set of all the $h(\vec{x})$ that are blocked by this $\vec{v}_i(t)$:

\begin{lemma}
	At time $t$, the movement vector $\vec{v}_i(t)$ blocks vectors from an area of $2 \pi \big(1 - \sqrt{1-\varepsilon^2} \big)$ on $U$ from reaching projected speed $\varepsilon$.
\end{lemma}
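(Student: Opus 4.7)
My plan is to translate the definition of ``blocking'' into a geometric condition on the unit hemisphere $U$ and then compute the area using the standard spherical cap formula. The projection $\Pi_{\vec{x}}$ is the orthogonal projection onto the plane through the origin with normal $\vec{x}$, so for a vector $\vec{v}_i(t)$ with $\|\vec{v}_i(t)\|=1$ (guaranteed for corner robots of $\ch_t$ by \Cref{def:contracting}) and $\vec{x}$ of unit length, the projected length satisfies
\begin{align*}
\|\Pi_{\vec{x}}\vec{v}_i(t)\| = \sin\angle(\vec{x},\vec{v}_i(t)).
\end{align*}
Thus $\vec{v}_i(t)$ blocks $h(\vec{x})$ exactly when $\sin\angle(\vec{x},\vec{v}_i(t)) < \varepsilon$, i.e.\ when $\vec{x}$ lies within angular distance $\phi := \sin^{-1}\varepsilon$ of either $\vec{v}_i(t)$ or $-\vec{v}_i(t)$ on the full unit sphere.

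Next I would invoke the standard formula for the area of a spherical cap on the unit sphere: a cap with half-angle $\phi$ has area $2\pi(1-\cos\phi)$. Substituting $\cos\phi = \sqrt{1-\varepsilon^2}$ gives area $2\pi(1-\sqrt{1-\varepsilon^2})$ for each of the two caps around $\pm\vec{v}_i(t)$ on the full sphere. Since a plane $h(\vec{x})$ has two unit normals $\pm\vec{x}$, we may restrict attention to the hemisphere $U$: every plane has exactly one representative normal in $U$ (apart from a measure-zero set on the equator). Under this identification, the two antipodal caps on the full sphere together contribute area exactly $2\pi(1-\sqrt{1-\varepsilon^2})$ on $U$, which is the claimed bound.

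The main technical point to treat carefully is the hemispheric identification: depending on the direction of $\vec{v}_i(t)$, either the whole cap around $\vec{v}_i(t)$ lies inside $U$ (and the antipodal cap outside), or the cap straddles the equator and one must combine the piece inside $U$ with the antipodal piece of the other cap. In both cases the total blocked area on $U$ equals the area of a single cap $2\pi(1-\sqrt{1-\varepsilon^2})$, because antipodal identification is an isometry between the portion of one cap lying in the lower half and the corresponding portion of the other cap mapped into $U$. I would note this briefly rather than split cases in detail, since the equator has measure zero and does not affect the computed area.
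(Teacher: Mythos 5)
Your proof is correct and takes essentially the same route as the paper: the blocked normals form a spherical cap of angular radius $\sin^{-1}\varepsilon$ around $\vec{v}_i(t)$ (resp.\ $\pm\vec{v}_i(t)$), and its area is computed by the standard cap formula $2\pi\big(1-\cos(\sin^{-1}\varepsilon)\big)=2\pi\big(1-\sqrt{1-\varepsilon^2}\big)$. The only difference is bookkeeping: the paper reflects $\vec{v}_i(t)$ onto $U$ with a one-line w.l.o.g., while you additionally justify the projected-length identity $\|\Pi_{\vec{x}}\vec{v}_i(t)\|=\sin\angle(\vec{x},\vec{v}_i(t))$ and the antipodal identification when the cap straddles the equator --- a minor refinement, not a different argument.
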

\begin{proof}
	W.l.o.g. it can be assumed that $\vec{v}_i(t)$ has a positive $z$-component, i.e. lies on $U$.
	Otherwise it can be reflected about the origin and it will still affect the exact same planes.

	Now consider the spherical cap of $U$ with base radius $\varepsilon$ and apex $\vec{v}_i(t)$ and let $C$ be its curved surface (see Fig. \ref{fig:caps} for an illustration).
	For all vectors $\vec{x} \in C$, $h(\vec{x})$ is blocked from reaching projected speed $\varepsilon$.
	The area of $C$ can be computed by $A_C = 2 \pi r^2 (1 - \cos \theta) = 2 \pi (1 - \cos (\sin^{-1} \varepsilon))
	= 2 \pi (1 - \sqrt{1 - \varepsilon^2})
	$
	\qed
\end{proof}

Since there are $n$ robots, the area blocked by their movement vectors is at most $n \cdot 2\pi (1 - \sqrt{1 - \varepsilon^2})$, whereas the total surface of $U$ is $2\pi$.
If we want the movement vectors to block only an $\alpha$-fraction of $U$'s surface, the $\varepsilon$ can be chosen accordingly:

\begin{lemma}
	\label{Lemma:alpha_eps_choice}
	Let $0 \leq \alpha \leq 1$. Then for a minimum speed of $\varepsilon = \frac{\sqrt{2n\alpha - \alpha^2}}{n}$, there is at most an $\alpha$-fraction of the surface of $U$ that is blocked with respect to $\varepsilon$.
\end{lemma}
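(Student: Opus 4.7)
The plan is to combine the per-vector blocking bound from the previous lemma with a union bound over all $n$ robots, and then solve the resulting inequality for $\varepsilon$. Concretely, by the preceding lemma each movement vector $\vec{v}_i(t)$ blocks a spherical cap of area $2\pi(1-\sqrt{1-\varepsilon^2})$ on $U$, so the total blocked area is at most $n \cdot 2\pi(1-\sqrt{1-\varepsilon^2})$. Since the total surface area of the unit hemisphere $U$ is $2\pi$, the fraction of $U$ that is blocked is bounded by $n(1-\sqrt{1-\varepsilon^2})$.

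To make this fraction at most $\alpha$, I would require $n(1-\sqrt{1-\varepsilon^2}) \leq \alpha$, which I rearrange to $\sqrt{1-\varepsilon^2} \geq 1 - \alpha/n$. Both sides are nonnegative for $0\leq\alpha\leq 1$ (and $n\geq 1$), so squaring preserves the inequality and yields $\varepsilon^2 \leq 2\alpha/n - \alpha^2/n^2 = (2n\alpha - \alpha^2)/n^2$. Hence the choice $\varepsilon = \sqrt{2n\alpha - \alpha^2}/n$ makes the inequality tight, and the claim follows.

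There is essentially no hard step here: the only thing to be mildly careful about is the squaring, which is valid precisely because $\alpha/n \leq 1$ guarantees $1-\alpha/n \geq 0$. Everything else is an immediate consequence of the per-vector area bound and the union bound.
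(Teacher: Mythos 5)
Your proof is correct and follows essentially the same route as the paper: bound the total blocked area by $n \cdot 2\pi\bigl(1-\sqrt{1-\varepsilon^2}\bigr)$ via the per-vector cap area and a union bound, then solve for $\varepsilon$ so that this equals (at most) an $\alpha$-fraction of the hemisphere's surface $2\pi$. Your explicit check that squaring is valid because $1-\alpha/n \geq 0$ is a minor extra care not spelled out in the paper, but the argument is the same.
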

\begin{proof}
	$U$ has a surface of $2\pi$ and the robots' movement vectors block an area of at most $n \cdot 2\pi (1 - \sqrt{1 - \varepsilon^2})$.
	We want to choose $\varepsilon$ such that the following holds:
	\begin{align*}
	\alpha 2\pi = n \cdot 2\pi (1 - \sqrt{1 - \varepsilon^2})
	\iff \;  \varepsilon = \frac{\sqrt{2n\alpha - \alpha^2}}{n}
	\end{align*}
	\qed

\end{proof}

Using this lemma, it is now possible to bound the decrease of the progress measure $L(t)$ for a given $\alpha$:

\begin{lemma}
	\label{Lemma:L_decrease}
	For a time $t \geq 0$ such that $\ell(t,\vec{x}) > 0$ for all $\vec{x} \in U$ and $0 \leq \alpha \leq 1$, then $L'(t) \leq - 16\pi \cdot (1-\alpha) \cdot \frac{\sqrt{2n\alpha - \alpha^2}}{n^2}$.
\end{lemma}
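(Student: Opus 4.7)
The plan is to differentiate $L(t)$ under the integral sign and then split the hemisphere $U$ into a ``slow'' region (where the projected speed is below the threshold $\varepsilon$) and a ``fast'' region (where it is at least $\varepsilon$), applying \Cref{Lemma:ell_bound} on the latter and using monotonicity of $\ell$ on the former. Concretely, I would first justify differentiating under the integral (for a fixed $t$ the integrand $\ell(t,\vec{x})$ is bounded above by $\pi\Delta$ and $\ell(\cdot,\vec{x})$ is non-increasing, so $L'(t) = \iint_U \ell'(t,\vec{x}) \, dA$).

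Next, set $\varepsilon := \frac{\sqrt{2n\alpha-\alpha^2}}{n}$ as supplied by \Cref{Lemma:alpha_eps_choice}, and let $B \subseteq U$ be the blocked region, i.e.\ the set of $\vec{x}$ for which $\varepsilon_{\vec{x}} < \varepsilon$. By \Cref{Lemma:alpha_eps_choice}, the surface area of $B$ is at most $\alpha\cdot 2\pi$, so the unblocked region $U\setminus B$ has area at least $(1-\alpha)\cdot 2\pi$. On $B$, I would appeal to the remark that each $\ell(\cdot, \vec{x})$ is monotonically non-increasing to conclude $\ell'(t,\vec{x})\leq 0$. On $U\setminus B$, the hypothesis $\ell(t,\vec{x}) > 0$ together with $\varepsilon_{\vec{x}} \geq \varepsilon$ lets me apply \Cref{Lemma:ell_bound} to get $\ell'(t,\vec{x}) \leq -\frac{8\varepsilon}{n}$.

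Putting these together,
\begin{align*}
L'(t) \;=\; \iint_{B} \ell'(t,\vec{x})\,dA + \iint_{U\setminus B}\ell'(t,\vec{x})\,dA
\;\leq\; 0 + \Big(-\frac{8\varepsilon}{n}\Big)\cdot (1-\alpha)\cdot 2\pi
\;=\; -\frac{16\pi(1-\alpha)}{n}\cdot \frac{\sqrt{2n\alpha-\alpha^2}}{n},
\end{align*}
which is exactly the desired bound.

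The only subtle point, and the main obstacle I would flag, is the side condition in \Cref{Lemma:ell_bound} that no two robots with distinct positions in $\mathbb{R}^3$ project to the same point of $h(\vec{x})$. For a fixed pair of distinct robots, the set of normals $\vec{x}\in U$ violating this is contained in a great circle (a $1$-dimensional subset), hence has $2$-dimensional measure zero; taking the union over the $\binom{n}{2}$ pairs still yields a measure-zero exceptional set on $U$, which can be removed from the integration domain without affecting $L'(t)$. Once this technicality is handled, the proof reduces to the arithmetic assembly above.
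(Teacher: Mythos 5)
Your proof is correct and follows essentially the same route as the paper: choose $\varepsilon$ via \Cref{Lemma:alpha_eps_choice}, split $U$ into the blocked and unblocked regions, apply \Cref{Lemma:ell_bound} on the unblocked $(1-\alpha)$-fraction and monotonicity of $\ell$ on the rest, and dismiss the degenerate projections as a measure-zero exceptional set. One minor remark: for a fixed pair of distinct robots the degenerate normals are only $\pm(p_i(t)-p_j(t))/\|p_i(t)-p_j(t)\|$, so the exceptional set consists of at most $\binom{n}{2}$ isolated points on $U$ (as the paper notes), which is even smaller than your great-circle bound but equally negligible.
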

\begin{proof}
	Choose $\varepsilon = \frac{\sqrt{2n\alpha - \alpha^2}}{n}$ according to Lemma \ref{Lemma:alpha_eps_choice}, i.e. there is only at most an $\alpha$-fraction of the surface of $U$ that is blocked.
	Since Lemma \ref{Lemma:ell_bound} only applies to those $\vec{x}$ for which no two robots get projected onto the same point, the $\vec{x}$ for which this is the case still have to be considered.
	However, there is only a finite number $\binom{n}{2}$ of such vectors out of the uncountably many that form $U$ and they are only singular points on $U$.
	Therefore, they can be ignored when considering the integral $L(t)$.
	By Lemma \ref{Lemma:ell_bound}, there is an $(1 - \alpha)$-fraction of vectors $\vec{x}$ from the surface of $U$ (which has size $2\pi$) with
	$\ell'(t, \vec{x}) \leq -\frac{8\varepsilon}{n} = -8 \frac{\sqrt{2n\alpha - \alpha^2}}{n^2}$.
	Using this, we can bound $L'(t)$:
	\begin{align*}
	L'(t) = \frac{d}{dt} \Big(\iint_U \ell(t, \vec{x})dA \Big) = \iint_U \ell'(t, \vec{x})dA\\
	\leq  (1-\alpha) \cdot 2\pi \cdot -8 \frac{\sqrt{2n\alpha - \alpha^2}}{n^2}
	= -16\pi \cdot (1-\alpha) \cdot \frac{\sqrt{2n\alpha - \alpha^2}}{n^2}
	\end{align*}
	\qed
\end{proof}

By choosing the $\alpha$ appropriately, the main result can now be obtained:

\begin{theorem}
	A set of $n$ robots controlled by a contracting strategy gathers in time $\mathcal{O}\left(\Delta \cdot n^{3/2}\right)$ from an initial configuration with diameter $\Delta$.
\end{theorem}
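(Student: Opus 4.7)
The plan is to integrate the differential inequality from \Cref{Lemma:L_decrease} with an appropriately chosen $\alpha$, use the initial bound from \Cref{Lemma:L_bound}, and finally account for the case where the measure $L(t)$ degenerates before the robots have actually gathered.

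First I would select the constant $\alpha$ so that the right-hand side of \Cref{Lemma:L_decrease} becomes $-\Omega(n^{-3/2})$. The factor $(1-\alpha) \cdot \sqrt{2n\alpha - \alpha^2}$ is maximized for $\alpha$ bounded away from both $0$ and $1$; taking for instance $\alpha = 1/2$ yields $(1-\alpha) = 1/2$ and $\sqrt{2n\alpha - \alpha^2} = \sqrt{n - 1/4} \geq \sqrt{n}/2$ for $n \geq 1$. Plugging this in gives a bound of the form $L'(t) \leq -c \cdot n^{-3/2}$ for an absolute constant $c > 0$ at every time $t$ at which $\ell(t,\vec{x}) > 0$ for all $\vec{x} \in U$.

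Next, I would integrate this pointwise bound. Since $L(t)$ is monotonically non-increasing (each $\ell(t,\vec{x})$ is, by the remark after \Cref{Lemma:ell_bound}), and since $L(0) \leq 2\pi^2 \Delta$, the measure must reach $0$ after time at most $L(0)/(c \cdot n^{-3/2}) = O(\Delta \cdot n^{3/2})$. The main subtlety here is that \Cref{Lemma:L_decrease} only applies while \emph{all} projections still have positive length; as soon as some $\ell(t, \vec{x}^*) = 0$, the bound on $L'(t)$ is no longer guaranteed to hold.

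I would handle this degeneracy using the observation already made in the paper: if $\ell(t, \vec{x}^*) = 0$ at some time $t$, then the three-dimensional configuration is collinear along a line parallel to $\vec{x}^*$. In that case the two extremal robots on this line are vertices of the convex hull and, by the definition of contracting strategies, each moves with speed $1$ in a direction pointing into the convex hull, which is contained in the line itself. Hence the length of the line shrinks at rate at least $1$ (or $2$), and since its initial length is at most $\Delta$, gathering from such a collinear state takes additional time $O(\Delta)$. Combining the $O(\Delta \cdot n^{3/2})$ bound for reaching either full gathering or a collinear configuration with the $O(\Delta)$ bound for gathering from a collinear configuration yields the claimed runtime of $O(\Delta \cdot n^{3/2})$.

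The only step I expect to require a bit of care is the degeneracy argument: strictly speaking one has to rule out that collinearity is reached and then destroyed repeatedly, but monotonicity of $\ell(t,\vec{x})$ together with the fact that once the configuration is collinear it remains collinear (any movement on the convex hull of a line stays within that line) makes this essentially immediate.
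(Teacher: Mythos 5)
Your proposal is correct and follows essentially the same route as the paper's proof: choose $\alpha = 1/2$ in \Cref{Lemma:L_decrease} to get $L'(t) \leq -\Omega(n^{-3/2})$, divide the initial bound $L(0) \leq 2\pi^2\Delta$ from \Cref{Lemma:L_bound} by this rate, and absorb the degenerate collinear case (some $\ell(t,\vec{x}) = 0$) with the additional $\mathcal{O}(\Delta)$ term already noted in the paper's preliminaries. Your extra remark on why the collinear case cannot recur is a small refinement of care beyond what the paper states, but the argument and constants match.
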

\begin{proof}
	By Lemma \ref{Lemma:L_bound}, we have $L(0) \leq 2 \pi^2 \Delta$.
	By Lemma \ref{Lemma:L_decrease}, $L(t)$ decreases by at least $16\pi \cdot (1-\alpha) \cdot \frac{\sqrt{2n\alpha - \alpha^2}}{n^2}$ for a given $\alpha$ as long as $\ell(t, \vec{x}) > 0$ for all $\vec{x} \in U$.
	However, if there is an $\vec{x} \in U$ with $\ell(t, \vec{x}) = 0$, then the robots are collinear along some line that is parallel to $\vec{x}$ and take further time $O(\Delta)$ to gather.

	Now choose $\alpha = \frac{1}{2}$ and consider an arbitrary time $t$ such that $\ell(t, \vec{x}) > 0$ for all $\vec{x} \in U$.
	Then $L'(t) \leq -8\pi \frac{1}{n^{3/2}}$.
	Therefore it takes time at most $(2 \pi^2 \Delta) / (8\pi \frac{1}{n^{3/2}}) = \frac{\pi}{4} \Delta n^{3/2}$ until $L(t)$ is zero.
	This leads to a gathering time of $\mathcal{O}\left(\Delta \cdot n^{3/2}\right) + \mathcal{O}(\Delta) \in \mathcal{O}\left(\Delta \cdot n^{3/2}\right)$.
	\qed
\end{proof}

\subsection{\gtcThreeDCont/}
Next, a continuous version of \gtcThreeDShort/  which was already presented for the discrete time setting, will be considered as a concrete example of a contracting strategy.
The two-dimensional version of this strategy was adapted for continuous time by Li et al.\ \cite{DBLP:conf/algosensors/LiHP16}.
Compared to the discrete time version, no additional measures have to be taken to preserve connectivity, as it can be shown that this happens naturally in the continuous case.
The strategy is summarized in Algorithm \ref{alg:gttc_cont}.

\begin{algorithm}
	\caption{\gtcThreeDCont/ (\gtcThreeDContShort/)}
	\label{alg:gttc_cont}
	\begin{algorithmic}[1]
		\State \( \mathcal{R}_i(t) := \{\text{positions of robots visible from } r_i \text{, including } r_i \text{ at time } t\} \)
		\State \( \mathcal{S}_i(t) := \text{smallest enclosing sphere of } \mathcal{R}_i(t) \)
		\State \( c_i(t) := \text{center of } \mathcal{S}_i(t) \)
		\State Move towards $c_i(t)$ with speed $1$, or stay on $c_i(t)$ if $r_i$ is already positioned on it.
	\end{algorithmic}
\end{algorithm}

To show that \gtcThreeDContShort/ is contracting,  it must first be verified that connectivity of the visibility graph $UBG_t = (\mathcal{R}, E_t)$ is maintained at all times.
The same reasoning that was used in the two-dimensional case by Li et al. \cite{DBLP:conf/algosensors/LiHP16} can also be applied here:

\begin{restatable}{lemma}{contThreeDGTCConnectivity}
	\label{alg:gttc_cont_connectivity}
	Let $\mathcal{R}$ be a set of robots in the three-dimensional Euclidean space that follows the \gtcThreeDContShort/ strategy.
	If $\{r_i, r_j\}$ is an edge in $UBG_t$ at time $t$, then $\{r_i, r_j\}$ is an edge in $UBG_{t'}$ at $t' \geq t$.
	Thus, \gtcThreeDContShort/ maintains the connectivity of $\ubg{t}$.
\end{restatable}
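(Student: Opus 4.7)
The plan is to establish the invariant that whenever $d(p_i(t), p_j(t)) = 1$, the (right) derivative of the distance is nonpositive. Combined with continuity of the trajectories, this will imply that the distance cannot cross $1$ from below, so the edge $\{r_i, r_j\}$ persists in $\ubgPlain$ at all later times.

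Assume toward contradiction that there exists $t' > t$ with $d(p_i(t'), p_j(t')) > 1$. By continuity of the trajectories, there is a first time $t^* \in (t, t']$ with $d(p_i(t^*), p_j(t^*)) = 1$, and the right derivative of the distance at $t^*$ must be nonnegative. I would invoke Lemma \ref{lemma:cont_dist_change} to write
\[
d'(p_i(t^*), p_j(t^*)) = -\bigl(\|\vec v_i(t^*)\|\cos\beta_{i,j}(t^*) + \|\vec v_j(t^*)\|\cos\beta_{j,i}(t^*)\bigr),
\]
and derive a contradiction by showing each term on the right is nonnegative.

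The key geometric step exploits the SES definition together with the local observation radius. Since $\mathcal{R}_i(t^*)$ is contained in the closed ball of radius $1$ around $p_i(t^*)$, that ball itself is an enclosing sphere, so the SES radius $R_i(t^*)$ is at most $1$. Because $p_j(t^*) \in \mathcal{R}_i(t^*)$ lies inside the SES, one has $\|c_i(t^*) - p_j(t^*)\|^2 \leq R_i(t^*)^2 \leq 1 = \|p_j(t^*) - p_i(t^*)\|^2$. Expanding in coordinates with $p_i(t^*)$ at the origin yields $c_i(t^*) \cdot p_j(t^*) \geq \tfrac{1}{2}\|c_i(t^*)\|^2 \geq 0$, which is the coordinate-free statement $(c_i(t^*) - p_i(t^*)) \cdot (p_j(t^*) - p_i(t^*)) \geq \tfrac{1}{2}\|c_i(t^*) - p_i(t^*)\|^2 \geq 0$. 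If $c_i(t^*) = p_i(t^*)$, then $r_i$ is already at its target and $\vec v_i(t^*) = 0$, making its contribution to $d'$ zero. Otherwise, dividing by $\|c_i(t^*) - p_i(t^*)\|$ and by $\|p_j(t^*) - p_i(t^*)\| = 1$ gives $\cos\beta_{i,j}(t^*) \geq 0$, so the first summand of $d'$ is nonnegative. A symmetric argument using $\mathcal{R}_j(t^*) \subseteq B(p_j(t^*), 1)$ handles $r_j$, and the contradiction follows.

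The main obstacle I anticipate is justifying Lemma \ref{lemma:cont_dist_change} when trajectories are only piecewise differentiable: the SES center $c_i(t)$ can jump as a robot enters or leaves $\mathcal{R}_i(t)$. I would address this by working with right derivatives at $t^*$ (the paper notes that natural strategies have right-differentiable trajectories) and by observing that the positions themselves are continuous across such events, so the geometric inequality above depends only on the instantaneous state at $t^*$ and still forces the right derivative there to be nonpositive.
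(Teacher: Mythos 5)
Your proposal is correct and follows essentially the same route as the paper: both isolate a critical time $t^*$ with $d(p_i(t^*),p_j(t^*))=1$ and use that the SES center of a neighborhood has radius at most $1$ and hence lies within distance $1$ of every neighbor (the paper phrases this via the lens $L$ formed by intersecting the unit balls of $r_i$ and $r_j$, you via the explicit dot-product bound giving $\cos\beta_{i,j}(t^*)\geq 0$), so neither robot's motion can push the distance past $1$. Your invocation of Lemma \ref{lemma:cont_dist_change} merely makes quantitative the step the paper states verbally, at the same level of rigor.
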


\begin{proof}
	Consider a robot $r_i$ with neighborhood $\mathcal{R}_i(t)$ at time $t$.
	Let $Q_i(t)$ be the intersection of the unit balls of all robots in $\mathcal{R}_i(t)$.
	Since the \ac{ses} of $\mathcal{R}_i(t)$ can have a radius of at most $1$ and contains all robots in $\mathcal{R}_i(t)$, its center $c_i(t)$ must lie in $Q_i(t)$.

	Consider some neighbor $r_j \in \mathcal{R}_i(t)$ of $r_i$ and assume that there is some future point in time $t' > t$, such that $d(p_i(t'), p_j(t')) > 1$, i.e. $r_i$ and $r_j$ are no longer neighbors.
	Since the movement of robots is continuous, there must be some time $t^* \in [t, t']$, for which $d(p_i(t^*), p_j(t^*)) = 1$.

	Now let $L$ denote the intersection of the unit balls of $r_i$ and $r_j$ at time $t^*$.
	Any point in $L$ is within distance at most $1$ of both $r_i$ and $r_j$.
	Furthermore $L$ is a superset of both $Q_i(t^*)$ and $Q_j(t^*)$, meaning the target points $c_i(t^*)$ and $c_j(t^*)$ of both $r_i$ and $r_j$ also lie in $L$.
	Therefore, $r_i$ and $r_j$ can only move in the direction of points that are in distance at most $1$ from both of them, meaning their distance can never exceed $1$, creating a contradiction to the assumption that their distance is greater than $1$ at time $t'$.
	\qed
\end{proof}

It remains to show that \gtcThreeDContShort/ is a contracting strategy.
This follows directly from \Cref{alg:gttc_cont_connectivity} and Proposition \ref{prop:conv_comb}, which states that the center of a \ac{ses} is a convex combination of the points it encloses, meaning any target point computed by the strategy lies within the convex hull of the current configuration.

\begin{theorem}
	\label{thm:gttc_cont_contracting}
	\gtcThreeDContShort/ is a contracting, local strategy and thus gathers the robots in time $\mathcal{O}\left(\Delta \cdot n^{3/2}\right)$.
\end{theorem}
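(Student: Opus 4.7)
The plan is to verify that \gtcThreeDContShort/ satisfies the three properties invoked in the theorem — \emph{locality}, preservation of connectivity, and the \emph{contracting} property of Definition \ref{def:contracting} — and then to invoke the main upper-bound theorem for contracting strategies in order to derive the runtime. Locality is immediate from the pseudocode: every quantity $r_i$ computes depends only on $\mathcal{R}_i(t)$, the robots within its unit viewing range. Preservation of connectivity of $\ubgPlain$ is exactly the content of \Cref{alg:gttc_cont_connectivity}, which has already been proved just above, so it may be quoted directly.

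The substantive step is verifying the contracting property. Fix a time $t$ with $|\corn_t| > 1$ and let $r_i$ be a robot whose position is a vertex of $\ch_t$. The strategy prescribes moving $r_i$ with speed $1$ toward $c_i(t)$, the center of the smallest enclosing sphere of $\mathcal{R}_i(t)$. By Proposition \ref{prop:conv_comb}, $c_i(t)$ is a convex combination of at most four points in $\mathcal{R}_i(t) \subseteq \mathcal{P}_t$, so
\[
c_i(t) \in \ch(\mathcal{R}_i(t)) \subseteq \ch(\mathcal{P}_t) = \ch_t.
\]
Because $p_i(t)$ is a vertex of $\ch_t$ and $c_i(t) \in \ch_t$, the velocity vector $c_i(t) - p_i(t)$ is directed into $\ch_t$, as required by Definition \ref{def:contracting}. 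The only degenerate case is $c_i(t) = p_i(t)$, which arises precisely when every robot visible to $r_i$ shares its position; by connectivity of $\ubg{t}$ together with $|\corn_t| > 1$, such a coincident cluster cannot be the entire swarm, so some robot of the cluster sees a neighbor strictly inside $\ch_t$ and therefore moves with unit speed into the interior. Identifying coincident robots with the single corner they occupy, the hull vertex itself is pulled inward at speed $1$, which is the sense in which the global length-integral analysis uses the definition.

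Once the contracting property is established, the main theorem for three-dimensional contracting strategies applies verbatim and yields the gathering bound $\mathcal{O}(\Delta \cdot n^{3/2})$. The main obstacle I anticipate is exactly the coincident-cluster edge case above: it has to be handled carefully enough that Definition \ref{def:contracting} is satisfied without any modification of the strategy, but this is really bookkeeping — the only substantive ingredient is Proposition \ref{prop:conv_comb}, which immediately places the target point inside the global convex hull.
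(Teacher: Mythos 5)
Your proposal is correct and follows essentially the same route as the paper: locality is immediate from the pseudocode, connectivity is exactly Lemma~\ref{alg:gttc_cont_connectivity}, and Proposition~\ref{prop:conv_comb} places the target $c_i(t)$ in $\ch_t$, so the general runtime theorem for contracting strategies applies. Your extra treatment of the degenerate case $c_i(t)=p_i(t)$ goes beyond the paper (which ignores it), but note that for a robot in $\corn_t$ with $\ubg{t}$ connected and $|\corn_t|>1$ this case simply cannot occur -- coincident robots have identical views, so if any robot of the cluster sees a robot at a different position then $r_i$ does too -- which is a cleaner fix than your ``some robot of the cluster moves while $r_i$ stays'' picture, and also note that $c_i(t)=p_i(t)$ with positive SES radius would contradict $p_i(t)$ being a vertex of $\ch_t$, so your ``precisely when all visible robots coincide'' claim needs that restriction.
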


\subsection{Tangential-Normal Strategies}
Previously, we showed a runtime bound for a relatively general class of (not necessarily local) gathering strategies and introduced a concrete example in \gtcThreeDContShort/.
However, when designing a local strategy, additional care has to be taken to maintain the visibility graph $\ubg{t}$ to successfully solve the \Gathering/ problem.
It would be useful to also have a relatively simple design criterion that ensures this property.
For this purpose, we will focus on robots' \emph{local convex hulls} and introduce the notion of \tangentialNormal/ strategies.
Let $\ch(\mathcal{R}_i(t))$ denote the local convex hull of robot $r_i$, i.e. the convex hull of $r_i$'s neighborhood.
Furthermore, let $\adj_t(i)$ denote the set of robots that are adjacent to $r_i$ on $\ch(\mathcal{R}_i(t))$ if $r_i$ lies on $\ch(\mathcal{R}_i(t))$ itself.
The main idea is to identify those velocity vectors that lead to a decrease in distance to all neighboring robots.
These vectors are the normal vectors of \emph{tangential planes}:

\begin{definition}
	\label{def:tang_plane}
	Given a convex polyhedron $P \subset \mathbb{R}^3$ and a vertex $p \in P$.
	A \emph{tangential plane} $h_p$ w.r.t. $P$ through $p$ is a plane that only intersects $P$ at the vertex~$p$.
\end{definition}

Note that as long as $P$ is actually convex, such a plane always exists and can -- for example -- be obtained by taking the plane through one of the faces adjacent to $p$ and slightly rotating it.
Based on this notion, we define the class of \emph{tangential-normal} strategies in which the corner robots of local convex hulls move along the normal vectors of tangential planes:

\begin{definition}
	\label{def:tang_normal}
	In the continuous time model, a gathering strategy for $n$ robots is called \emph{tangential-normal} if for every time $t$ in which the robots have not yet gathered, each robot $r_i \in \mathcal{R}$ that is on a corner of its own local convex hull $CH(\mathcal{R}_i(t))$ moves with speed $1$ along the normal vector of a tangential plane w.r.t. $CH(\mathcal{R}_i(t))$ through $p_i$ while other robots do not move.
\end{definition}


The following lemma characterizes the normal vectors of tangential planes and will be used to show the desired properties of tangential-normal strategies.


\begin{lemma}
	\label{lemma:tang_plane_angles}
	Let $p_i$ be a corner of a convex polyhedron $P$ and let $E_i$ be the set of edges of $P$ adjacent to $p_i$.
	Then a plane $h$ through $p_i$ with normal vector $\vec{n}$ is a tangential plane w.r.t. $P$ if and only if for each edge $e \in E_i$, $\angle(\vec{n}, e) < \frac{\pi}{2}$
\end{lemma}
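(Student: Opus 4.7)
The key geometric fact I plan to use is that for a convex polyhedron $P$ with vertex $p_i$, the tangent cone at $p_i$ coincides with the convex cone generated by the edge vectors $\{\vec{e} : e \in E_i\}$, where $\vec{e}$ denotes the vector from $p_i$ to the other endpoint of $e$. In particular, the containment $P \subseteq p_i + \operatorname{cone}(E_i)$ holds. I will assume this standard result from polyhedral geometry and reduce both directions of the lemma to a half-space argument about $h$.

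\emph{Forward direction ($h$ tangential $\Rightarrow$ angle condition).} Assume $h\cap P = \{p_i\}$. Suppose for contradiction that $P$ contains points $q_1,q_2$ strictly on opposite sides of $h$. By convexity the segment $\overline{q_1 q_2}$ lies in $P$ and crosses $h$; the crossing point is in $P\cap h=\{p_i\}$, hence equals $p_i$. But $p_i$ is an extreme point of $P$, contradicting the existence of such a proper convex combination. Therefore all of $P$ lies in one closed half-space bounded by $h$; orient $\vec n$ so that this is the half-space $\{x : \vec n\cdot(x-p_i)\ge 0\}$. For any edge $e\in E_i$ the endpoint $p_i+\vec e$ belongs to $P\setminus\{p_i\}$, so $p_i+\vec e\notin h$ and thus $\vec n\cdot\vec e>0$, which translates to $\angle(\vec n,\vec e)<\pi/2$.

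\emph{Reverse direction (angle condition $\Rightarrow$ tangential).} Assume $\vec n\cdot\vec e>0$ for every $\vec e\in E_i$. Using the tangent-cone containment $P\subseteq p_i+\operatorname{cone}(E_i)$, every $q\in P$ can be written as $q=p_i+\sum_{\vec e\in E_i}\lambda_{\vec e}\vec e$ with all $\lambda_{\vec e}\ge 0$. Taking the inner product with $\vec n$ gives
\[
\vec n\cdot(q-p_i)=\sum_{\vec e\in E_i}\lambda_{\vec e}\,(\vec n\cdot\vec e)\ge 0,
\]
with equality only when every $\lambda_{\vec e}=0$, i.e.\ when $q=p_i$. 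Hence $q\in h$ forces $q=p_i$, so $h\cap P=\{p_i\}$ and $h$ is tangential in the sense of Definition~\ref{def:tang_plane}.

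\emph{Main obstacle.} The only nontrivial ingredient is justifying the inclusion $P\subseteq p_i+\operatorname{cone}(E_i)$. I plan to cite this as a standard polyhedral-geometry fact (or, if needed, sketch it by noting that each face of $P$ incident to $p_i$ is a convex polygon whose edges at $p_i$ are exactly the edges in $E_i$ bounding that face, so each face lies in the cone spanned by its two adjacent edges in $E_i$; combining this over all incident faces and taking convex hulls yields the claim). Everything else is a straightforward half-space computation combined with the fact that a vertex of a convex polytope is extreme.
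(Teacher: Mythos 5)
Your argument is correct and follows essentially the same route as the paper: tangentiality forces all of $P$ (in particular the endpoints of the edges at $p_i$) strictly to one side of $h$, giving the angle condition, and conversely the strict angle condition at the incident edges keeps $P\setminus\{p_i\}$ off $h$. Where you differ is that you make explicit the ingredient the paper leaves implicit in its reverse direction: the paper jumps from ``all edges adjacent to $p_i$ lie on one side of $h$'' to ``$h$ is tangential,'' which silently uses exactly your containment $P \subseteq p_i + \operatorname{cone}(E_i)$, so naming and citing that standard fact is a genuine improvement in rigor (your explicit handling of the orientation of $\vec{n}$ in the forward direction is likewise cleaner than the paper's). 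One caution about your fallback sketch of the cone containment: it is not sufficient as stated. Each face incident to $p_i$ does lie in the cone spanned by its two edges at $p_i$, but $P$ is in general \emph{not} contained in the convex hull of the incident faces -- for a cube with vertex $p_i$, the vertex antipodal to $p_i$ lies outside the convex hull of the three incident faces (though of course inside the cone) -- so ``combining over incident faces and taking convex hulls'' does not reach all of $P$. Either keep the citation to the standard fact that the supporting (tangent) cone at a vertex of a polytope is generated by its incident edges, or argue it directly, e.g.\ by noting that near $p_i$ the polytope is cut out by the facet-defining half-spaces through $p_i$, whose cone has the incident edges as extreme rays, and then extend to all of $P$ by convexity and scaling toward $p_i$.
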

\begin{proof}
	First, note that by the convexity of $P$ and since $h$ only intersects with it in $p_i$, the entire rest of $P$ lies on one side of $h$.
	However, if there was an edge $e$ with $\angle(\vec{n}, e) \geq \frac{\pi}{2}$, this would mean that $e$ lies on the opposite side of or directly on $h$, both of which are contradictions to $h$ being a tangential plane.

	For the other direction of the statement, let $\vec{n}$ be a vector such that for each edge $e \in E_i$, $\angle(\vec{n}, e) < \frac{\pi}{2}$.
	This property now immediately yields that all edges $e \in E_i$ lie on the same side of the plane $h: \vec{n} \cdot (x - p_i) = 0$ defined by $\vec{n}$ and the point $p_i$, making $h$ a tangential plane w.r.t. $P$ through $p_i$.
	\qed
\end{proof}

\begin{theorem}
	\label{thm:tang_normal_finite}
	Let $\mathcal{R}$ be a set of robots controlled by a tangential-normal strategy.
	Then, for each pair of robots $r_i, r_j \in \mathcal{R}$ and time $t$ such that $\{r_i, r_j\}$ is an edge in \ubg{t}, $\{r_i, r_j\}$ is an edge in \ubg{t'} for all $t' \geq t$.
	Thus, tangential-normal strategies maintain the connectivity of $\ubg{t}$.
\end{theorem}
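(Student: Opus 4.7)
The plan is to argue by contradiction, using the distance-derivative formula of \Cref{lemma:cont_dist_change} together with the geometric characterization of tangential-plane normals given by \Cref{lemma:tang_plane_angles}. Suppose that for some edge $\{r_i, r_j\} \in E_t$ there is a time $t' > t$ with $d(p_i(t'), p_j(t')) > 1$. Since the trajectories are continuous, I would pick the supremum $t^* = \sup \{s \in [t,t'] : d(p_i(s), p_j(s)) \leq 1\}$, so that $d(p_i(t^*), p_j(t^*)) = 1$ and there are times arbitrarily close to $t^*$ from the right at which the distance strictly exceeds $1$. In particular, the (upper) right derivative of $d(p_i(\cdot), p_j(\cdot))$ at $t^*$ is non-negative. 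The goal is then to show that this right derivative is in fact non-positive, and strictly negative whenever either robot is moving.

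To compute $d'(t^{*+})$, I would apply \Cref{lemma:cont_dist_change}, which expresses it as $-(\|\vec{v}_i(t^*)\|\cos\beta_{i,j}(t^*) + \|\vec{v}_j(t^*)\|\cos\beta_{j,i}(t^*))$. Since $d(p_i(t^*), p_j(t^*)) = 1$, $p_j$ lies in $\mathcal{R}_i(t^*)$ and hence in $\ch(\mathcal{R}_i(t^*))$, with $p_j \neq p_i$. Now split into cases. If $r_i$ is not a corner of its local convex hull, then by \Cref{def:tang_normal} $\vec{v}_i(t^*) = 0$, so the corresponding term vanishes. Otherwise $\vec{v}_i(t^*)$ is the normal of some tangential plane $h$ through $p_i$. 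By \Cref{lemma:tang_plane_angles}, this normal makes an acute angle with every adjacent edge of $\ch(\mathcal{R}_i(t^*))$, so it points into the half-space that contains the hull; combined with the defining property of a tangential plane (it intersects the hull only at $p_i$), every other point of the hull --- in particular $p_j$ --- lies strictly on that side. Hence $\vec{v}_i(t^*) \cdot (p_j - p_i) > 0$, giving $\cos\beta_{i,j}(t^*) > 0$. Applying the symmetric argument to $r_j$, each summand in the derivative formula is non-negative and strictly positive whenever the corresponding robot is moving.

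Putting this together yields $d'(t^{*+}) \leq 0$, which is strict if either $\vec{v}_i(t^*)$ or $\vec{v}_j(t^*)$ is nonzero; this contradicts the assumption that the distance exceeds $1$ arbitrarily close to $t^*$. The step I expect to be the main obstacle is the degenerate case in which both velocities are simultaneously zero at $t^*$: then the derivative formula only gives $d'(t^{*+}) = 0$, and I would need a slightly more careful limit argument to rule out that the distance drifts above $1$ later. The plan is to observe that under a tangential-normal strategy the velocities are determined by the positions via the local convex hull structure, so once the two robots become neighbors at distance exactly $1$ they remain so until one of them becomes a corner of its local convex hull, at which point the strict argument above applies and pushes the distance back below $1$. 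Combining these pieces gives $d(p_i(s), p_j(s)) \leq 1$ for all $s \geq t$, which is the desired statement, and taking the union over all edges immediately yields connectivity of $\ubg{t}$ for all $t \geq 0$.
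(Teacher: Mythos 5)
Your proposal is correct and uses essentially the same ingredients as the paper: \Cref{lemma:cont_dist_change} plus \Cref{lemma:tang_plane_angles} (together with the fact that a tangential plane meets the hull only at $p_i$, so every neighbor lies strictly on the hull side of the normal) to conclude that the distance between any two mutually visible robots has non-positive derivative, hence cannot grow past $1$. The paper simply states this as a direct monotonicity argument rather than your first-crossing contradiction, and it does not treat the degenerate case you flag (both velocities zero at the critical time) any more rigorously than you do, so your extra caution is, if anything, a refinement rather than a deviation.
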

\begin{proof}
	Let $\mathcal{R}$ be a set of robots that follows a tangential-normal strategy.
	Consider a time $t$ and a robot $r_i \in \mathcal{R}$ that lies on the corner of its own local convex hull $CH(\mathcal{R}_i(t))$ and let $v_i(t)$ be the normal vector of a tangential plane $h_{p_i}$ w.r.t. $CH(\mathcal{R}_i(t))$.
	By Lemma \ref{lemma:tang_plane_angles}, for each adjacent robot $r_j \in \adj_t(i)$, it holds that $\beta_{i,j}(t) < \frac{\pi}{2}$.
	Since the cosine is positive on the interval $[0, \frac{\pi}{2}]$, Lemma \ref{lemma:cont_dist_change} yields that $r_i$ contributes a strict decrease in distance to all of its neighbors.
	A neighbor $r_j$ now either does not move or also contributes a decrease in distance to $r_i$.
	This also means that for each pair of robots $r_i$ and $r_j$ that can see each other, the distance between $r_i$ and $r_j$ cannot increase, guaranteeing that the visibility graph remains connected.
	\qed
\end{proof}

Note however that while the \tangentialNormal/ property is a sufficient condition for ensuring connectivity, it is not a necessary condition.
In particular, \gtcThreeDContShort/ is not \tangentialNormal/ but is still able to maintain the connectivity of the visibility graph.

\subsection{\moveOnAngleMinimizer/}
Next, we introduce a strategy based on the tangential-normal criterion.
It is based around the idea to find a movement vector that somehow causes a large decrease in distance to all neighbors.
Since a smaller angle causes a greater decrease in distance (according to Lemma \ref{lemma:cont_dist_change}), one intuitive approach might be to find a movement vector that minimizes the maximal angle to all neighbors on the convex hull.
To this end, the notion of an \emph{angle minimizer} will be introduced.
Let $V = \{ \vec{v_1},\vec{v}_2,...,\vec{v}_k \} \subset \mathbb{R}^3$ be a set of vectors that lie on one side of a plane through the origin.
Then the vector $\vec{x}^* = \argmin_{\vec{x} \in \mathbb{R}^3} \max_{\vec{v}_i \in V} \angle(\vec{x}, \vec{v}_i)$ is called an angle minimizer of $V$.

Now, we define a strategy in which each robot that is a corner of its local convex hull $r_i$ moves along the angle minimizer of the edges between itself and the robots in $\adj_t(i)$.
This strategy will be called \moveOnAngleMinimizer/ and is summarized in Algorithm \ref{alg:mam}.

\begin{algorithm}
	\caption{\moveOnAngleMinimizer/}
	\label{alg:mam}
	\begin{algorithmic}[1]
		\State \( \mathcal{R}_i(t) := \{\text{positions of robots visible from } r_i \text{, including } r_i \text{ at time } t\} \)
		\State \( CH(\mathcal{R}_i(t)) := \text{Convex hull of } r_i \text{'s neighborhood} \)
		\If{$r_i$ is on a corner of $CH(\mathcal{R}_i(t))$}
		\State \( \vec{x}^* = \argmin_{\vec{x} \in \mathbb{R}^3} \max_{r_j \in \adj_t(i)} \angle(\vec{x}, p_j(t) - p_i(t)) \)
		\State $r_i$ moves along $\vec{x}^*$ with speed $1$
		\Else
		\State $r_i$ does not move
		\EndIf
	\end{algorithmic}
\end{algorithm}

Note that if a robot $r_i$'s local convex hull is two-dimensional, the angle minimizer is identical to the angle bisector of the inner angle at $r_i$.
Therefore, this strategy can also be viewed as a generalization of \mobs/ for two-dimensional continuous gathering \cite{conf/antsw/GordonWB04}, for which Kempkes et al.\ \cite{journals/topc/DegenerKKH15} could show an optimal gathering time of $\Theta(n)$.

It will now be shown that the presented strategy is both a tangential-normal and a contracting strategy.
By Lemma \ref{lemma:tang_plane_angles} and the existence of a tangential plane, we already know that there is a possible movement vector that has an angle of less than $\pi/2$ to all neighbors on the local convex hull
Therefore, the same must hold for $\vec{x}^*$, immediately showing that \moveOnAngleMinimizer/ is a tangential-normal strategy.

\begin{lemma}
	\label{lemma:mam_connectivity}
	\moveOnAngleMinimizer/ is a tangential-normal strategy.
\end{lemma}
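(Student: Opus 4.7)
The plan is to exploit the characterization of tangential planes provided by Lemma~\ref{lemma:tang_plane_angles}: the normal $\vec{n}$ of a plane through the corner $p_i$ of a convex polyhedron is a tangential plane normal if and only if $\angle(\vec{n}, e) < \pi/2$ for every edge $e$ incident to $p_i$. Since the \moveOnAngleMinimizer/ strategy selects the vector $\vec{x}^*$ that minimizes $\max_{r_j \in \adj_t(i)} \angle(\vec{x}, p_j(t) - p_i(t))$, it suffices to show that this min–max value is strictly below $\pi/2$ and then invoke the ``if'' direction of Lemma~\ref{lemma:tang_plane_angles} to conclude that $\vec{x}^*$ itself is the normal of a tangential plane through $p_i$.

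First I would observe that the edges of $CH(\mathcal{R}_i(t))$ incident to $p_i$ are precisely the segments from $p_i$ to the points in $\adj_t(i)$. Hence the edge directions appearing in Lemma~\ref{lemma:tang_plane_angles} coincide exactly with the vectors $p_j(t) - p_i(t)$ ranging over $r_j \in \adj_t(i)$, so the condition characterizing tangential plane normals is exactly the condition $\max_{r_j \in \adj_t(i)} \angle(\vec{x}, p_j(t) - p_i(t)) < \pi/2$.

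Next I would argue that a tangential plane through $p_i$ exists. Because $r_i$ is a corner of its local convex hull, one may, for instance, take any face of $CH(\mathcal{R}_i(t))$ incident to $p_i$ and tilt it slightly inward so that the resulting plane meets the polyhedron only at $p_i$; the ``only if'' direction of Lemma~\ref{lemma:tang_plane_angles} then guarantees that the corresponding normal $\vec{n}$ satisfies $\angle(\vec{n}, p_j(t) - p_i(t)) < \pi/2$ for every $r_j \in \adj_t(i)$. Therefore the value of the objective $\max_{r_j \in \adj_t(i)} \angle(\,\cdot\,, p_j(t)-p_i(t))$ at $\vec{n}$ is strictly less than $\pi/2$, and since $\vec{x}^*$ is a minimizer of this objective, the same bound holds at $\vec{x}^*$.

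Applying the ``if'' direction of Lemma~\ref{lemma:tang_plane_angles} to $\vec{x}^*$ then yields that the plane through $p_i$ with normal $\vec{x}^*$ is a tangential plane with respect to $CH(\mathcal{R}_i(t))$. Since this holds for every robot that is a corner of its own local convex hull (and the remaining robots are stationary by the definition of \moveOnAngleMinimizer/), the strategy matches Definition~\ref{def:tang_normal}. I do not expect any significant obstacle in this proof; the only minor point worth a sentence is to justify the existence of at least one tangential plane, which follows from the corner assumption and the convexity of $CH(\mathcal{R}_i(t))$. A potentially fiddly edge case is when $CH(\mathcal{R}_i(t))$ is lower-dimensional (e.g.\ a polygon or a segment), but the same reasoning goes through: one can always tilt a supporting hyperplane of the lower-dimensional hull so that it meets the hull only at $p_i$, so the existence argument is unaffected.
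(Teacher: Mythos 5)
Your proof is correct and follows essentially the same route as the paper: the existence of a tangential plane at $p_i$ together with Lemma~\ref{lemma:tang_plane_angles} gives some vector with all angles to the incident edges strictly below $\pi/2$, so the minimizer $\vec{x}^*$ satisfies the same bound and is itself a tangential plane normal by the converse direction of that lemma. Your extra remarks (identifying the incident edges with the vectors $p_j(t)-p_i(t)$ for $r_j \in \adj_t(i)$ and the degenerate lower-dimensional hull case) only make explicit what the paper leaves implicit.
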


\paragraph{\textbf{Computing $\vec{x}^*$}}
In order to see that \moveOnAngleMinimizer/ is also a contracting strategy, we look at a method to compute the angle minimizer.

Let $\hat{\vec{v}} = \vec{v} / ||\vec{v}||$ denote the respective normalized vector of $\vec{v}$ and let $\hat{V} = \{\hat{\vec{v}} \;|\; \vec{v} \in V\}$ for a set $V$ of vectors.
Then the following holds:


\begin{lemma}
	\label{lemma:angle_minimizer}
	Let $V \subset \mathbb{R}^3$ be a set of vectors that all lie on one side of a plane through the origin.
	The center $\vec{c}$ of the smallest enclosing sphere of $\hat{V}$ is an angle minimizer of $V$.
\end{lemma}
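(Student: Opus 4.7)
The plan is to reduce the angle-minimization problem to a statement about unit vectors, since angles depend only on directions. First I would observe that if we set $\hat{\vec{c}} := \vec{c}/\|\vec{c}\|$ (noting $\vec{c} \neq \vec{0}$ because the $\hat{\vec{v}_i}$ all lie strictly to one side of a plane through the origin, so no convex combination can hit the origin), then it suffices to show that $\hat{\vec{c}}$ maximizes $\min_{\hat{\vec{v}_i}} \vec{x} \cdot \hat{\vec{v}_i}$ over unit vectors $\vec{x}$, since $\cos$ is monotonically decreasing on $[0,\pi]$.

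The key computation I would carry out is an identity exploiting that the $\hat{\vec{v}_i}$ lie on the unit sphere. Let $\hat{\vec{v}_{i_1}},\ldots,\hat{\vec{v}_{i_k}}$ be the touching points of the SES (those on its boundary). By Proposition~\ref{prop:conv_comb}, $\vec{c} = \sum_j \lambda_j \hat{\vec{v}_{i_j}}$ is a convex combination. Expanding $\|\vec{c} - \hat{\vec{v}_{i_j}}\|^2 = r^2$ with $\|\hat{\vec{v}_{i_j}}\|=1$ gives $\vec{c}\cdot \hat{\vec{v}_{i_j}} = (\|\vec{c}\|^2 + 1 - r^2)/2$, a value independent of $j$. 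Plugging this into $\|\vec{c}\|^2 = \vec{c}\cdot\vec{c} = \sum_j \lambda_j (\vec{c}\cdot \hat{\vec{v}_{i_j}})$ yields the clean identity $\|\vec{c}\|^2 + r^2 = 1$, hence $\hat{\vec{c}}\cdot \hat{\vec{v}_{i_j}} = \|\vec{c}\|$ for every touching point. For any non-touching $\hat{\vec{v}_i}$ we have $\|\vec{c}-\hat{\vec{v}_i}\|\leq r$, which rearranges to $\hat{\vec{c}}\cdot \hat{\vec{v}_i} \geq \|\vec{c}\|$. Thus $\min_i \hat{\vec{c}}\cdot \hat{\vec{v}_i} = \|\vec{c}\|$, achieved exactly on the touching points.

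Next I would bound an arbitrary competitor: for any unit vector $\vec{x}$, using the convex combination,
\[
\min_j \vec{x}\cdot \hat{\vec{v}_{i_j}} \;\leq\; \sum_j \lambda_j (\vec{x}\cdot \hat{\vec{v}_{i_j}}) \;=\; \vec{x}\cdot\vec{c} \;\leq\; \|\vec{c}\|,
\]
where the final inequality is Cauchy--Schwarz. Since $\min_i \vec{x}\cdot \hat{\vec{v}_i} \leq \min_j \vec{x}\cdot \hat{\vec{v}_{i_j}}$, we conclude $\min_i \vec{x}\cdot \hat{\vec{v}_i} \leq \|\vec{c}\| = \min_i \hat{\vec{c}}\cdot \hat{\vec{v}_i}$. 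Equivalently, $\max_i \angle(\vec{x},\hat{\vec{v}_i}) \geq \max_i \angle(\hat{\vec{c}},\hat{\vec{v}_i})$, which is exactly the angle-minimizer property. Finally, since $\angle(\vec{c},\hat{\vec{v}_i}) = \angle(\hat{\vec{c}},\hat{\vec{v}_i}) = \angle(\hat{\vec{c}},\vec{v}_i)$, the statement carries over from the normalized to the original vectors.

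The main obstacle I anticipate is establishing the identity $\|\vec{c}\|^2 + r^2 = 1$; without it, one only gets the weaker bound $\hat{\vec{c}}\cdot \hat{\vec{v}_{i_j}} = (\|\vec{c}\|^2+1-r^2)/(2\|\vec{c}\|)$, which does not immediately match the Cauchy--Schwarz upper bound $\|\vec{c}\|$ for competitors. The fact that the touching points all lie on the common unit sphere is what makes these two quantities coincide, and this is precisely where the hypothesis that $V$ is first normalized is used. A minor but necessary sanity check is that $\vec{c}\neq\vec{0}$, which is where the half-space assumption on $V$ (inherited from the one-sidedness hypothesis) enters.
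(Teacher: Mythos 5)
Your proof is correct, but it takes a genuinely different route from the paper. The paper argues geometrically: after normalizing, it considers for each candidate $\vec{x}$ the minimal spherical cap of the unit sphere centred on $\vec{x}$ containing all $\hat{\vec{v}}_i$, observes that the maximal angle equals $\sin^{-1}$ of the cap's base radius, and concludes that minimizing the maximal angle is the same as finding the smallest such cap, which is given by the \ac{ses} of $\hat{V}$ (using, as you do, that the plane hypothesis guarantees a direction with all angles at most $\pi/2$). Your argument is instead algebraic and duality-flavoured: the identity $\|\vec{c}\|^2 + r^2 = 1$ (which follows correctly from expanding $\|\vec{c}-\hat{\vec{v}}_{i_j}\|^2 = r^2$ and pairing with the convex combination $\vec{c}=\sum_j\lambda_j\hat{\vec{v}}_{i_j}$ from Proposition~\ref{prop:conv_comb}) gives $\hat{\vec{c}}\cdot\hat{\vec{v}}_i \geq \|\vec{c}\|$ with equality on the support points, while the convex combination plus Cauchy--Schwarz caps every competitor's min dot product at $\|\vec{c}\|$; monotonicity of $\arccos$ then transfers this to angles. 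What your approach buys is a fully explicit optimum value $\arccos\|\vec{c}\|$ for the min-max angle and an argument that avoids the cap-geometry step (in particular the implicit restriction to angles at most $\pi/2$ needed for the $\sin^{-1}$ identification); you are also more careful than the paper in noting that strict one-sidedness forces $\vec{c}\neq\vec{0}$, which is indeed what the intended application (edges admitting a tangential plane, Lemma~\ref{lemma:tang_plane_angles}) provides. The paper's cap argument, in turn, is shorter and gives the geometric picture directly. Both proofs are valid.
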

\begin{proof}
	Let $\vec{x}$ be a vector such that $\angle(\vec{x}, \vec{v}_i) \leq \pi/2$ for all $\vec{v}_i \in V$.
	Such a vector exists, since there is a plane such that all $\vec{v}_i$ lie on one side of this plane.

	Now consider the normalized vectors $\hat{\vec{v}}_i$.
	They lie on the surface of the unit sphere centred on the origin.
	Let $C_{\vec{x}}$ be the minimal spherical cap centred on the vector $\vec{x}$ such that all the $\hat{\vec{v}}_i$ lie on its surface.
	The vector $\hat{\vec{v}}_j \in \hat{V}$ with the maximal angle to $\vec{x}$ lies on the edge of the base of $C_{\vec{x}}$.
	The maximal angle can now be computed using the radius $r$ of $C_{\vec{x}}$ as \( \angle(\vec{x}, \hat{\vec{v}}_j) = \sin^{-1} r \).

	Since $\sin^{-1}$ is monotonically increasing on the interval $[0,1]$, finding the angle minimizer $\vec{x}^*$ now amounts to finding the center $\vec{c}$ of a spherical cap with minimal radius, which can be achieved by computing the smallest enclosing sphere of $\hat{V}$. \qed
\end{proof}

By applying the fact that the \ac{ses} of $\hat{V}$ is a convex combination of $\hat{V}$ (Proposition \ref{prop:conv_comb}), this lemma together with Lemma \ref{lemma:mam_connectivity} immediately yields that \moveOnAngleMinimizer/ is also a contracting strategy.

\begin{theorem}
	\label{thm:mam_contracting_tang_normal}

	\moveOnAngleMinimizer/ is a tangential-normal and a contracting strategy.
	Thus, it gathers the robots in time $\mathcal{O}\left(\Delta \cdot n^{3/2}\right)$.
\end{theorem}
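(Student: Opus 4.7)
The plan is to verify the two structural properties separately and then invoke the runtime bound for contracting strategies proved earlier.

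First, the \tangentialNormal/ property is already supplied by \Cref{lemma:mam_connectivity}, which in turn relies on the existence of a tangential plane at each corner of the local convex hull and the characterization in \Cref{lemma:tang_plane_angles}. So the first move is simply to cite this lemma and note that, as a consequence of \Cref{thm:tang_normal_finite}, the visibility graph $\ubg{t}$ stays connected for all $t \geq 0$.

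The real work is showing that \moveOnAngleMinimizer/ is contracting in the sense of \Cref{def:contracting}, i.e. that every robot $r_i$ sitting on a corner of the \emph{global} convex hull $\ch_t$ moves with speed $1$ in a direction that points into $\ch_t$. The key observation is that if $r_i$ is a corner of $\ch_t$, then $r_i$ is also a corner of its own local convex hull $\ch(\mathcal{R}_i(t))$, since every robot in $\mathcal{R}_i(t)$ lies in $\ch_t$ and $r_i$ is an extreme point of $\ch_t$. Therefore $r_i$ is active and moves along the angle minimizer $\vec{x}^*$ of the vectors $\{p_j(t) - p_i(t) : r_j \in \adj_t(i)\}$. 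Applying \Cref{lemma:angle_minimizer}, this $\vec{x}^*$ is the center of the smallest enclosing sphere of the normalized vectors $\widehat{p_j(t) - p_i(t)}$. Now \Cref{prop:conv_comb} writes that center as a convex combination, with nonnegative coefficients summing to one, of some of those normalized vectors. Scaling positive coefficients by the (positive) norms $\|p_j(t) - p_i(t)\|$ shows that $\vec{x}^*$ is a nonnegative linear combination of the edge vectors $p_j(t) - p_i(t)$ themselves. Each such edge vector points from $r_i$ to another robot position $p_j(t) \in \ch_t$, hence into the convex cone of directions pointing into $\ch_t$ at $r_i$. Because that cone is convex and closed under nonnegative combinations, $\vec{x}^*$ lies in it, and since $r_i$ moves along $\vec{x}^*$ with unit speed, the contracting condition is satisfied.

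Having established both properties, the runtime follows by direct application of the upper bound theorem for three-dimensional contracting strategies: gathering occurs in time $\mathcal{O}(\Delta \cdot n^{3/2})$, with the $\mathcal{O}(\Delta)$ collinear tail absorbed into the bound. The main obstacle I anticipate is the argument that $\vec{x}^*$ points inward at a global-hull corner; everything else is bookkeeping, but one must be careful that the angle-minimizer description from \Cref{lemma:angle_minimizer} applies (its hypothesis that all vectors lie on one side of a plane through the origin is exactly the existence of a tangential plane at the corner, which \Cref{lemma:tang_plane_angles} guarantees).
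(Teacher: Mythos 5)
Your proposal is correct and follows essentially the same route as the paper: tangential-normality via \Cref{lemma:mam_connectivity}, and the contracting property via \Cref{lemma:angle_minimizer} combined with \Cref{prop:conv_comb}, followed by the general $\mathcal{O}\left(\Delta \cdot n^{3/2}\right)$ bound for contracting strategies. You merely spell out the steps the paper leaves implicit in its ``immediately yields'' (that a corner of $\ch_t$ is a corner of its local hull, and that the convex combination of normalized vectors rescales to a nonnegative combination of edge vectors pointing into $\ch_t$), which is a faithful elaboration rather than a different argument.
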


%
%
%
\bibliographystyle{splncs04}
\bibliography{fullArXiv}

\begin{thebibliography}{10}
\providecommand{\url}[1]{\texttt{#1}}
\providecommand{\urlprefix}{URL }
\providecommand{\doi}[1]{https://doi.org/#1}

\bibitem{localgathering}
Ando, H., Suzuki, Y., Yamashita, M.: {Formation and agreement problems for
  synchronous mobile robots with limited visibility}. In: Proceedings of the
  1995 IEEE International Symposium on Intelligent Control, ISIC 1995. pp.
  453--460. {IEEE} (Aug 1995). \doi{10.1109/ISIC.1995.525098}

\bibitem{DBLP:conf/icdcn/BhagatCM18}
Bhagat, S., Chaudhuri, S.G., Mukhopadhyaya, K.: Gathering of opaque robots in
  3d space. In: Proceedings of the 19th International Conference on Distributed
  Computing and Networking, {ICDCN} 2018, Varanasi, India, January 4-7, 2018.
  pp. 2:1--2:10 (2018). \doi{10.1145/3154273.3154322}

\bibitem{journals/topc/DegenerKKH15}
Degener, B., Kempkes, B., Kling, P., {Meyer auf der Heide}, F.: Linear and
  competitive strategies for continuous robot formation problems. {TOPC}
  \textbf{2}(1),  2:1--2:18 (2015). \doi{10.1145/2742341}

\bibitem{DBLP:conf/spaa/DegenerKLHPW11}
Degener, B., Kempkes, B., Langner, T., {Meyer auf der Heide}, F., Pietrzyk, P.,
  Wattenhofer, R.: A tight runtime bound for synchronous gathering of
  autonomous robots with limited visibility. In: Rajaraman, R., {Meyer auf der
  Heide}, F. (eds.) {SPAA} 2011: Proceedings of the 23rd Annual {ACM} Symposium
  on Parallelism in Algorithms and Architectures, San Jose, CA, USA, June 4-6,
  2011 (Co-located with {FCRC} 2011). pp. 139--148. {ACM} (2011).
  \doi{10.1145/1989493.1989515}

\bibitem{10.1287/mnsc.19.1.96}
Elzinga, D.J., Hearn, D.W.: The minimum covering sphere problem. Manage. Sci.
  \textbf{19}(1),  96–104 (Sep 1972). \doi{10.1287/mnsc.19.1.96}

\bibitem{series/lncs/11340}
Flocchini, P., Prencipe, G., Santoro, N. (eds.): Distributed Computing by
  Mobile Entities, Current Research in Moving and Computing, Lecture Notes in
  Computer Science, vol. 11340. Springer (2019).
  \doi{10.1007/978-3-030-11072-7}

\bibitem{conf/antsw/GordonWB04}
Gordon, N., Wagner, I.A., Bruckstein, A.M.: Gathering multiple robotic a(ge)nts
  with limited sensing capabilities. In: Dorigo, M., Birattari, M., Blum, C.,
  Gambardella, L.M., Mondada, F., St{\"u}tzle, T. (eds.) Ant Colony
  Optimization and Swarm Intelligence. pp. 142--153. Springer Berlin
  Heidelberg, Berlin, Heidelberg (2004). \doi{10.1007/978-3-540-28646-2\_13}

\bibitem{Kling0219}
Kling, P., {Meyer auf der Heide}, F.: Continuous protocols for swarm robotics.
  In: Distributed Computing by Mobile Entities, Current Research in Moving and
  Computing, pp. 317--334 (2019). \doi{10.1007/978-3-030-11072-7\_13}

\bibitem{DBLP:conf/algosensors/LiMHP17}
Li, S., Markarian, C., {Meyer auf der Heide}, F., Podlipyan, P.: A continuous
  strategy for collisionless gathering. In: Algorithms for Sensor Systems -
  13th International Symposium on Algorithms and Experiments for Wireless
  Sensor Networks, {ALGOSENSORS} 2017, Vienna, Austria, September 7-8, 2017,
  Revised Selected Papers. pp. 182--197 (2017).
  \doi{10.1007/978-3-319-72751-6\_14}

\bibitem{DBLP:conf/algosensors/LiHP16}
Li, S., {Meyer auf der Heide}, F., Podlipyan, P.: The impact of the gabriel
  subgraph of the visibility graph on the gathering of mobile autonomous
  robots. In: Algorithms for Sensor Systems - 12th International Symposium on
  Algorithms and Experiments for Wireless Sensor Networks, {ALGOSENSORS} 2016,
  Aarhus, Denmark, August 25-26, 2016, Revised Selected Papers. pp. 62--79
  (2016). \doi{10.1007/978-3-319-53058-1\_5}

\bibitem{DBLP:conf/sss/PoudelS17}
Poudel, P., Sharma, G.: Universally optimal gathering under limited visibility.
  In: Stabilization, Safety, and Security of Distributed Systems - 19th
  International Symposium, {SSS} 2017, Boston, MA, USA, November 5-8, 2017,
  Proceedings. pp. 323--340 (2017). \doi{10.1007/978-3-319-69084-1\_23}

\bibitem{DBLP:conf/opodis/TomitaYKY17}
Tomita, Y., Yamauchi, Y., Kijima, S., Yamashita, M.: Plane formation by
  synchronous mobile robots without chirality. In: 21st International
  Conference on Principles of Distributed Systems, {OPODIS} 2017, Lisbon,
  Portugal, December 18-20, 2017. pp. 13:1--13:17 (2017).
  \doi{10.4230/LIPIcs.OPODIS.2017.13}

\bibitem{DBLP:conf/podc/YamauchiUY16}
Yamauchi, Y., Uehara, T., Yamashita, M.: Brief announcement: Pattern formation
  problem for synchronous mobile robots in the three dimensional euclidean
  space. In: Proceedings of the 2016 {ACM} Symposium on Principles of
  Distributed Computing, {PODC} 2016, Chicago, IL, USA, July 25-28, 2016. pp.
  447--449 (2016). \doi{10.1145/2933057.2933063}

\end{thebibliography}

\newpage
\appendix
\section{Analysis of \gtcThreeD/} \label{section:3dGtcAnalysis}

In this section, we show that the upper bound of $\mathcal{O}(n^2)$ that Degener et al. \cite{DBLP:conf/spaa/DegenerKLHPW11} presented for \gtc/ also holds for \gtcThreeD/.
As such, the analysis largely consists of a generalization of the proof by Degener et al.\ for three dimensions and therefore, each lemma provides an analogous result to a similar lemma used in the original proof for two dimensions.

\begin{figure}
	\label{fig:spherical_cap}
	\centering
	\includegraphics[scale=1.0]{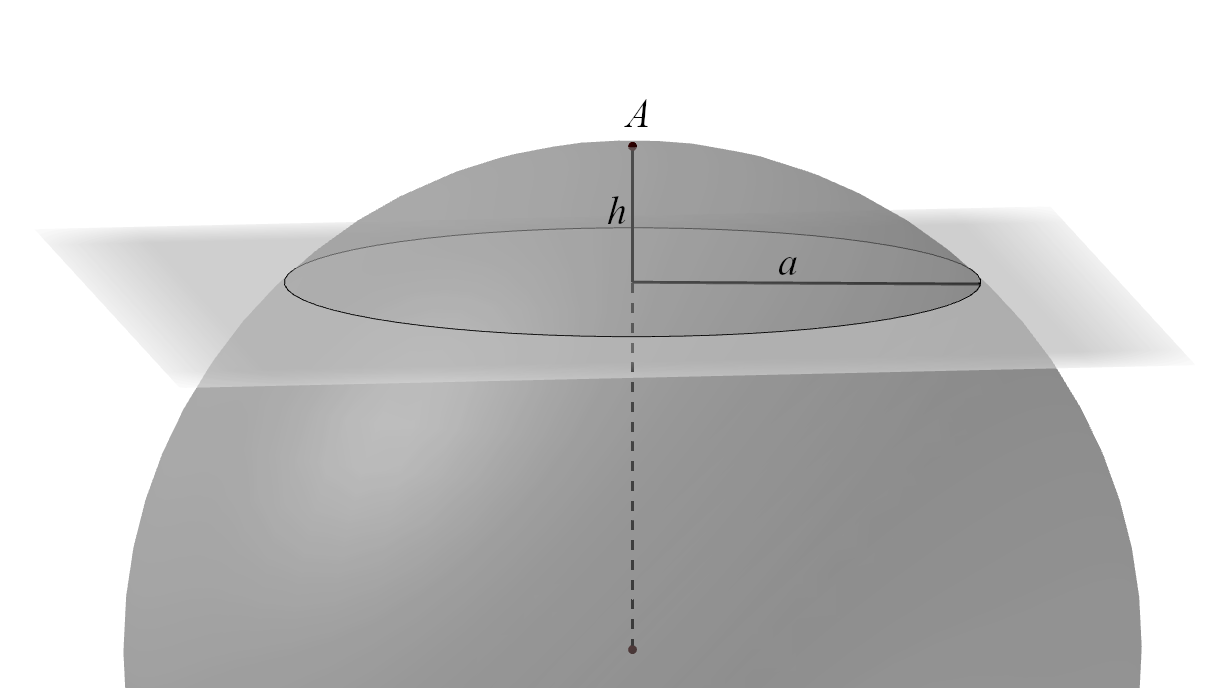}
	\caption{A plane dividing a sphere into two spherical caps. The smaller cap on top has apex $A$, height $h$ and base radius $a$.}
\end{figure}

Before beginning with the main analysis, some useful general properties of \ac{ses}s will be stated.
Two concepts that will be used throughout the proof are the notions of \emph{convex combinations} and \emph{spherical caps}.

Given a finite set of points \(\mathcal{P} = \{p_1,\dots,p_n\}\), a point $x$ is a convex combination of $\mathcal{P}$, if there are scalar coefficients $\alpha_i \geq 0$, such that $x = \sum_{i=1}^n \alpha_i p_i$ and $\sum_{i=1}^n \alpha_i = 1$.
The set of all convex combinations of $\mathcal{P}$ is identical to the convex hull of $\mathcal{P}$.
A \emph{spherical cap} is a region of a sphere cut off by a plane and is illustrated in Fig. \ref{fig:spherical_cap}.
It consists of a circular base with a radius $a$ which is formed by the intersection of the plane and the sphere and a curved surface with an apex $A$.
The distance between the base and the apex is called the height $h$ of the spherical cap.

\smallestEnclosingSpheres*

\begin{definition}
	\label{Def:pf_cap}
	Let $\mathcal{S}$ be a sphere and $\mathcal{P}$ a point set.
	A spherical cap of $\mathcal{S}$ whose curved surface does not contain any points from $\mathcal{P}$ is called a \emph{point-free cap}.
\end{definition}

Using this definition, the following Lemma can be stated about \ac{ses}s:

\begin{lemma}
	\label{Lemma:pf_cap_bound}
	Let $\mathcal{S}$ be the \ac{ses} with radius $r$ of a set of $n \geq 2$ points.
	Then there is no point-free cap of $\mathcal{S}$ with height $h$, such that $h > r$.
\end{lemma}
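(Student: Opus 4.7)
The plan is to argue by contradiction using Proposition \ref{prop:conv_comb}. Suppose, for contradiction, that there is a point-free spherical cap $C$ of $\mathcal{S}$ with height $h>r$. Place coordinates so that the center $c$ of $\mathcal{S}$ sits at the origin and the apex of $C$ sits at $(0,0,r)$. With this setup the cap's base plane is $\{z=r-h\}$, which lies strictly below the equator because $h>r$ forces $r-h<0$.

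Next I would use the point-free property to confine all surface points of $\mathcal{P}$ to one half of the sphere. The curved surface of $C$ is precisely $\{(x,y,z)\in\mathcal{S}:z\ge r-h\}$, and by assumption it contains no point of $\mathcal{P}$. Hence every point of $\mathcal{P}$ that lies on $\mathcal{S}$ satisfies $z<r-h<0$. (The same conclusion holds under either interpretation of whether the boundary circle belongs to the curved surface, because $r-h$ is strictly negative.)

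Now I would invoke Proposition \ref{prop:conv_comb}: the center $c$ of $\mathcal{S}$ can be written as a convex combination $c=\sum_{i=1}^{k}\alpha_i\,p_{j_i}$ with $k\le 4$, $\alpha_i\ge 0$, $\sum_i\alpha_i=1$, and each $p_{j_i}\in\mathcal{P}$ on the surface of $\mathcal{S}$. By the previous step, each $p_{j_i}$ has $z$-coordinate strictly less than $r-h<0$, so the $z$-coordinate of any convex combination of them is also strictly less than $r-h<0$. But $c$ is the origin, whose $z$-coordinate is $0$. This contradicts the chosen coordinate system and completes the argument.

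The only delicate point is ensuring the hypothesis $n\ge 2$ of the lemma is actually used in applying Proposition \ref{prop:conv_comb}; since $n\ge 2$, the smallest enclosing sphere has at least two defining points on its boundary, so the convex combination expressing $c$ is nontrivial and the geometric argument in the previous step goes through. No case analysis on $k\in\{2,3,4\}$ is needed: the contradiction follows purely from the sign of the $z$-coordinates, making the main obstacle merely a careful choice of coordinates rather than any nontrivial computation.
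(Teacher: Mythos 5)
Your proof is correct and follows essentially the same route as the paper: place the center at the origin with the apex of the point-free cap on a coordinate axis, observe that $h>r$ forces every point of $\mathcal{P}$ on the surface of $\mathcal{S}$ to have strictly negative coordinate along that axis, and conclude via Proposition~\ref{prop:conv_comb} that the center cannot be a convex combination of such points. The only difference is cosmetic (the $z$-axis instead of the $x$-axis), and your side remark about $n\geq 2$ is harmless since the sign argument works regardless of how many points define the sphere.
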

\begin{proof}
	Let $\mathcal{S}$ be the \ac{ses} of a set of $n \geq 2$ points with center $c$ and radius $r$.
	Assume there is a point-free cap with height $h > r$.

	Without loss of generality, rotate and translate the coordinate system such that $c$ lies at the origin and the apex of the point-free cap lies on the positive $x$-axis.
	Let $c_x$ be the $x$-component of $c$.
	Then for any point $p$ on the surface of $\mathcal{S}$, its $x$-component $p_x$ is at most
	\begin{align}
	p_x < c_x - (h - r) < c_x.
	\end{align}
	Therefore, $c$ cannot be a convex combination of points on the surface of $\mathcal{S}$.
	By Prop.~\ref{prop:conv_comb}, $c$ cannot have been the center of a \ac{ses}, leading to a contradiction.
	\qed
\end{proof}

\subsubsection{Progress Measures}

In order to determine the progress of the gathering, two measures will be used.
Firstly, consider what happens if two robots move to the exact same position at the same time:
Since the strategy is deterministic, the two robots will always observe the same neighborhood and therefore always compute the same target point.
We say two such robots have \emph{merged}.
Starting with $n$ robots, there can clearly be at most $n-1$ rounds with merging events until all robots have gathered.
This number of merges will be used as the first progress measure.

Now let $\mathcal{N}_t$ be the global \ac{ses} around a center $\mathbf{M}_t$ with radius $R_t$ in some round $t \geq 0$.
The radius $R_t$ can be used as a second progress measure:
The robots always move towards the centers of their respective (local) \ac{ses}.
By Prop. \ref{prop:conv_comb}, these centers always lie within the convex hulls of the respective neighborhoods and therefore also within the global convex hull.
Since robots do not leave the global convex hull, they also cannot leave $\mathcal{N}_t$, meaning $R_t$ can not increase with time.

On the other hand, once $R_t$ is smaller than $1/2$, all robots are within distance of at most $1$ of each other, meaning each robot sees every other robot.
At this point, they will all compute the same target point $\mathbf{M}_t$ and be able to move to it since the limiting spheres $\mathcal{B}_j(t)$ with radii $1/2$ must contain $\mathbf{M}_t$ for all pairs of robots $r_i$ and $r_j$.

The overall idea of the proof is now to show that during a constant number of rounds, either two robots merge, or $R_t$ decreases by at least $\Omega(1/n)$.
Since the initial radius $R_0$ is $\mathcal{O}(n)$ if the initial configuration is connected, this yields a total gathering time of $\mathcal{O}(n^2)$ rounds.

During the rest of the analysis, consider some arbitrary but fixed round $t_0$ and let $\mathcal{N}$ and $R$ denote $\mathcal{N}_{t_0}$ and $R_{t_0}$ respectively.
In order to show the progress of the two measures mentioned above, the analysis will focus on a certain region of $\mathcal{N}$.
Let $P$ be an arbitrary point on the surface of $\mathcal{N}$ and define a spherical cap $C$ with apex $P$ as follows (see Fig. \ref{fig:C_def} for an illustration of the construction):
Choose the height $h$ of $C$, such that the slant height of the inscribed cone of $C$ is $1/8$, i.e. such that $h^2 + a^2 = ( 1/8 )^2$.
Note that this causes the radius $a$ to be bounded by $1/8$.
Therefore the diameter of $C$ as a whole is at most $1/4$.
$C$ is then separated into two parts by intersecting $\mathcal{N}$ with another plane that is parallel to the base of $C$, creating another spherical cap $C_1$ with height $h/2$ that also has $P$ as its apex.
The remaining spherical segment $C \setminus C_1$ is called $C_2$.

The goal is to now show that within two consecutive rounds $t_0$ and $t_0 + 1$, either at least two robots merge, or all robots leave $C_1$.
In order to quantify the progress made in case the latter happens, a bound on $h$ is needed, which will be provided by the following lemma:

\begin{lemma}
	\label{Lemma:height_bound}
	The spherical cap $C$ has a height $h$ of at least \(\frac{1}{64 \pi \cdot R} \in \Omega\big( \frac{1}{n} \big)\).
\end{lemma}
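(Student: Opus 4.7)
The plan is to explicitly compute $h$ from the defining condition and then compare it with the claimed bound. The condition on $C$ is that the slant height $s$ of its inscribed cone equals $1/8$, which translates to $h^2 + a^2 = 1/64$, where $a$ is the radius of the base of $C$.

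First I would set up coordinates so that the center of $\mathcal{N}$ lies at the origin and the apex $P$ lies on the positive $z$-axis. The base of $C$ is then a circle lying in the plane $z = R - h$. From $a^2 + (R-h)^2 = R^2$ we get the standard identity
\begin{equation*}
a^2 = h(2R - h).
\end{equation*}
Substituting this into the slant height condition $h^2 + a^2 = 1/64$ causes the $h^2$ terms to cancel:
\begin{equation*}
h^2 + h(2R - h) = 2Rh = \tfrac{1}{64},
\end{equation*}
so $h = \frac{1}{128 R}$ exactly.

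Next I would compare with the bound in the lemma. Since $\pi > 2$, we have $64 \pi > 128$, hence
\begin{equation*}
h \;=\; \frac{1}{128R} \;\geq\; \frac{1}{64 \pi R},
\end{equation*}
which is exactly the first part of the claim. For the asymptotic part, I would invoke the fact that $R$ is bounded in terms of $n$: since $\ubg{0}$ is connected, the initial configuration has (visibility-graph) diameter at most $n-1$, so the Euclidean diameter is at most $n-1$ and consequently $R_0 \leq (n-1)/2$. As already observed above the statement of the lemma, $R_t$ is non-increasing over time, so $R = R_{t_0} \leq R_0 = \mathcal{O}(n)$. Combining with $h = 1/(128R)$ yields $h \in \Omega(1/n)$.

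There is no real obstacle here: the key step is simply noticing that $h^2$ cancels when expressing $a^2$ via the sphere relation, which turns the slant-height equation into a linear equation in $h$. The only non-mechanical piece is remembering to invoke connectivity of $\ubg{0}$ (together with monotonicity of $R_t$) to bound $R$ by $\mathcal{O}(n)$ for the $\Omega(1/n)$ conclusion.
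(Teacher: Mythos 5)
Your proof is correct, and it takes a genuinely different and in fact cleaner route than the paper. The paper never computes $h$ exactly: it bounds the angle $\alpha$ between the cap's axis and the slant height $1/8$ by viewing $2\alpha$ as the internal angle of a regular polygon with side length $1/8$ inscribed in a great circle of $\mathcal{N}$ (so $m \leq 16\pi R$ vertices), and then uses the linear lower bound $\cos(x) \geq 1 - \tfrac{2}{\pi}x$ to get $h = \tfrac{\cos\alpha}{8} \geq \tfrac{1}{64\pi R}$ --- essentially transplanting the two-dimensional argument of Degener et al. Your approach instead combines the defining condition $h^2 + a^2 = 1/64$ with the spherical-cap identity $a^2 = h(2R-h)$, so the quadratic terms cancel and $h = \tfrac{1}{128R}$ exactly; this is both simpler and strictly stronger than the paper's bound (since $\tfrac{1}{128} > \tfrac{1}{64\pi}$), at the cost of not paralleling the 2D proof structure.

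One small inaccuracy in your final step: from Euclidean diameter at most $n-1$ you conclude $R_0 \leq (n-1)/2$, but a point set of diameter $\Delta$ in $\mathbb{R}^3$ can have smallest-enclosing-sphere radius up to $\sqrt{3/8}\,\Delta$ (Jung's theorem; e.g.\ a regular tetrahedron), so the factor $1/2$ is not justified. This does not affect your conclusion, since trivially $R_0 \leq \Delta \leq n-1$ (a ball of radius $\Delta$ around any robot encloses all robots), which together with monotonicity of $R_t$ still gives $R = \mathcal{O}(n)$ and hence $h \in \Omega(1/n)$; note the paper itself only asserts $R \leq n/2$ without further justification.
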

\begin{proof}
	The proof proceeds very similarly to that of the analogous lemma by Degener et al. \cite{DBLP:conf/spaa/DegenerKLHPW11} for the two-dimensional case.
	The goal is to use the angle $\alpha$ between the height $h$ of $C$ and the (known) slant height ($1/8$) of the inscribed cone of $C$ (see Fig. \ref{fig:C_def}) to compute a bound on $h$.
	In order to do this, consider a circle $K$ with radius $R$, with the same midpoint as $\mathcal{N}$ and containing $P$.
	The angle $\alpha$ can be considered to be half of the internal angle of a regular convex polygon with side length $1/8$ and whose $m$ vertices lie on $K$.
	Since the circumference of $K$ is $2 \pi R$, the number of vertices $m$ is bounded by $16 \pi R$.
	The sum of angles of such a convex polygon is now given by $(m - 2) \pi$.
	This yields the following bound on $\alpha$:
	\begin{align*}
	2\alpha &\leq \frac{(16 \pi R - 2) \pi}{16 \pi R} = \pi - \frac{1}{8R}\\
	\Leftrightarrow \alpha &\leq \frac{\pi}{2} - \frac{1}{16R}
	\end{align*}

	Using the fact that \(\cos(x) \geq -\frac{2}{\pi}x + 1\) for \(0 \leq x \leq \pi/2\), we can now obtain a bound on $h$:

	\begin{align*}
	h &= \frac{\cos(\alpha)}{8} \geq \frac{1}{8} \cdot \Big( -\frac{2}{\pi}\alpha + 1 \Big)\\
	&\geq \frac{1}{8} \cdot \Big( - \frac{2}{\pi} \cdot \Big(\frac{\pi}{2} - \frac{1}{16R}\Big) + 1 \Big)\\
	&= \frac{1}{64 \pi R}
	\end{align*}

	Since $\mathcal{N}$ is a \ac{ses}, its radius $R$ is at most $n/2$.
	Therefore, we have \(h \in \Omega\big(\frac{1}{n}\big)\).
	\qed
\end{proof}

\begin{figure}[htb]
	\label{fig:C_def}
	\centering
	\includegraphics[width = 1 \textwidth]{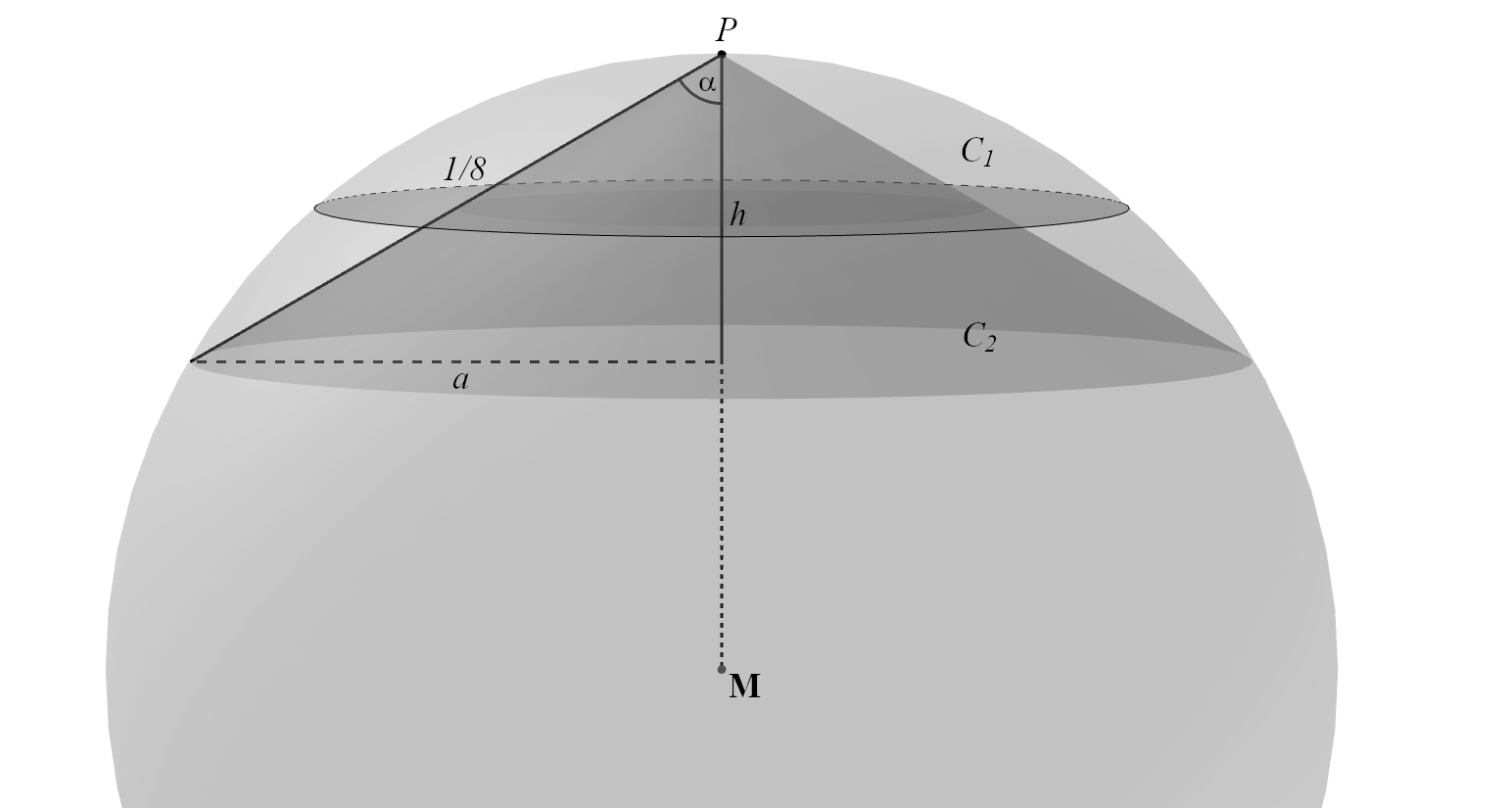}
	\caption{Illustration of the spherical cap $C$ with apex $P$, consisting of the smaller cap $C_1$ and the spherical segment $C_2$. The height $h$ is chosen such that the inscribed cone of $C$ has a slant height of $1/8$.}
\end{figure}

\subsubsection{Properties of Target Points}
First, some properties of \ac{ses} that can be computed by robots in and around $C$ will be given.
These will be useful later on to identify situations in which robots leave $C$, as well as ensuring that no further robots enter it.

\begin{figure}[tbh]
	\label{fig:SES_cap_bound}
	\centering
	\includegraphics[scale=1.2]{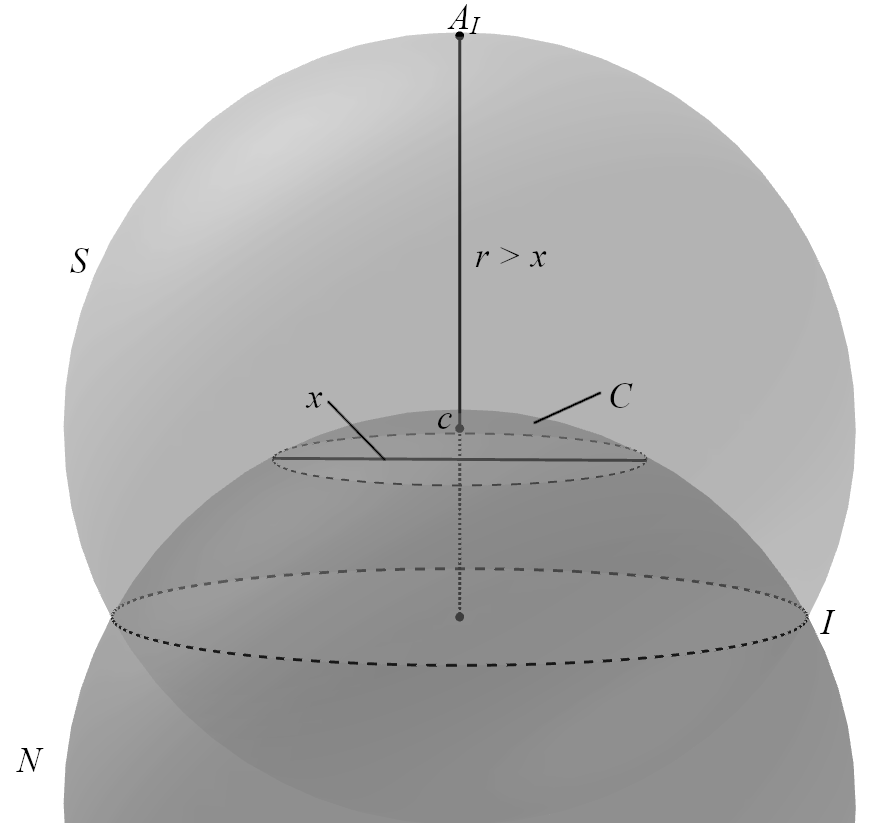}
	\caption{Illustration of the setting of Lemma \ref{Lemma:SES_cap_bound}. A sphere with its center $c$ in the spherical cap $C$ and a radius $r$ that is greater than the diameter of $x$ of $C$ cannot be a \ac{ses} of points in $\mathcal{N}$.}
\end{figure}

\begin{lemma}
	\label{Lemma:SES_cap_bound}
	Let $x$ be the diameter of a spherical cap $C$ of $\mathcal{N}$.
	Then any sphere $S$ with its center $c$ in $C$ and radius $r > x$ cannot be a \ac{ses} of points in $\mathcal{N}$.
\end{lemma}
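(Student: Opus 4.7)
The plan is a proof by contradiction against Lemma~\ref{Lemma:pf_cap_bound}. Assuming that $S$ is the SES of some point set $\mathcal{P}' \subseteq \mathcal{N}$, I will exhibit a point-free cap of $S$ whose height strictly exceeds $r$, contradicting the lemma. The cap I pick points outward from $\mathcal{N}$, i.e., its apex lies in the direction $\vec{n} := (c - \mathbf{M})/D$, where $\mathbf{M}$ and $R$ denote the center and radius of $\mathcal{N}$ and $D := \|c - \mathbf{M}\|$.

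First, I determine which portion of $\partial S$ can possibly lie in $\mathcal{N}$. Writing $q = c + r\vec{w}$ with $\vec{w}$ a unit vector, expanding $\|q - \mathbf{M}\|^2 = D^2 + 2rD\,\vec{w}\cdot\vec{n} + r^2$ shows that $q \in \mathcal{N}$ forces $\vec{w}\cdot\vec{n} \le -\alpha$, with $\alpha := (D^2 + r^2 - R^2)/(2rD)$. Consequently, the cap of $S$ with apex $c + r\vec{n}$ and base plane $\{q : (q - c)\cdot\vec{n} = -\alpha r\}$ has its curved surface entirely disjoint from $\mathcal{N}$, hence from $\mathcal{P}'$. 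The height of this cap is $r(1+\alpha)$, which strictly exceeds $r$ precisely when $\alpha > 0$, i.e., when $r > \sqrt{R^2 - D^2}$. The degenerate case $D=0$ arises only when $C$ is at least a hemisphere, forcing $x = 2R$; then $r > x$ already implies $\partial S \cap \mathcal{N} = \emptyset$, and Proposition~\ref{prop:conv_comb} immediately rules out $S$ being an SES.

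Everything therefore reduces to the geometric inequality $\sqrt{R^2 - D^2} \le x$, which together with $r > x$ yields $\alpha > 0$ and closes the argument via Lemma~\ref{Lemma:pf_cap_bound}. To prove it, I parametrize $C = \{q \in \mathcal{N} : (q - \mathbf{M})\cdot\vec{m} \ge k\}$ for an appropriate axis $\vec{m}$ and offset $k \in [-R, R]$; membership $c \in C$ yields $D \ge (c-\mathbf{M})\cdot\vec{m} \ge k$. For $k \ge 0$, a short extremal calculation identifies the diameter of $C$ with its base diameter $2\sqrt{R^2 - k^2}$, and $D \ge k$ gives $x = 2\sqrt{R^2-k^2} \ge \sqrt{R^2-D^2}$; for $k < 0$ the cap contains an antipodal pair of $\mathcal{N}$ (any diameter perpendicular to $\vec{m}$), so $x = 2R \ge \sqrt{R^2 - D^2}$. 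The only delicate step is verifying that for $k \ge 0$ the diameter of $C$ is realized on its base rather than between the apex and the base; once this extremal check is in place, the rest of the argument is routine.
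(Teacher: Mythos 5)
Your proposal is correct in substance and rests on the same skeleton as the paper's proof: assume $S$ is a smallest enclosing sphere of points in $\mathcal{N}$ and contradict Lemma~\ref{Lemma:pf_cap_bound} by exhibiting a point-free cap of $S$ whose height exceeds $r$. In fact your cap coincides with the paper's: your base plane $(q-c)\cdot\vec{n} = -\alpha r$ is precisely the plane of the circle in which the boundary spheres of $S$ and $\mathcal{N}$ meet. The difference lies in how the height bound is certified. The paper argues synthetically from $r > x$ that this intersection circle lies outside $C$ while $c$ lies inside $C$, so the cap reaches past $c$ and its height exceeds $r$; you instead compute the height as $r(1+\alpha)$ and reduce the whole lemma to $r > \sqrt{R^2 - D^2}$, which you then derive from $r > x$ via the auxiliary inequality $x \geq \sqrt{R^2 - D^2}$. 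This makes the argument fully quantitative and forces you to treat the degenerate case $D = 0$ explicitly, at the cost of a case analysis over the cap offset $k$; the paper's version is shorter but leans on the qualitative claim that the center of the intersection circle lies outside $C$. Three small remarks. First, the step you flag as delicate is unnecessary: you only need a lower bound on $x$, and since the base circle of $C$ is contained in $C$ you get $x \geq 2\sqrt{R^2-k^2}$ for free, so where the diameter of $C$ is attained is irrelevant. Second, if $r > R + D$ then $\alpha > 1$ and your base plane misses $S$; in that case $\partial S \cap \mathcal{N} = \emptyset$ and Proposition~\ref{prop:conv_comb} finishes exactly as in your degenerate case, so add that one line. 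Third, points of $\mathcal{N}$ (on its boundary sphere) may lie exactly on the base circle of your cap, where $\vec{w}\cdot\vec{n} = -\alpha$, so ``entirely disjoint'' is not literally true; shrink the cap to a base plane at $-\alpha' r$ with $0 < \alpha' < \alpha$, whose curved surface avoids $\mathcal{N}$ strictly and whose height $r(1+\alpha')$ still exceeds $r$. None of these affects the validity of your approach.
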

\begin{proof}
	The setting of the lemma is described in Fig. \ref{fig:SES_cap_bound}.
	Since the center of $S$ lies in $C$ and the radius $r$ of $S$ is greater than the diameter $x$ of $C$, every intersection point of $S$ and $\mathcal{N}$ lies outside of $C$.
	The (circular) intersection of $S$ and $\mathcal{N}$, \(I = S \cap \mathcal{N}\), therefore lies completely outside of $S$.
	Now let $C_I$ be the spherical cap of $S$ whose curved surface lies outside of $\mathcal{N}$ and that has $I$ as its base.
	Let $h_I$ be its height and $A_I$ be its apex.
	Note that $C_I$ is therefore a point-free cap w.r.t. any set of points lying in $\mathcal{N}$.

	Now consider the height $h_I$ of $C_I$:
	Since $I$ lies outside of $C$, so does its center $c_I$.
	The line segment between $A_I$ and $c_I$ therefore has to pass through $c$, which lies within $C$.
	Thus, we have that $h_I > r$.
	By Lemma \ref{Lemma:pf_cap_bound}, this means that $S$ cannot have been a \ac{ses} of points in $\mathcal{N}$.
	\qed
\end{proof}

Since the diameter of $C$ is at most $1/4$, this immediately yields the following corollary:

\begin{corollary}
	\label{Cor:C_center_radius}
	The radius of a \ac{ses} of a point set $\mathcal{P} \subseteq \mathcal{N}$ with its center in $C$ is at most $1/4$.
\end{corollary}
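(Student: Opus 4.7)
The plan is to deduce the corollary directly from \Cref{Lemma:SES_cap_bound} after pinning down the diameter of $C$ by a short geometric calculation. First I would note that $C$ was constructed so that the inscribed cone has slant height $1/8$, meaning the base radius $a$ and height $h$ of $C$ satisfy $a^2 + h^2 = (1/8)^2$. Hence $C$ is contained in the closed cylinder of radius $a$ and height $h$, whose diameter (realised by two diametrically opposite points on opposite circular ends) equals $\sqrt{(2a)^2 + h^2} = \sqrt{4a^2 + h^2} \le \sqrt{4(a^2+h^2)} = 2 \cdot 1/8 = 1/4$. In particular $\operatorname{diam}(C) \le 1/4$.

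With that bound in hand I would invoke \Cref{Lemma:SES_cap_bound} with $x := \operatorname{diam}(C) \le 1/4$. The lemma asserts that no sphere whose centre lies in $C$ and whose radius exceeds $x$ can be the \ac{ses} of a point set contained in $\mathcal{N}$. Contrapositively, if $\mathcal{S}$ is the \ac{ses} of some $\mathcal{P} \subseteq \mathcal{N}$ and the centre of $\mathcal{S}$ lies in $C$, then the radius of $\mathcal{S}$ is at most $1/4$, which is exactly the statement of the corollary.

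Since the geometric heavy lifting is already absorbed into \Cref{Lemma:SES_cap_bound}, the only step that requires any verification is the diameter estimate, and even that reduces to the one-line cylinder comparison above. Thus there is no serious obstacle: the corollary is essentially a packaging of \Cref{Lemma:SES_cap_bound} together with the explicit construction of $C$ given right before \Cref{Lemma:height_bound}.
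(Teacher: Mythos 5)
Your proposal is correct and matches the paper's argument: the corollary is obtained exactly as you say, by applying \Cref{Lemma:SES_cap_bound} with $x = \operatorname{diam}(C) \le 1/4$, a bound the paper already records when constructing $C$ (from $a \le 1/8$). Your cylinder computation is just a slightly more explicit (and valid) way of justifying that diameter bound, so the two proofs are essentially the same.
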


The next lemma will be used to show that a robot with a neighbor that is positioned far away from $C$ cannot compute a target point in $C$.

\begin{lemma}
	\label{Lemma:dist_point}
	Let \(\mathcal{P} \subseteq \mathcal{N}\) be a set of points and let $\mathbf{A}$ be a point in $C$ (not necessarily in $\mathcal{P}$).
	If there is a point \(\mathbf{B} \in \mathcal{P}\) in distance more than $1/2$ from $\mathbf{A}$, then the center of the \ac{ses} of $\mathcal{P}$ cannot lie in the cap $C$.
\end{lemma}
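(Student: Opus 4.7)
The plan is to argue by contradiction and combine two already-established facts: the diameter of $C$ is at most $1/4$ (by construction), and any \ac{ses} of a point set in $\mathcal{N}$ whose center lies in $C$ has radius at most $1/4$ (this is exactly \Cref{Cor:C_center_radius}).

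First I would suppose, for contradiction, that the center $c^*$ of the \ac{ses} of $\mathcal{P}$ lies in $C$, while there is a point $\mathbf{B} \in \mathcal{P}$ with $d(\mathbf{A}, \mathbf{B}) > 1/2$. Since $\mathcal{P} \subseteq \mathcal{N}$, \Cref{Cor:C_center_radius} applies and tells us that the radius $r^*$ of the \ac{ses} of $\mathcal{P}$ satisfies $r^* \leq 1/4$. In particular, $d(c^*, \mathbf{B}) \leq 1/4$ because $\mathbf{B}$ is enclosed by this sphere.

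Next, since both $\mathbf{A}$ and $c^*$ lie in the spherical cap $C$, and $C$ has diameter at most $1/4$ (by its construction with slant height $1/8$, so base radius $a \leq 1/8$ and height $h \leq 1/8$, giving diameter $\le 1/4$), we have $d(\mathbf{A}, c^*) \leq 1/4$. The triangle inequality then yields
\[
d(\mathbf{A}, \mathbf{B}) \;\leq\; d(\mathbf{A}, c^*) + d(c^*, \mathbf{B}) \;\leq\; \tfrac{1}{4} + \tfrac{1}{4} \;=\; \tfrac{1}{2},
\]
contradicting $d(\mathbf{A}, \mathbf{B}) > 1/2$. Hence $c^*$ cannot lie in $C$.

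There is no real obstacle here; the lemma is essentially a triangle-inequality consequence of the carefully chosen size of $C$ together with the structural bound from \Cref{Cor:C_center_radius}. The only thing worth double-checking in writing is that the diameter bound on $C$ is tight enough (slant height $1/8$ of the inscribed cone indeed forces the entire cap into a ball of diameter $1/4$), and that \Cref{Cor:C_center_radius} is applicable because $\mathcal{P} \subseteq \mathcal{N}$ so the \ac{ses} of $\mathcal{P}$ is itself an \ac{ses} of a point set in $\mathcal{N}$.
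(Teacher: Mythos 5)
Your proposal is correct and matches the paper's own proof essentially verbatim: both argue by contradiction, use \Cref{Cor:C_center_radius} to bound $d(c,\mathbf{B})\leq 1/4$, use the diameter bound of $C$ to get $d(c,\mathbf{A})\leq 1/4$, and conclude via the triangle inequality. Nothing further is needed.
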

\begin{proof}
	Assume the \ac{ses} $S$ of $\mathcal{P}$ has its center $c$ in $C$.
	Corollary \ref{Cor:C_center_radius} shows that the radius of $S$ can be at most $1/4$ and since $\mathbf{B}$ lies in $S$, we have \(d(c, \mathbf{B}) \leq 1/4\).
	Since $\mathbf{A}$ and $c$ both lie in $C$ (which has a diameter of at most $1/4$),
	we also have \(d(c, \mathbf{A}) \leq 1/4\).
	Applying the triangle inequality yields:
	\begin{align*}
	d(\mathbf{A}, \mathbf{B}) &\leq d(c, \mathbf{A}) + d(c, \mathbf{B}) \leq \frac{1}{4} + \frac{1}{4} = \frac{1}{2}
	\end{align*}
	Which is a contradiction to \(d(\mathbf{A}, \mathbf{B}) > 1/2\), meaning $c$ cannot lie in $C$.
	\qed
\end{proof}

The next lemma concerns local configurations of robots consisting of only a single robot in $C$ and a number of robots outside of $C$.
It shows that in such a scenario, the target point computed by the strategy must lie outside of $C_1$.

\begin{lemma}
	\label{Lemma:single_C1_point}
	The center of the \ac{ses} $\mathcal{S}$ of a non-empty point set \(\mathcal{P} \subseteq \mathcal{N} \setminus C \) and a point $\mathbf{A} \in C$ cannot lie in the spherical cap $C_1$.
\end{lemma}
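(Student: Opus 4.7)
The plan is to assume for contradiction that the SES center $c$ lies in $C_1$, and derive a geometric impossibility via the convex-combination characterization from Proposition~\ref{prop:conv_comb}. By Corollary~\ref{Cor:C_center_radius}, membership of $c$ in $C$ already forces the SES radius $r$ to be at most $1/4$, and Proposition~\ref{prop:conv_comb} writes $c$ as a convex combination $c = \alpha \mathbf{A} + \sum_i \alpha_i p_i$ of at most four surface points, with each $p_i \in \mathcal{P}$.

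First I would set up coordinates with the center of $\mathcal{N}$ at the origin and the apex $P$ of $C$ at $(0,0,R)$, so that $C_1 = \mathcal{N} \cap \{z \geq R - h/2\}$ and $\mathcal{N} \setminus C = \mathcal{N} \cap \{z < R - h\}$. Projecting the convex-combination identity onto the $z$-axis and using $z_\mathbf{A} \leq R$, $z_{p_i} < R - h$, and $z_c \geq R - h/2$, I would obtain
\[
R - h/2 \;\leq\; z_c \;<\; \alpha R + (1 - \alpha)(R - h) \;=\; R - (1 - \alpha)\, h,
\]
which (exploiting the strictness of $z_{p_i} < R - h$) forces $\alpha > 1/2$. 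Two degenerate cases must be dispatched separately: if $\mathbf{A}$ is not on the SES surface then $\alpha = 0$ and the displayed bound collapses to $z_c < R - h$, already contradicting $c \in C_1$; if no $p_i$ lies on the surface then $\alpha = 1$, $c = \mathbf{A}$, and $r = 0$, so $\mathcal{P}$ would collapse onto $\mathbf{A} \in C$, contradicting $\mathcal{P}$ being a non-empty subset of $\mathcal{N} \setminus C$.

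Having reduced to $1/2 < \alpha < 1$, I would set $\bar{p} := \sum_i \alpha_i p_i/(1 - \alpha)$, so that $c = \alpha \mathbf{A} + (1 - \alpha)\bar{p}$ and hence $c - \mathbf{A} = (1 - \alpha)(\bar{p} - \mathbf{A})$. Taking norms yields $\|\bar{p} - \mathbf{A}\| = r/(1 - \alpha) > 2r$ since $1 - \alpha < 1/2$. On the other hand, each $p_i$ lies in the closed ball of radius $r$ around $c$, so by convexity $\|\bar{p} - c\| \leq r$, and the triangle inequality gives $\|\bar{p} - \mathbf{A}\| \leq \|\bar{p} - c\| + \|c - \mathbf{A}\| \leq 2r$, contradicting the strict lower bound.

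The main obstacle I anticipate is the bookkeeping of the convex-combination sub-cases and confirming the strictness of $\alpha > 1/2$; once the axial coordinate captures the separation between $\mathbf{A} \in C$ and $\mathcal{P} \subseteq \mathcal{N} \setminus C$, the triangle-inequality finisher is essentially one line.
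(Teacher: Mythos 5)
Your proof is correct, and it takes a genuinely different route from the paper's. The paper argues via the structure of the defining set of the SES: it distinguishes whether $\mathbf{A}$ is among the (at most four) surface points whose convex combination gives the center, and then works through the cases $|\mathcal{P}'|=2,3,4$ separately with midpoint, perpendicular-bisector and circumsphere arguments (mirroring Degener et al.'s two-dimensional proof). You instead give a single unified argument: projecting the convex combination onto the axis of the cap forces the weight $\alpha$ of $\mathbf{A}$ to exceed $1/2$ whenever the center sits in $C_1$, and then $\|c-\mathbf{A}\|=r$ together with $c-\mathbf{A}=(1-\alpha)(\bar p-\mathbf{A})$ and $\|\bar p-c\|\le r$ yields $2r<\|\bar p-\mathbf{A}\|\le 2r$, a contradiction. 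This buys you uniformity (no case split on the cardinality of the defining set, and the argument is dimension-independent) and it replaces the paper's somewhat informal geometric claims in the three- and four-point cases by a one-line triangle-inequality computation; the paper's version, in exchange, stays closer to the two-dimensional template it generalizes. Two small points you should make explicit: the strict lower bound $r/(1-\alpha)>2r$ needs $r>0$, which follows immediately since $\mathcal{P}$ is non-empty and contained in $\mathcal{N}\setminus C$ while $\mathbf{A}\in C$, so the enclosed set contains two distinct points; and the strictness $z_{p_i}<R-h$ relies on the (natural, and intended) convention that the cap $C$ is closed, i.e.\ contains its base disk.
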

\begin{proof}
	By Prop. \ref{prop:conv_comb}, the center $c$ of $\mathcal{S}$ is a convex combination of at most $4$ points lying on the surface of $\mathcal{S}$.
	In other words, $c$ is the center of the circumscribed circle of either a line, a triangle or a tetrahedron.
	There are multiple cases that can occur regarding those points defining $\mathcal{S}$.

	First, consider the case that there is no subset of points \(\mathcal{P}' \subseteq \mathcal{P} \cup \{\mathbf{A}\}\) on the surface of $\mathcal{S}$ containing $\mathbf{A}$, such that $c$ is a convex combination of points from $\mathcal{P}'$.
	Then $c$ must lie outside of $C$, since it is a convex combination of points outside of $C$.

	Now consider the case that $\mathbf{A}$ is one of the points defining $c$, i.e. it is part of a set of points \(\mathcal{P}'\) such that $c$ is a convex combination of \(\mathcal{P}'\).
	If $\mathbf{A}$ does not lie in $C_1$, then $c$ once again cannot lie in $C_1$ either.
	Now assume that $\mathbf{A} \in C_1$ and consider multiple sub-cases depending on the cardinality of $\mathcal{P}'$:
	The first two cases are completely analogous to the proof of the two-dimensional version of the Lemma shown by Degener et al. \cite{DBLP:conf/spaa/DegenerKLHPW11}.
	\begin{itemize}
		\item Case $|\mathcal{P}'| = 2$: $\mathcal{S}$ is defined by $\mathbf{A}$ and another point $P \in \mathcal{P}$. Since $\mathcal{S}$ is a \ac{ses}, $c$ is the midpoint between $\mathbf{A}$ and $P$.
		Between $\mathbf{A}$ and $P$ lies the spherical segment $C_2$, with the same height as $C_1$, in which $\mathbf{A}$ lies.
		Therefore, the midpoint between them cannot lie in $C_1$.
		\item Case $|\mathcal{P}'| = 3$: $\mathcal{S}$ is the circumscribed circle of a triangle $\Delta \mathbf{A} P_1 P_2$ formed by $\mathbf{A}$ and two other points $P_1$ and $P_2$.
		The center of $\mathcal{S}$ can now be computed by determining the intersection of the perpendicular bisectors of two of the triangle's sides, say $\overline{\mathbf{A}P_1}$ and $\overline{\mathbf{A}P_2}$.
		Since $\mathbf{A}$ lies inside $C_1$ and the other two points lie outside of $C$, the centers of those two sides cannot lie in $C_1$.
		Furthermore, $c$ must lie inside of the triangle as it is a convex combination of its corners, meaning it is acute.
		Therefore, the intersection point of the perpendicular bisectors cannot lie in $C_1$.
		\item Case $|P'| = 4$: $\mathcal{S}$ is the circumscribed sphere of a tetrahedron $T$ formed by the points $\mathbf{A}$, $P_1$, $P_2$ and $P_3$.
		Similarly to the previous case, $c$ lies at the intersection of the perpendicular bisector planes of three edges of $T$.
		Choose those three edges as those adjacent to $\mathbf{A}$.
		Their midpoints once again must lie outside of $C_1$, since the points $P_1$, $P_2$ and $P_3$ lie outside of $C$.
		Since $c$ has to lie inside of $T$, and the three edges are all adjacent to the same point $\mathbf{A}$, their bisector planes can only meet below their midpoints, outside of $C_1$.
	\end{itemize}
	\qed
\end{proof}

\subsubsection{Hindering Robots}
In the previous subsection, several properties concerning the target points of robots have been established.
However, due to the second part of the strategy, robots are not always able to reach their target points.
In particular, if a robot $r_i$ has a neighbor $r_j$ such that $r_i$'s target point lies outside of the \emph{limit sphere} $\mathcal{B}_j(t)$ of $r_j$, then $r_i$ will be stopped by that limit sphere.
If this is the case, we say that $r_j$ \emph{hinders} $r_i$ from reaching its target point.

Next, it will be established that robots always reach their target points if they lie within $C$.

\begin{lemma}
	\label{Lemma:not_hindered_in_C}
	Robots that compute a target point in $C$ cannot be hindered from reaching it by any other robot.
\end{lemma}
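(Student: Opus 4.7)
The plan is to reduce the hindering condition to a simple distance inequality, by first showing that all neighbors of $r_i$ lie close to $c_i(t)$, and then deducing that $c_i(t)$ lies inside the limit sphere $\mathcal{B}_j(t)$ of every neighbor $r_j$.

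First, I would apply the contrapositive of Lemma~\ref{Lemma:dist_point} with $\mathbf{A} := c_i(t)$ (which is in $C$ by hypothesis) and $\mathcal{P} := \mathcal{R}_i(t)$. Since $\mathcal{R}_i(t) \subseteq \mathcal{N}$ and the center of the SES of $\mathcal{R}_i(t)$ equals $c_i(t) \in C$, the lemma forces every point $\mathbf{B} \in \mathcal{R}_i(t)$ to satisfy $d(c_i(t), \mathbf{B}) \leq \tfrac{1}{2}$. In particular, for any neighbor $r_j \in \mathcal{R}_i(t)$, both $p_i(t)$ and $p_j(t)$ lie within distance $\tfrac{1}{2}$ of $c_i(t)$.

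Next, I would show that this forces $c_i(t)$ to lie inside $\mathcal{B}_j(t)$, the ball of radius $\tfrac{1}{2}$ centered at the midpoint $m_j = \tfrac{1}{2}(p_i(t)+p_j(t))$. The cleanest way is the median (parallelogram) identity: for any point $c$ in $\mathbb{R}^3$,
\begin{equation*}
d(c, m_j)^2 \;=\; \tfrac{1}{2}\bigl(d(c, p_i(t))^2 + d(c, p_j(t))^2\bigr) - \tfrac{1}{4}\,d(p_i(t), p_j(t))^2.
\end{equation*}
Plugging in $c = c_i(t)$ and using the two bounds $d(c_i(t), p_i(t)) \leq \tfrac{1}{2}$ and $d(c_i(t), p_j(t)) \leq \tfrac{1}{2}$ from the previous step, together with $d(p_i(t),p_j(t))^2 \geq 0$, yields $d(c_i(t), m_j)^2 \leq \tfrac{1}{4}$, i.e.\ $c_i(t) \in \mathcal{B}_j(t)$.

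Finally, since $r_j$ was an arbitrary neighbor of $r_i$, the target point $c_i(t)$ lies in every limit sphere $\mathcal{B}_j(t)$, so the movement bound $L_i$ does not cut off $r_i$'s journey to $c_i(t)$ and $r_i$ reaches its target. The main (and essentially only) conceptual hurdle is packaging the right application of Lemma~\ref{Lemma:dist_point} to get distance $\tfrac{1}{2}$ from $c_i(t)$ to every neighbor; the rest is an elementary geometric calculation.
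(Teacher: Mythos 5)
Your proof is correct, and it reaches the conclusion by a slightly different route than the paper. The paper argues from the robot's own position: by Corollary~\ref{Cor:C_center_radius} the local \ac{ses} $\mathcal{S}_i$ has radius at most $1/4$, hence $d(p_i(t),p_j(t))\leq 1/2$ for every neighbor $r_j$, hence $d(p_i(t),m_j)\leq 1/4$, so $r_i$ has at least $1/2-1/4=1/4$ of free movement inside every limit ball $\mathcal{B}_j(t)$ -- enough to cover the distance $d(p_i(t),c_i(t))\leq 1/4$ to its target. You instead argue about the target point itself: via the contrapositive of Lemma~\ref{Lemma:dist_point} you get $d(c_i(t),\mathbf{B})\leq 1/2$ for every $\mathbf{B}\in\mathcal{R}_i(t)$, and the median identity then places $c_i(t)$ inside every $\mathcal{B}_j(t)$. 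Both are valid; the paper's version is marginally more economical because Corollary~\ref{Cor:C_center_radius} already gives the stronger bound $d(c_i(t),\mathbf{B})\leq 1/4$ directly (the radius bound), so the detour through Lemma~\ref{Lemma:dist_point} and the factor-$1/2$ estimate is not needed, though it costs you nothing since the median identity still closes the gap. One small step you leave implicit and should state: from $c_i(t)\in\mathcal{B}_j(t)$ alone you still need that the whole segment from $p_i(t)$ to $c_i(t)$ stays in $\mathcal{B}_j(t)$, which follows because $p_i(t)\in\mathcal{B}_j(t)$ trivially (as $d(p_i(t),m_j)=\tfrac{1}{2}d(p_i(t),p_j(t))\leq\tfrac12$) and the ball is convex; only then is $\ell_j\geq d(p_i(t),c_i(t))$ for every $j$, hence $L_i\geq d(p_i(t),c_i(t))$.
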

\begin{proof}
	Let $r_i$ be a robot that computes a target point $c$ within $C$, which is the center of the \ac{ses} $\mathcal{S}_i$.
	Assume there is a robot $r_j$ in the neighborhood of $r_i$ that hinders $r_i$ from reaching $c$.

	By Corollary \ref{Cor:C_center_radius}, we know that $\mathcal{S}_i$ has a radius of at most $1/4$.
	Since $r_j$'s position must lie within $\mathcal{S}_i$, it can have a distance of at most $1/2$ to $r_i$.
	Therefore, the midpoint $m_j$ between $r_i$ and $r_j$'s positions is within a distance of at most $1/4$ from $r_i$.
	This leaves a distance of at least $1/2 - 1/4 = 1/4$ that $r_i$ can move freely without leaving the limit sphere of $r_j$.
	As $\mathcal{S}_i$ has a radius of at most $1/4$, this is enough to reach $c$, meaning $r_j$ cannot have hindered $r_i$.
	\qed
\end{proof}

Similarly, robots cannot be hindered from leaving $C$, as is shown by the following lemma.

\begin{lemma}
	\label{Lemma:not_hindered_leaving}
	Robots cannot be hindered from leaving $C$ by any other robot.
\end{lemma}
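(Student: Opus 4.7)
Plan: I would argue by contradiction, following the blueprint of the proof of \Cref{Lemma:not_hindered_in_C}. Suppose $r_i \in C$ has a target $c_i$ such that the unhindered motion would take it out of $C$, but the actual final position $p := p_i(t) + L_i \cdot v$ (where $v = (c_i - p_i(t))/\|c_i - p_i(t)\|$) still lies in $C$. Then $L_i < \|c_i - p_i(t)\|$, so some neighbor $r_j$ attains $\ell_j = L_i$ and the endpoint satisfies $p \in \partial \mathcal{B}_j$, i.e.\ $\|p - m_j\| = 1/2$, where $m_j$ is the midpoint of $p_i(t)$ and $p_j(t)$.

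First I would bound $d(p_i(t), p_j(t))$ from below. Since both $p_i(t)$ and $p$ lie in $C$, and $C$ has diameter at most $1/4$, we have $L_i = \|p - p_i(t)\| \le 1/4$. Applying the triangle inequality to the triple $\{p, m_j, p_i(t)\}$ gives
\[
\tfrac{1}{2} = \|p - m_j\| \le \|p - p_i(t)\| + \|p_i(t) - m_j\| = L_i + \tfrac{d(p_i(t), p_j(t))}{2},
\]
so $d(p_i(t), p_j(t)) \ge 1 - 2 L_i \ge 1/2$, and strictly greater than $1/2$ whenever $p$ lies in the interior of $C$. In particular, $p_j(t) \notin C$ since $\mathrm{diam}(C) < 1/2$.

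Next I would invoke \Cref{Lemma:dist_point} with $\mathbf{A} = p_i(t) \in C$ and $\mathbf{B} = p_j(t)$ at distance strictly greater than $1/2$ from $\mathbf{A}$, concluding that the \ac{ses}-center $c_i$ cannot lie in $C$. This matches the assumption that the target drives $r_i$ out of $C$, but is not yet a contradiction by itself: the hard part is combining these ingredients to force $p \notin C$. The third step is to translate the hindering condition $\|c_i - m_j\| > 1/2$ into a lower bound on the \ac{ses}-radius via the parallelogram identity
\[
\|c_i - p_i(t)\|^2 + \|c_i - p_j(t)\|^2 = 2\|c_i - m_j\|^2 + \tfrac{d(p_i(t), p_j(t))^2}{2},
\]
together with the \ac{ses} containments $\|c_i - p_i(t)\|, \|c_i - p_j(t)\| \le R_i$, which yield $R_i^2 > (1 + d(p_i(t), p_j(t))^2)/4$. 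The perpendicular-bisector characterization of the \ac{ses}-center then forces the angle between the movement direction $v$ and the direction from $p_i(t)$ to $p_j(t)$ to be acute enough that $\ell_j$ exceeds the distance from $p_i(t)$ to $\partial C$ along $v$ — which contradicts $p \in C$.

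The main obstacle is this last geometric step: cleanly converting the quantitative constraints on $R_i$, $c_i$, and $m_j$ into a contradiction with $p \in C$. The earlier two steps are routine triangle-inequality manipulations and an immediate application of \Cref{Lemma:dist_point}, but the final contradiction requires a careful analysis of how the \ac{ses}-geometry constrains the admissible movement direction when the hindering neighbor is at distance exceeding $1/2$, likely closely mirroring the two-dimensional argument of Degener et al.\ \cite{DBLP:conf/spaa/DegenerKLHPW11} adapted to the cap on a three-dimensional sphere.
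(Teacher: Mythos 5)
Your preparatory steps are correct but do not yet contain the lemma. The stopping-point setup, the bound $L_i \le \mathrm{diam}(C) \le 1/4$ and hence $d(p_i(t),p_j(t)) \ge 1-2L_i \ge 1/2$, and the median (parallelogram) identity giving $R_i^2 > \bigl(1+d(p_i(t),p_j(t))^2\bigr)/4$ are all fine; but, as you yourself note, the application of \Cref{Lemma:dist_point} only re-derives the hypothesis that the target $c_i$ lies outside $C$, and the lower bound $R_i > 1/2$ is by itself perfectly consistent with the situation you want to exclude (a robot in $C$ may have a local smallest enclosing sphere of radius close to $1$). The entire content of the lemma sits in your final step, which is asserted rather than proved: acuteness of the angle between the movement direction $v$ and $p_j(t)-p_i(t)$ (which is what the perpendicular-bisector/tangential considerations give you) yields no quantitative lower bound on $\ell_j$ relative to the distance from $p_i(t)$ to the base of $C$ along $v$. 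The facts you have assembled say nothing about \emph{where} along the ray the sphere $\partial\mathcal{B}_j(t)$ is crossed compared to $\partial C$, so the claimed contradiction with $p \in C$ does not follow. This is a genuine gap, not a routine detail.

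The paper closes exactly this gap with a device absent from your sketch: it works with the exit point $m$ of the segment $\overline{p_i c}$ from $C$ and its reflection $p_i' = 2m - p_i$. The hindering condition $d(m,m_j) > 1/2$ is equivalent to $d(p_j,p_i') > 1$, while $d(p_j,p_i)\le 1$ and $d(p_j,c)\le 1$ (the local smallest enclosing sphere has radius at most $1$). Since, by convexity of the norm, every point within distance $1$ of both $p_i$ and $c$ is within distance $1$ of every point of $\overline{p_i c}$, the point $p_i'$ cannot lie on $\overline{p_i c}$, so $c$ lies between $m$ and $p_i'$. Because $m$ lies on the base of $C$ and $p_i \in C$, this localizes $c$ (and $p_i'$) in the spherical segment $C_3$ of height $h$ directly below $C$, whose diameter is at most $1/2$; \Cref{Lemma:SES_cap_bound} then bounds the radius of the smallest enclosing sphere centered at $c$ by $1/2$, so $d(p_j,p_i') \le d(p_j,c) + d(c,p_i') \le 1$, contradicting $d(p_j,p_i')>1$. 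This reflection-plus-localization argument is the missing idea. Incidentally, once the localization is in place, your parallelogram bound $R_i > 1/2$ clashes directly with the resulting upper bound of $1/2$ on the radius, so your computation could serve as an alternative way to finish -- but it cannot substitute for locating $c$ in $C_3$.
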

\begin{proof}
	Let $r_i$ be a robot in $C$ that computes a target point $c$ outside of $C$ and let $m$ be the point where $r_i$ would leave $C$.
	Note that $d(p_i,m) \leq 1/4$, since the diameter of $C$ is at most $1/4$.

	Now assume that there is a robot $r_j$ hindering $r_i$ from leaving $C$.
	First of all, $r_j$ must be a neighbor of $r_i$, meaning it is within distance $1$ of $r_i$ as well as $c$.
	Now, let $m_j = \frac{p_i + p_j}{2}$ be the midpoint between $r_i$ and $r_j$.
	For $r_j$ to hinder $r_i$ from reaching $m$, $m$ must lie outside of $r_j$'s limit sphere, i.e. $d(m, m_j) > 1/2$.
	Furthermore, let $p_i' = 2m - p_i$ be the reflection of $p_i$ about $m$.
	Note that $m = \frac{p_i' + p_i}{2}$.
	Putting these together yields:
	\begin{align*}
	&d(m,m_j) = ||m - m_j|| &> \frac{1}{2} \\
	\Longleftrightarrow\ & ||\frac{p_i'}{2} + \frac{p_i}{2} - \frac{p_i}{2} - \frac{p_j}{2}|| &> \frac{1}{2} \\
	\Longleftrightarrow\ & ||p_i' - p_j|| &> 1 \\
	\Longleftrightarrow\ & d(p_j, p_i') &> 1
	\end{align*}
	To summarize, we now have the following three constraints on the position of $p_j$:
	\begin{align}
	d(p_j, p_i) &\leq 1 \label{eq:pj_limit1}\\
	d(p_j, c) &\leq 1 \label{eq:pj_limit2}\\
	d(p_j, p_i') &> 1 \label{eq:pj_limit3}
	\end{align}
	From this we can conclude that $c$ must lie in between $p_i$ and $p_i'$ as follows:
	Assume $p_i'$ lies between $c$ and $p_i$ instead.
	By the constraints (\ref{eq:pj_limit1}) and (\ref{eq:pj_limit2}), $p_j$ must lie within the overlap of two balls of radius $1$ centered around $p_i$ and $c$.
	However, it must lie \emph{outside} of the ball of radius $1$ centered around $p_i'$ (\ref{eq:pj_limit3}).
	Since $p_i'$ lies on the line segment between $p_i$ and $c$ by assumption, the entire overlap of the first two balls must be contained within the last ball around $p_i'$, leaving no viable positions for $p_j$.
	Therefore, $p_i'$ cannot lie between $p_i$ and $c$ and thus, $c$ must lie between $p_i$ and $p_i'$ instead.

	\begin{figure}
		\label{fig:C3}
		\centering
		\includegraphics[width = \textwidth]{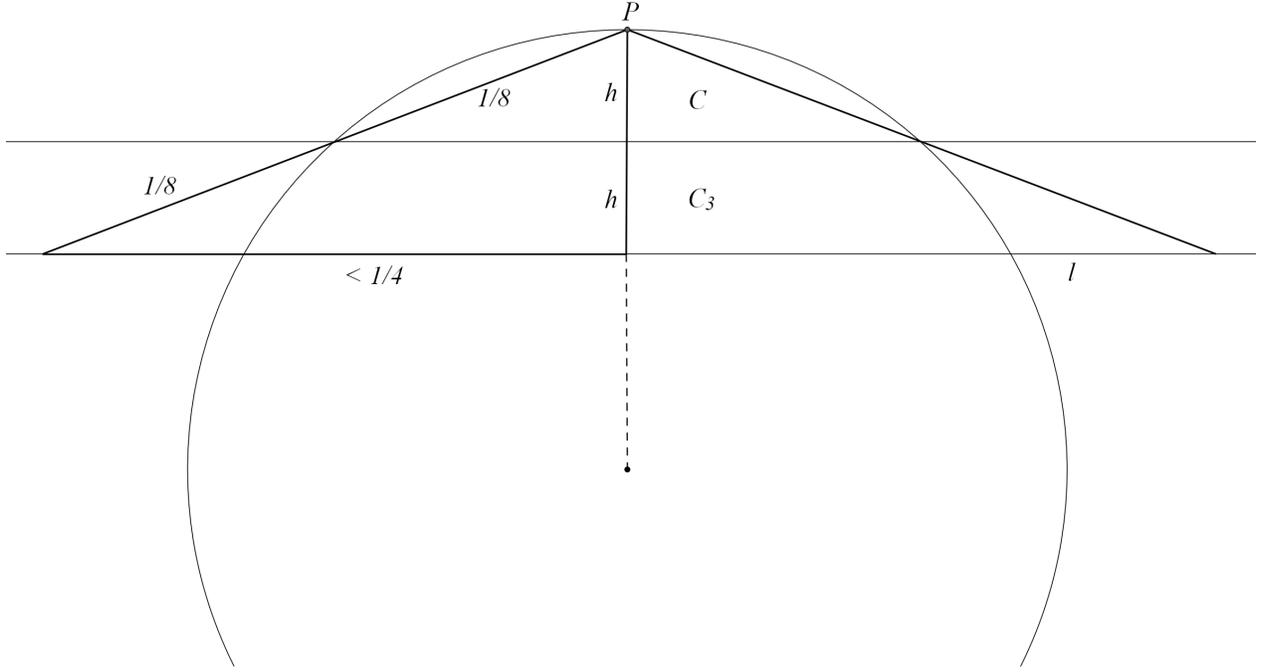}
		\caption{A cross-section of $\mathcal{N}$ illustrating the construction of the spherical segment $C_3$. It lies below and has the same height as $C$ and is defined by the base of $C$ and the plane $l$. Its diameter is at most $1/2$.}
	\end{figure}

	Based on what is now known about the position of $c$, another spherical segment can be defined that has to contain $p_i'$ and $c$:
	Let $C_3$ be the spherical segment of $\mathcal{N}$ below $C$ (see Fig. \ref{fig:C3}).
	It is defined by the base of $C$ and another plane $l$ which is parallel to and in distance $h$ of the base of $C$.
	The diameter of $C_3$ is now the diameter of the intersection of $l$ and $\mathcal{N}$ and can be bounded as follows:
	Once more consider the inscribed cone of $C$.
	Now extend this cone until it intersects with $l$.
	Note that this extended cone now has height $2h$ and slant height $1/4$ (since the slant height of the initial inscribed cone was $1/8$ by construction).
	Therefore, the radius of its base is also at most $1/4$.
	Since the cone intersects $\mathcal{N}$ at the base of $C$, it must intersect $l$ outside of $\mathcal{N}$, meaning the base of $C_3$ is completely contained within the base of the cone.
	Therefore, the diameter of $C_3$ can be at most $1/2$.

	Note that since $p_i'$ was defined as the reflection of $p_i$ about a point on the base of $C$, and $C_3$ has the same height as $C$, $p_i'$ and therefore also $c$ must be contained in $C_3$.
	Thus, we have $d(c, p_i') \leq 1/2$.
	Using Lemma \ref{Lemma:SES_cap_bound} and the diameter of $C_3$, the radius of the \ac{ses} corresponding to the target point $c$ can now be further bounded to be at most $1/2$, i.e. $d(p_j, c) \leq 1/2$.
	Putting these two bounds together and applying the triangle inequality yields:

	\begin{align*}
	d(p_j, p_i') \leq d(p_j,c) + d(c, p_i') \leq \frac{1}{2} + \frac{1}{2} = 1
	\end{align*}

	This is a contradiction to $d(p_j, p_i') > 1$ (\ref{eq:pj_limit3}), meaning $r_j$ cannot have hindered $r_i$ from leaving $C$.
	\qed
\end{proof}

\subsubsection{The Upper Bound}
Now that the necessary preliminaries have been established, putting them together to acquire the main result works completely analogously to the proof for the two-dimensional case shown by Degener et al.\ \cite{DBLP:conf/spaa/DegenerKLHPW11}.

\begin{lemma}
	\label{Lemma:pairwise_diff_robots}
	Let $\mathcal{P}$ be a set of robots in round $t_0$ that are all positioned in or compute a target point in $C$ and that all have pairwise different neighborhoods.
	Then at most one of those robots is in $C$ at the beginning of the next round.
\end{lemma}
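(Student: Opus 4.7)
The plan is to partition the robots of $\mathcal{P}$ by where their targets lie in round $t_0$, apply the two ``not-hindered'' lemmas of the previous subsection to track which of them remain in $C$, and then exploit the pairwise-distinct-neighborhood hypothesis to show that the surviving group has size at most one. Every $r \in \mathcal{P}$ falls into one of two types: (i) robots whose target in round $t_0$ lies \emph{inside} $C$ (possibly while the robot itself lies outside $C$), or (ii) robots that are positioned in $C$ but whose target lies \emph{outside} $C$. These two cases are exhaustive because membership in $\mathcal{P}$ already requires at least one of the two conditions.

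First I would dispose of type (ii): by \Cref{Lemma:not_hindered_leaving}, no such robot can be hindered from crossing the boundary of $C$, so each type-(ii) robot has its position outside of $C$ at the beginning of round $t_0 + 1$. For type (i), \Cref{Lemma:not_hindered_in_C} guarantees that the robot actually reaches its target, which lies in $C$, so each type-(i) robot is in $C$ at the start of the next round. It therefore suffices to show that $\mathcal{P}$ contains at most one robot of type (i).

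Suppose for contradiction that two robots $r_i, r_j \in \mathcal{P}$ both compute targets $c_i, c_j \in C$. \Cref{Cor:C_center_radius} bounds the radius of each local \ac{ses} by $1/4$, so any $r_k \in \mathcal{R}_i(t_0)$ satisfies $d(p_k(t_0), c_i) \leq 1/4$, and because $r_j \in \mathcal{R}_j(t_0)$ also $d(p_j(t_0), c_j) \leq 1/4$. Combining these estimates with the diameter bound $d(c_i, c_j) \leq 1/4$ for $C$ and applying the triangle inequality gives $d(p_j(t_0), p_k(t_0)) \leq 3/4 < 1$, so $r_k$ is visible from $r_j$ and hence $r_k \in \mathcal{R}_j(t_0)$. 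A symmetric argument swapping the roles of $r_i$ and $r_j$ yields $\mathcal{R}_i(t_0) = \mathcal{R}_j(t_0)$, contradicting the hypothesis of pairwise different neighborhoods. Consequently, at most one robot in $\mathcal{P}$ is of type (i), and the lemma follows.

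The main obstacle I expect is precisely this last step: one has to recognize that the radius bound from \Cref{Cor:C_center_radius} and the diameter bound on $C$ together produce a $3/4 < 1$ triangle-inequality estimate, which is exactly what is needed to force every neighbor of $r_i$ to also be a neighbor of $r_j$. The remaining pieces of the argument are a straightforward case analysis driven by the two ``not-hindered'' lemmas.
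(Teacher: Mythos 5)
Your proposal is correct, but it reaches the conclusion by a genuinely different route than the paper. The paper's proof picks a robot $r_i \in \mathcal{P}$ with an inclusion-minimal neighborhood, notes that every other robot of $\mathcal{P}$ then has a neighbor $r_k$ with $d(p_i(t_0), p_k(t_0)) > 1$, and uses \Cref{Lemma:dist_point} (if $r_i$ lies in $C$) respectively the radius bound of \Cref{Cor:C_center_radius} (if $r_i$ lies outside $C$) to rule out that any robot other than $r_i$ computes a target in $C$; \Cref{Lemma:not_hindered_leaving} then flushes the remaining robots out of $C$. You dispense with the minimal-neighborhood device and with \Cref{Lemma:dist_point} entirely: any robot whose target lies in $C$ has a local smallest enclosing sphere of radius at most $1/4$ centered in the diameter-$1/4$ cap $C$ (\Cref{Cor:C_center_radius}), so for two such robots the chain $1/4 + 1/4 + 1/4 = 3/4 < 1$ forces each to see all neighbors of the other, hence equal neighborhoods -- impossible under the pairwise-distinctness hypothesis. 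Your handling of the robots positioned in $C$ with targets outside (via \Cref{Lemma:not_hindered_leaving}) matches the paper. What your route buys is symmetry and a slightly stronger intermediate fact -- \emph{any} two robots computing targets in $C$ must share a neighborhood, which is also the phenomenon the merging case of \Cref{Lemma:progress} exploits -- while the paper's route buys an explicit identification of the unique candidate robot (the one with minimal neighborhood). One shared, minor imprecision: both arguments treat ``not hindered from leaving $C$'' as implying the robot is strictly outside $C$ afterwards, although it could in principle stop exactly on the boundary; since this is inherited from the paper's own reasoning, it is not a gap you introduced.
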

\begin{proof}
	Let $r_i$ be a robot whose set of neighbors $\mathcal{R}_i(t_0)$ is minimal, i.e. no other robot $r_j \in \mathcal{P} \setminus \{r_i\}$ has a neighborhood $\mathcal{R}_j(t_0)$ that is a strict subset of $\mathcal{R}_i(t_0)$.
	Therefore, all other robots $r_j$ have a neighbor $r_k$ that is not seen by $r_i$, i.e. $d(r_i, r_k) > 1$.

	First consider the case that $r_i$ is positioned in $C$:
	In this case Lemma \ref{Lemma:dist_point} can be applied to show that no neighbor $r_j$ of $r_i$ can compute a target point in $C$ by choosing $\mathcal{R}_j(t_0)$ as the point set $\mathcal{P}$, the position of $r_i$ as the point $\mathbf{A}$ and the position of $r_k$ as the point $\mathbf{B}$ in distance greater than $1/2$ of $\mathbf{A}$.
	Note that any robot outside of $C$ that could potentially compute a target point in $C$ must be within distance $1/4$ of $C$ by Corollary \ref{Cor:C_center_radius} and is therefore at most a distance of $1/2$ away from $r_i$ and thus its neighbor.
	It follows that only $r_i$ could possibly remain in $C$.

	On the other hand, if $r_i$ is positioned outside of $C$, $r_i$ computes a target point in $C$ by the definition of $\mathcal{P}$.
	By Corollary \ref{Cor:C_center_radius}, the \ac{ses} of $\mathcal{R}_i(t_0)$ now has a radius of at most $1/4$, meaning $r_i$ is also at most a distance of $1/4$ away from $C$ and also in distance at most $1/2$ from any point in $C$.
	Now again consider a robot $r_j \in \mathcal{P} \setminus \{r_i\}$ with a neighbor $r_k$ that is unseen by $r_i$.
	For this robot $r_k$, we must have $d(r_k, C) > 1/2$, otherwise it would be a neighbor of $r_i$, since it would be in distance $1$ of it by applying the triangle inequality:
	\begin{align*}
	d(r_k, C) + d( r_i, C) \leq 1 < d(r_i, r_k)
	\end{align*}
	From this, it can be concluded that $r_j$'s target point cannot lie in $C$:
	If it was in $C$, the radius of $\mathcal{R}_j(t_0)$'s \ac{ses} would also be at most $1/4$ by Corollary \ref{Cor:C_center_radius}, meaning it could not contain $r_k$.
	This means that $r_j$ is positioned in $C$ by definition of $\mathcal{P}$ and computes a target point outside of $C$.
	By Lemma \ref{Lemma:not_hindered_leaving}, $r_j$ cannot be hindered from leaving $C$.
	Since this holds for all robots except for $r_i$, $r_i$ is once again the only robot that might remain in $C$ in the following round $t_0 + 1$.
	\qed
\end{proof}

Using this lemma, it is now finally possible to show that the progress measures mentioned earlier in this section always improve during a pair of consecutive rounds:

\begin{lemma}
	\label{Lemma:progress}
	If $R_t \geq 1/2$, either there are robots that merge in round $t$ or after rounds $t$ and $t+1$, the cap $C_1$ does not contain any robots.
\end{lemma}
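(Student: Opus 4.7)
The plan is to apply Lemma \ref{Lemma:pairwise_diff_robots} in round $t$ and then analyze round $t+1$ directly to conclude that the cap $C_1$ is empty afterwards.

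I would first look at the set $\mathcal{P}$ of robots that in round $t$ are either positioned in $C$ or compute a target in $C$. If two such robots shared the same neighborhood, they would read identical snapshots, so -- since every quantity in \gtcThreeDShort{} (the limit spheres $\mathcal{B}_j$, each $\ell_j$, and thus $L$) depends only on the snapshot -- they would compute the same target and execute the same move, ending at the same position; this furnishes the merge in round $t$ required by the first alternative of the lemma. Otherwise the robots in $\mathcal{P}$ have pairwise different neighborhoods, and Lemma \ref{Lemma:pairwise_diff_robots} leaves at most one robot, call it $r_i$, inside $C$ at the start of round $t+1$.

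Next, I would analyze round $t+1$ and argue that no robot sits in $C_1$ at its end. For $r_i$ itself (if present), its neighborhood in round $t+1$ contains exactly one point in $C$, namely $p_i$, while every other neighbor lies in $\mathcal{N}\setminus C$. Lemma \ref{Lemma:single_C1_point} then rules out the target $c_i$ lying in $C_1$: if $c_i\in C$, Lemma \ref{Lemma:not_hindered_in_C} guarantees that $r_i$ actually reaches $c_i\in C\setminus C_1$; if $c_i\notin C$, Lemma \ref{Lemma:not_hindered_leaving} guarantees that $r_i$ travels at least to the boundary point where its trajectory exits $C$, which lies in the base plane of $C$ and therefore below $C_1$. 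For an arbitrary robot $r_j$ that lies outside $C$ at the start of round $t+1$, its neighborhood contains at most one robot in $C$ (namely $r_i$, if visible). If $r_i$ is visible, Lemma \ref{Lemma:single_C1_point} again gives $c_j\notin C_1$; otherwise every neighbor position of $r_j$ lies in $\mathcal{N}\setminus C$, and by Proposition \ref{prop:conv_comb} the SES center $c_j$ is a convex combination of those positions. Choosing coordinates so that the base plane of $C$ is $\{z=0\}$ and the apex $P$ sits at $z=h$, the halfspace $\{z\le 0\}$ contains $\mathcal{N}\setminus C$, so $z(c_j)\le 0<h/2$, which again forces $c_j\notin C_1$. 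In either subcase $z(c_j)<h/2$ and also $z(p_j)<0$, so the straight-line trajectory from $p_j$ towards $c_j$ has $z$-coordinate bounded by $\max(z(p_j),z(c_j))<h/2$ throughout, meaning $r_j$ never enters $C_1$ during round $t+1$. Combining both cases yields that $C_1$ is empty at the start of round $t+2$.

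The main obstacle I expect is the treatment of $r_j$: knowing only that $c_j$ lies outside $C_1$ is not by itself sufficient, since a priori the straight-line motion of $r_j$ from outside $C$ to a target in $C\setminus C_1$ could graze $C_1$. The coordinate-based height argument handles this cleanly because a line segment attains its maximum $z$-coordinate at one of its endpoints, but one must separately verify the ``no neighbor in $C$'' subcase with the convex-combination height bound rather than Lemma \ref{Lemma:single_C1_point}, and also observe that $r_i$'s presence inside $C$ at the start of round $t+1$ does not create a second robot in $C$ that would violate the hypothesis of Lemma \ref{Lemma:single_C1_point} for $r_j$.
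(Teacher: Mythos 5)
There is a genuine gap in your handling of robots with shared neighborhoods. You claim that if two robots in $\mathcal{P}$ have identical snapshots, then ``every quantity in \gtcThreeDShort/ \dots depends only on the snapshot'', so they execute the same move and end at the same position, i.e.\ merge. This is false: the limit spheres $\mathcal{B}_j(t)$ are centered at the midpoints between the robot's \emph{own} position and its neighbors' positions, so two robots at distinct positions with the same snapshot have different limit spheres, different values $\ell_j$ and $L_i$, and they move toward the common target $c$ from different starting points. They end at the same position only if \emph{both} actually reach $c$; if $c$ lies outside $C$, nothing prevents them from being stopped (at different points) by their respective limit spheres, and then no merge occurs. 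Lemma~\ref{Lemma:not_hindered_in_C} only guarantees they reach the target when it lies in $C$. The paper therefore splits this case: if the common target is in $C$, both robots reach it and merge; if it is outside $C$, Lemma~\ref{Lemma:not_hindered_leaving} shows the ones positioned in $C$ at least leave $C$. Your dichotomy ``merge, or all of $\mathcal{P}$ has pairwise different neighborhoods'' skips exactly this second subcase, and since these unhandled robots are excluded from the hypothesis of Lemma~\ref{Lemma:pairwise_diff_robots}, your key intermediate claim --- that at most one robot remains in $C$ at the start of round $t+1$ --- is not established; everything in your round-$(t+1)$ analysis (in particular ``at most one robot of $r_j$'s neighborhood lies in $C$'') rests on it.

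Apart from this, your argument follows the paper's structure (Lemma~\ref{Lemma:pairwise_diff_robots} for round $t$, then Lemma~\ref{Lemma:single_C1_point} together with Lemmas~\ref{Lemma:not_hindered_in_C} and~\ref{Lemma:not_hindered_leaving} for round $t+1$). Your explicit height/convexity argument for robots outside $C$ --- that the target's and the starting point's heights above the base plane of $C$ are both below $h/2$, so the whole (possibly truncated) straight-line move stays below $C_1$ --- is sound and in fact spells out a point the paper's proof passes over quickly (``if $r_i$ lies in $C_2$, we are done''). Repairing the first paragraph along the paper's lines makes the proof go through.
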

\begin{proof}
	Consider all robots that are positioned in or compute a target point in $C$.
	We can distinguish two types of robots:
	Firstly, there are robots $r_i$ that have a neighbor $r_j$ with the same neighborhood, i.e. $\mathcal{R}_i(t) = \mathcal{R}_j(t)$.
	This means that $r_i$ and $r_j$ compute the same target point.
	If there is such a pair of robots that compute a common target point in $C$, then by Lemma \ref{Lemma:not_hindered_in_C}, they both reach it and thus merge and fulfill the Lemma to be proven.
	On the other hand, if such a pair does not exist, then all robots sharing neighborhoods with other robots must compute target points outside of $C$ and are not hindered from leaving $C$ (Lemma \ref{Lemma:not_hindered_leaving}) if they were positioned inside of it, meaning none of them can remain within $C$ at the end of the round.

	It remains to consider the set of robots that all have pairwise different neighborhoods and also either lie in or compute a target point in $C$.
	This is the exact situation described by Lemma \ref{Lemma:pairwise_diff_robots}, meaning at most one of these robots can remain within $C$ after round $t$.

	Therefore, in the beginning of round $t + 1$ and if no robots merged in round $t$, at most one robot $r_i$ remains within $C$.
	If it lies in $C_2$, we are done.
	Otherwise, if it lies in $C_1$, only $r_i$ itself and its neighbor could possibly compute a target point in $C_1$.
	However, by Lemma \ref{Lemma:single_C1_point}, this cannot happen and by Lemmas \ref{Lemma:not_hindered_in_C} and \ref{Lemma:not_hindered_leaving}, $r_i$ cannot be hindered from leaving $C_1$ at which point no robots remain in $C_1$.
	\qed
\end{proof}

Using this lemma now yields the main result.

\mainTheorem*
\begin{proof}
	Fix an arbitrary round $t_0 \geq 0$.
	Lemma \ref{Lemma:progress} holds for any spherical cap $C$ with an arbitrary point $P$ on the boundary of the global \ac{ses} $\mathcal{N}_{t_0}$ as its apex.
	Therefore, either at least two robots merge or all robots robots are within distance $h/2$ of the boundary of $\mathcal{N}_{t_0}$ at the beginning of round $t_0 + 2$.
	This means that the radius of $\mathcal{N}_{t_0 + 2}$ is at least $h/2$ smaller than that of $\mathcal{N}_{t_0}$ since it is a \ac{ses}.
	By Lemma \ref{Lemma:height_bound}, we have:
	\begin{align*}
	\frac{h}{2} \geq \frac{1}{128 \pi \cdot R_{t_0}} \geq \frac{1}{128 \pi \cdot R_0} \geq \frac{1}{128 \pi \cdot n}
	\end{align*}
	Consequently, it takes at most $\lceil 2 \cdot 128 \cdot \pi \cdot n^2 \rceil$ rounds without merging robots until the radius is less than $1/2$, at which point all robots can see each other, compute the same target and move towards it in a single round.

	Therefore, there can overall be either at most $n-1$ rounds with merges or $\mathcal{O}(n^2)$ rounds without merges until all robots have gathered.
	\qed
\end{proof}




%
%

\end{document}